\documentclass[a4paper,reqno,twoside]{amsart}
\usepackage{bbm}
\usepackage[left=3.5cm,right=3.5cm,top=3.5cm,bottom=3.5cm,footskip=1.5cm,headsep=1cm,bindingoffset=0.cm,]{geometry}

\usepackage[utf8]{inputenc}
\usepackage[english]{babel}
\usepackage[T1]{fontenc} % To switch to the T1 encoding
\usepackage{lmodern} % To switch to Latin Modern
\rmfamily % To load Latin Modern Roman and enable the following NFSS declarations.
\DeclareFontShape{T1}{lmr}{b}{sc}{<->ssub*cmr/bx/sc}{}
\DeclareFontShape{T1}{lmr}{bx}{sc}{<->ssub*cmr/bx/sc}{}

\numberwithin{equation}{section}
\usepackage{amsmath}
\usepackage{amssymb}
\usepackage{amsthm}
\usepackage{amsfonts}
\usepackage{mathtools}

\usepackage{mathdots}
\usepackage{caption}
\usepackage{subcaption}

\usepackage{dsfont} % for 1 bbmath
\usepackage[format=hang]{caption}
\usepackage[normalem]{ulem}

\usepackage{standalone}
\usepackage{graphicx}
\usepackage{imakeidx}
\makeindex
\usepackage{setspace}
\usepackage{appendix}
\usepackage{pdfpages}
\usepackage{upgreek}
\usepackage{tabulary}
\usepackage{booktabs}
\usepackage{extarrows}
\usepackage{tabularx}
\usepackage{float}
\restylefloat{table}
\usepackage{enumitem}
\usepackage{csquotes}
\usepackage{bm}
\usepackage{orcidlink}
\usepackage{lipsum}

\usepackage{tikz}
\usepackage{tikz-layers}
\usepackage{tikz-cd}
\usetikzlibrary{matrix,positioning,decorations.pathreplacing,calc}
\usetikzlibrary{
    decorations.text,%
    decorations.markings,%
    shadows}
\usetikzlibrary{calc,intersections}
\usepackage{adjustbox}
\usepackage{graphicx}

\usepackage[
    style=numeric,
    sorting=nty,
    maxnames=99,
    maxalphanames=5,
    natbib=true,
    sortcites]{biblatex}

\DeclareNameAlias{default}{family-given}

\AtEveryBibitem{% Clean up the bibtex rather than editing it
    \clearfield{url}
    \clearfield{issn}
    \clearfield{isbn}
    \clearfield{urldate}

    \ifentrytype{book}{
        \clearfield{pages}}{% 
    }
}

\usepackage{xargs}                      % Use more than one optional parameter in a new commands
\usepackage[prependcaption,textsize=tiny,textwidth=3cm]{todonotes}
\newcommandx{\unsure}[2][1=]{\todo[linecolor=red,backgroundcolor=red!25,bordercolor=red,#1]{#2}}
\newcommandx{\change}[2][1=]{\todo[linecolor=blue,backgroundcolor=blue!25,bordercolor=blue,#1]{#2}}
\newcommandx{\info}[2][1=]{\todo[linecolor=OliveGreen,backgroundcolor=OliveGreen!25,bordercolor=OliveGreen,#1]{#2}}
\newcommandx{\improvement}[2][1=]{\todo[linecolor=black,backgroundcolor=black!25,bordercolor=black,#1]{#2}}
\newcommandx{\thiswillnotshow}[2][1=]{\todo[disable,#1]{#2}}

\usepackage[noabbrev,capitalize]{cleveref}
\crefname{proposition}{Proposition}{Propositions}
\crefname{equation}{}{}

\newtheorem{theorem}{Theorem}[section]
\newtheorem{lemma}[theorem]{Lemma}
\newtheorem{proposition}[theorem]{Proposition}
\newtheorem{corollary}[theorem]{Corollary}

\theoremstyle{definition}
\newtheorem{definition}[theorem]{Definition}

\newtheorem{remark}[theorem]{Remark}

\crefname{assumption}{Assumption}{Assumptions}
\crefname{definition}{Definition}{Definitions}
\crefname{corollary}{Corollary}{Corollaries}
\crefname{enumi}{item}{items}

\creflabelformat{subfigure}{#2\textsc{#1}#3}

\usepackage[final]{microtype}

\makeatletter
\newsavebox\myboxA
\newsavebox\myboxB
\newlength\mylenA

\newcommand*\xoverline[2][0.75]{%
  \sbox{\myboxA}{$\m@th#2$}%
  \setbox\myboxB\null% Phantom box
  \ht\myboxB=\ht\myboxA%
  \dp\myboxB=\dp\myboxA%
  \wd\myboxB=#1\wd\myboxA% Scale phantom
  \sbox\myboxB{$\m@th\overline{\copy\myboxB}$}%  Overlined phantom
  \setlength\mylenA{\the\wd\myboxA}%   calc width diff
  \addtolength\mylenA{-\the\wd\myboxB}%
  \ifdim\wd\myboxB<\wd\myboxA%
    \rlap{\hskip 0.5\mylenA\usebox\myboxB}{\usebox\myboxA}%
  \else
    \hskip -0.5\mylenA\rlap{\usebox\myboxA}{\hskip 0.5\mylenA\usebox\myboxB}%
  \fi}
\makeatother

\usepackage{fancyhdr}

\fancypagestyle{plain}{%
    \fancyhead[C]{}
    \fancyfoot[C]{}
    
    }
\fancypagestyle{nosection}{%
    \fancyhead[CE]{}
    \fancyhead[CO]{}
    \fancyfoot[CE,CO]{\thepage}
}

\usepackage{tikz}

\usetikzlibrary{matrix} 
\usetikzlibrary{arrows,arrows.meta,bending} %new code

\usepackage{nicematrix}

\usetikzlibrary{hobby}

\usepackage{calc} 

\newcommand{\lz}{\ell^2(\Z)}

\newcommand{\bseq}[2]{{\bm{#1}^{(#2)}}}

\DeclareMathOperator{\Z}{\mathbb{Z}}

\DeclareMathOperator{\R}{\mathbb{R}}
\DeclareMathOperator{\C}{\mathbb{C}}

\renewcommand{\tilde}{\widetilde}

\renewcommand{\qedsymbol}{$\blacksquare$}

\DeclareMathOperator{\diag}{diag}

\renewcommand{\epsilon}{\varepsilon}

\let\emptyset\varnothing

\renewcommand{\tilde}{\widetilde}

\usepackage{tcolorbox}
\usetikzlibrary{tikzmark,calc}

\newcommand{\mc}[1]{\mathcal{#1}}
\newcommand{\on}[1]{\operatorname{#1}}

\newcommand{\abs}[1]{\left\lvert#1\right\rvert}

\newcommand{\norm}[1]{\left\lVert#1\right\rVert}

\newcommandx{\silvio}[2][1=]{\todo[linecolor=blue,backgroundcolor=blue!25,bordercolor=blue,#1]{Silvio: #2}}

\newcommandx{\bowen}[2][1=]{\todo[linecolor=blue,backgroundcolor=blue!25,bordercolor=blue,#1]{bowen: #2}}

\newcommandx{\alex}[2][1=]{\todo[linecolor=red,backgroundcolor=red!25,bordercolor=red,#1]{Alex: #2}}

\newcommandx{\jiayu}[2][1=]{\todo[linecolor=red,backgroundcolor=red!25,bordercolor=red,#1]{Jiayu: #2}}

\title[Long-range interactions and Anderson localisation]{Long-range interactions and Anderson localisation for one-dimensional high-contrast resonator chain}

\begin{document}
  \author[H. Ammari]{Habib Ammari \,\orcidlink{0000-0001-7278-4877}}
  \address{\parbox{\linewidth}{Habib Ammari\\
   ETH Z\"urich, Department of Mathematics, Rämistrasse 101, 8092 Z\"urich, Switzerland, \href{http://orcid.org/0000-0001-7278-4877}{orcid.org/0000-0001-7278-4877}}.}
   \email{habib.ammari@math.ethz.ch}

\author[S. Barandun]{Silvio Barandun\,\orcidlink{0000-0003-1499-4352}}
  \address{\parbox{\linewidth}{Silvio Barandun\\
 Massachusetts Institute of Technology, Department of Mathematics, United States of America, \href{http://orcid.org/0000-0003-1499-4352}{orcid.org/0000-0003-1499-4352}}.}
 \email{barandun@math.mit.edu}

 \author[J. Qiu]{Jiayu Qiu} 
 \address{\parbox{\linewidth}{Jiayu Qiu\\
 Department of Mathematics, ETH Z\"{u}rich, R\"{a}mistrasse 101, CH-8092 Z\"{u}rich, Switzerland}}
 \email{jiayu.qiu@sam.math.ethz.ch}

  \author[A. Uhlmann]{Alexander Uhlmann\,\orcidlink{0009-0002-0426-6407}}
    \address{\parbox{\linewidth}{Alexander Uhlmann\\
    ETH Z\"urich, Department of Mathematics, Rämistrasse 101, 8092 Z\"urich, Switzerland, \href{http://orcid.org/0009-0002-0426-6407}{orcid.org/0009-0002-0426-6407}}.}
   \email{alexander.uhlmann@sam.math.ethz.ch}

\begin{abstract}
Spectral and transport properties of high-contrast resonator systems can be described in the subwavelength regime in terms of the so-called capacitance operator. In this paper, we consider an infinitely periodic chain of high-contrast resonators in three dimensions. The first result is a precise estimate of the off-diagonal decay rate of the capacitance operator $\mathcal{C}$. Importantly, we demonstrate that the decay rate is long-range and critical: as $|n-m|\to\infty$,
\begin{equation*}
\mathcal{C}(n,m)\sim \frac{1}{|n-m|\log^2|n-m|},
\end{equation*}
which is $\ell^1$ summable but slower than quadratic. This borderline decay of the off-diagonal entries makes the present proof of Anderson localisation with arbitrary disorder, which is observed numerically in this paper, out of reach; we hope that this physical example of classical wave systems with critical long-range interactions provides new insight in the field of Anderson localisation. As the second main result, based on the off-diagonal decay estimate, we prove a strong convergence of the finite capacitance operator, which corresponds to a truncated chain, to the capacitance operator as the size of the truncated chain grows to infinity. Using this strong convergence, we improve the results of [Ammari et al., SIAM J. Math. Anal., 2023 and Bull. London Math Soc., 2025] by presenting a rigorous estimate of the convergence rate of the spectrum.
\end{abstract}

\maketitle

\noindent \textbf{Keywords.} Anderson localisation, high-contrast resonator system, capacitance operator, algebraic borderline decay of off-diagonal entries, critical long-range interactions. \par

\bigskip

\noindent \textbf{AMS Subject classifications.}
 35B34, 35P20, 35P25, 35J05, 35C20, 78A48, 82D03.

\section{Introduction and motivation}
While the study of periodic systems has for a long time dominated the physical literature from quantum mechanics to classical wave systems, mainly thanks to powerful machinery such as the Floquet-Bloch theory, an increasing focus has been dedicated to disordered systems, that is, systems where periodicity is broken since the seminal paper by Anderson \cite{anderson1958Absence}. A common feature for the disordered system is that, by adding suitable randomness, the spectrum becomes a pure point spectrum and only localised modes exist in the system, which is the so-called \textit{Anderson localisation}. Interestingly, one of the most interesting phenomena is Anderson localisation in disordered systems with long-range interactions, where the governing operator (e.g., Hamiltonian) exhibits a power-law off-diagonal decay rate. In contrast to their tight-binding counterparts, for which the off-diagonal entries of the governing operators decay exponentially, the long-range interactions impose great challenge to the mathematical study of Anderson localisation; see, e.g., \cite{billy.josse.ea2008Direct,carmona1982Exponential,figotin.klein1994Localization, shi-adv26, vainberg,longrange1,longrange2,shapiro,shi2021multiscale} for discussions.

Much of the research efforts on this topic focus on one-dimensional systems on the one hand because these are already challenging, but on the other hand also because the long-range interactions that arise in higher dimensions render classical techniques (i.e. transfer matrices) inapplicable. In this paper, we consider an infinite periodic chain of high-contrast resonators embedded in a three-dimensional space. The governing operator for this classical wave system in the subwavelength regime is the so-called \textit{(one-dimensional) capacitance operator}, denoted as $\mathcal{C}$. As the first main result of this paper, we prove that the capacitance operator $\mathcal{C}$ exhibits a critical long-range off-diagonal decay rate: as $|n-m|\to\infty$, it is estimated as
\begin{equation} \label{eq_intro_1}
\mathcal{C}(n,m)\sim \frac{1}{|n-m|\log^2|n-m|},
\end{equation}
which is $\ell^1$ summable but slower than quadratic. This borderline decay of the off-diagonal entries makes the present proof of Anderson localisation with arbitrary disorder out of reach: for one-dimensional systems, the Anderson localisation with arbitrary disorder is only proved for systems with off-diagonal decay rate faster than $1/|n-m|^4$ \cite{Molchanov1999localization_1d_long_range}. Nevertheless, we numerically demonstrate that the Anderson localisation still appears in our system under disorder with arbitrary strength. On the other hand, the right side of \eqref{eq_intro_1} is absolutely summable along the off-diagonal entries: by this result, one can prove that Anderson localisation occurs when the disorder is sufficiently large; see \cite{aizenman.molchanov1993Localization,shi2023almost_periodic_long_range}. We hope that this physical example of classical wave systems with critical long-range interactions provides new insight in the field of Anderson localisation.

In the second part of this paper, based on the off-diagonal decay estimate \eqref{eq_intro_1}, we prove a strong convergence of the finite capacitance operator, which corresponds to a truncated chain, to the capacitance operator as the size of the truncated chain grows to infinity. Using this strong convergence, we prove the convergence of the spectrum, which reproduces the results of \cite{ammari.davies.ea2025Spectral, ammari.davies.ea2023Convergence} with a much simpler proof. Moreover, we obtain a rigorous estimate of the convergence rate of the discrete part of the spectrum, which highly improves the former results.

The paper is structured as follows:

\begin{itemize}
    \item In Sections \ref{sec:setting} and \ref{sec:periodic}, we outline the mathematical set-up of the finite and infinite high-contrast resonator system, along with the definition of (finite and infinite) capacitance operators. The first main result of this paper, Theorem \ref{thm_off_diag_decay}, is presented in Section 3.2. The major step of the proof is to establish an estimate of the Floquet component of the capacitance operator; see Section 3.1 for details.   
    \item In Section \ref{sec:SC}, we establish the strong convergence of the finite capacitance operator corresponding to a truncated chain as the size of truncation grows to infinity. As a consequence, the convergence of the spectrum is proved in Section 4.2.
    \item In Section \ref{sec:anderson}, we numerically illustrate the Anderson localisation in the high-contrast resonator chain. We show that, under random perturbation of arbitrary strength, the spectrum becomes a pure point and only localised modes exist. This confirms physical intuition and provides new insights in the mathematical study of Anderson localisation.
\end{itemize}

\section{Setting and problem formulation for finite systems}\label{sec:setting}

As the study of the capacitance operator starts with finite structures \cite{cbms}, we will first introduce the definition of finite capacitance operators. Then, in the next section, we introduce its infinite counterpart.

We begin with a bounded connected domain $D\Subset Y:=\{x=(x_1,x_\perp)\in \mathbb{R}\times\mathbb{R}^2:\ 0\le x_1\le a\}$ with a boundary of class $C^{2,\eta}$ ($\eta\in(0,1)$), for some positive constant $a>0$. The bounded domain $D$ represents a single resonator embedded in the free space. Next, we consider the translation image of $D$ along the first direction with spacing $a$, i.e.,
\begin{equation*}
    D_{n}:=D+na\bm{e}_1 ,
\end{equation*}
with $\bm{e}_1=(1,0,0)^{\top}$. The structure of our interest is \textit{the finite and infinite chain of resonators aligned along the first direction}. Let us start with the finite array consisting of $N$ resonators, which is defined as
\[
    \mathcal{D}^{(N)} \coloneqq \bigcup_{i=1}^ND_i .
\]

\begin{figure}
    \centering
    \begin{tikzpicture}[scale=0.8]
%resonators
\foreach \x in {-2,0,2} { 
        \draw [thick,fill=blue,opacity=0.1] plot [smooth,samples=400, tension=1] coordinates {({0+\x},{0.5}) ({0.2+\x},{0.4}) ({0.5+\x},{0}) ({0.2+\x},{-0.9}) ({0+\x},{-1}) ({-0.2+\x},{-0.9}) ({-0.5+\x},{0}) ({-0.2+\x},{0.4}) ({0+\x},{0.5})};
    }
\node[scale=1,thick] at (4,0) {$\cdots$};
\node[scale=1,thick] at (-4,0) {$\cdots$};
\draw [thick,fill=blue,opacity=0.1] plot [smooth,samples=400, tension=1] coordinates {({0+6},{0.5}) ({0.2+6},{0.4}) ({0.5+6},{0}) ({0.2+6},{-0.9}) ({0+6},{-1}) ({-0.2+6},{-0.9}) ({-0.5+6},{0}) ({-0.2+6},{0.4}) ({0+6},{0.5})};
\draw [thick,fill=blue,opacity=0.1] plot [smooth,samples=400, tension=1] coordinates {({0-6},{0.5}) ({0.2-6},{0.4}) ({0.5-6},{0}) ({0.2-6},{-0.9}) ({0-6},{-1}) ({-0.2-6},{-0.9}) ({-0.5-6},{0}) ({-0.2-6},{0.4}) ({0-6},{0.5})};
%label of resonators
\node[scale=1,thick] at ({-6},{-0.2}) {$D_1$};
\node[scale=1,thick] at ({-2},{-0.2}) {$D_{n-1}$};
\node[scale=1,thick] at ({0},{-0.2}) {$D_{n}$};
\node[scale=1,thick] at ({2},{-0.2}) {$D_{n+1}$};
\node[scale=1,thick] at ({6},{-0.2}) {$D_{N}$};
%spacing
\draw[dashed] ({0},{-0.5})--({0},{-1.5});
\draw[dashed] ({2},{-0.5})--({2},{-1.5});
\draw[decorate,decoration={brace,mirror}] ({0},{-1.5}) -- ({2},{-1.5});
\node[below,scale=1] at (1,-2) {$a$};
\end{tikzpicture}
    \caption{A finite chain of $N$ resonators.}
    \label{fig_N_resonator}
\end{figure}

The finite array $\mathcal{D}^{(N)}$ is sketched in Figure \ref{fig_N_resonator}. We assume that the material parameters are constant inside and outside of the resonators and denote by $\rho, \kappa, \rho_i, \kappa_i$ the material density and the bulk modulus outside and within the $i$\textsuperscript{th} resonator, respectively. The associated wave speeds are $v \coloneqq \sqrt{\kappa/\rho}$ and $v_i \coloneqq \sqrt{\kappa_i/\rho_i}$ yielding $k = \omega/v$ and $k_i = \omega/v_i$ for the wave numbers outside and inside the resonators. The resonance problem is then determined by the following coupled system of Helmholtz equations:
\begin{equation}\label{eq:hheq}
    \left\{\begin{array}{ll}
\Delta u + k^2 u = 0 & \text{in } \mathbb{R}^3 \setminus \overline{\mathcal{D}^{(N)}}, \\
\Delta u + k_i^2 u = 0 & \text{in } D_i, \\
u|_+ - u|_- = 0 & \text{on } \partial \mathcal{D}^{(N)}, \\
\delta_i \left.\dfrac{\partial u}{\partial \nu}\right|_+ - \left.\dfrac{\partial u}{\partial \nu}\right|_- = 0 & \text{on } \partial D_i\\
\multicolumn{2}{l}{\lim_{|\bm{x}| \to \infty} |\bm{x}| \left( \frac{\partial u}{\partial |\bm{x}|} - ik u \right) = 0.}
\end{array}\right.
\end{equation}
Here, $\delta_i = \rho_i / \rho$ is called the \emph{contrast ratio} of the $i$\textsuperscript{th} resonator. We interchangeably use $\partial/\partial \nu$ and $\partial_\nu$ to denote the normal derivative.

In the following, we shall assume that $\delta_i = \delta$ for all $i=1, \dots, N$. We are interested in the \emph{high-contrast limit} where $\delta\to 0$ while $v_i, v = \mc O(1)$. In this regime, we investigate the following solutions.
\begin{definition}[Subwavelength resonant frequency]
    Given $\delta>0$, $\omega(\delta)\in \C$ is called a \emph{subwavelength resonant frequency} if
    \begin{enumerate}
        \item[(i)] \cref{eq:hheq} admits a resonant solution with $\omega=\omega(\delta)$;
        \item[(ii)] $\omega(\delta)=\mathcal{O}(\sqrt{\delta})$.
    \end{enumerate}
    When it is clear from context, we will simply write $\omega$ for $\omega(\delta)$.
\end{definition}

To define the capacitance operator associated with this $N$-resonator system, we first recall the Green function of the free-space Laplacian equation:
\begin{equation}\label{eq:greensfn}
    G(\bm{x},\bm{y}) = -\frac{1}{4\pi \abs{\bm{x}-\bm{y}}}.
\end{equation}
Then we define the \emph{single-layer potential} on $\partial \mathcal{D}^{(N)}$
\begin{align}
    \mc S_{\mathcal{D}^{(N)}}:L^2(\partial \mathcal{D}^{(N)})&\to H^1(\partial \mathcal{D}^{(N)})\\
    \varphi &\mapsto \int_{\partial \mathcal{D}^{(N)}} G(\bm{x}-\bm{y})\varphi(\bm{y}) \mathrm{d\sigma}_{\bm{y}},
\end{align}
It is known that the operator $S_{\mathcal{D}^{(N)}}$ is invertible (see \cite[Lemma 2.2]{ammari.davies.ea2024Functional}). This allows us to define the capacitance operator as follows:
\begin{definition}[Finite capacitance operator]\label{def:capmat}
    Letting $\mathbbm{1}_{\partial D_j}$ be the indicator function on the $j$\textsuperscript{th} resonator, we define the characteristic solution
    \[
        \phi_j = (S_{\mathcal{D}^{(N)}})^{-1}[\mathbbm{1}_{\partial D_j}], \quad j=1,\dots ,N.
    \]

    The capacitance operator (matrix) $\mathcal{C}^{(N)}\in \R^{N\times N}$ associated with the $N$-resonator system is then given by
    \[
        \mathcal{C}^{(N)}_{ij} \coloneqq -\int_{\partial D_i}\phi_j\mathrm{d\sigma}
    \]
    and the \emph{material parameter matrix} $\mathcal{V}^{(N)}\in \R^{N\times N}$ by 
    \[
        \mathcal{V}^{(N)} \coloneqq \diag\left(\frac{v_1^2}{\abs{D_1}}, \dots, \frac{v_N^2}{\abs{D_N}}\right).
    \]
\end{definition}

We have the following equivalent characterisation of the capacitance matrix.
\begin{lemma}\label{lem:capcharact}
For $i = 1, \dots, N$, let $V_i^{(N)}$ be defined as the solution to the exterior boundary value problem
\[
\begin{cases}
\Delta V_i^{(N)} = 0 & \text{in } \mathbb{R}^3 \setminus \overline{\mathcal{D}^{(N)}}, \\
V_i^{(N)} = \mathbbm{1}_{\partial D_i} & \text{on } \partial \mathcal{D}^{(N)}, \\
V_i^{(N)}(\bm{x}) = \mc O(|\bm{x}|^{-1}) & \text{as } |\bm{x}| \to \infty.
\end{cases}
\]
Then, the capacitance coefficients, defined in \cref{def:capmat}, are given by
\[
\mathcal{C}^{(N)}_{ij} = \int_{\mathbb{R}^3 \setminus \overline{\mathcal{D}^{(N)}}} \nabla V_i^{(N)} \cdot \nabla V_j^{(N)} \, \mathrm{d}x,
\]
or equivalently
\begin{equation} \label{eq_finite_cap_pde_def}
\mathcal{C}^{(N)}_{ij}=-\int_{\partial D_j}\frac{\partial V_i^{(N)}}{\partial \nu}d\sigma
\end{equation}
for $1\leq i,j\leq N$. Here, $\nu$ denotes the outward unit normal on $\partial \mathcal{D}^{(N)}$.
\end{lemma}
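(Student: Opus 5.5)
The plan is to identify $V_i^{(N)}$ with the single-layer potential of the characteristic solution $\phi_i$, and then convert the boundary integral in \cref{def:capmat} into the two claimed forms using jump relations and Green's identity.

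\textbf{Step 1: representation.} Let $\tilde V_i := \mathcal{S}_{\mathcal{D}^{(N)}}[\phi_i]$, extended to all of $\mathbb{R}^3$ by the same integral formula.
\begin{itemize}
    \item It is harmonic off $\partial\mathcal{D}^{(N)}$.
    \item It is continuous across the boundary, with trace $\mathbbm{1}_{\partial D_i}$ by definition of $\phi_i$.
    \item It decays like $\mathcal{O}(|\bm{x}|^{-1})$, since $G$ does.
\end{itemize}
By uniqueness for the exterior Dirichlet problem in three dimensions (maximum principle with decay at infinity), $\tilde V_i = V_i^{(N)}$ in $\mathbb{R}^3\setminus\overline{\mathcal{D}^{(N)}}$.

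Inside each $D_j$, $\tilde V_i$ is harmonic with constant boundary value $\delta_{ij}$. Hence $\tilde V_i\equiv \delta_{ij}$ in $D_j$, and the interior normal derivative $\partial_\nu \tilde V_i|_-$ vanishes on every $\partial D_j$.

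\textbf{Step 2: jump relation.} Since $\Delta G=\delta_0$, the single layer satisfies
\[
\partial_\nu \mathcal{S}[\phi]|_+ - \partial_\nu\mathcal{S}[\phi]|_- = \phi .
\]
Combined with the vanishing interior derivative, this gives $\phi_i = \partial_\nu V_i^{(N)}|_+$ on $\partial\mathcal{D}^{(N)}$. Hence
\[
\mathcal{C}^{(N)}_{ji} = -\int_{\partial D_j}\phi_i \,\mathrm{d}\sigma = -\int_{\partial D_j}\frac{\partial V_i^{(N)}}{\partial\nu}\,\mathrm{d}\sigma .
\]
This is \eqref{eq_finite_cap_pde_def}, up to the ordering of indices, which Step 3 removes by proving symmetry.

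\textbf{Step 3: energy form and symmetry.} Apply Green's first identity to $V_j^{(N)}$ and $V_i^{(N)}$ on the region
\[
B_R\setminus\overline{\mathcal{D}^{(N)}} .
\]
The outward normal of this region on $\partial\mathcal{D}^{(N)}$ is $-\nu$, which gives
\[
\int_{B_R\setminus\overline{\mathcal{D}^{(N)}}}\nabla V_i\cdot\nabla V_j \,\mathrm{d}x
= -\int_{\partial\mathcal{D}^{(N)}} V_j\,\partial_\nu V_i \,\mathrm{d}\sigma + \int_{\partial B_R} V_j\,\partial_r V_i \,\mathrm{d}\sigma .
\]
\begin{itemize}
    \item The sphere term is $\mathcal{O}(R^{-1}R^{-2}R^2)=\mathcal{O}(R^{-1})\to 0$. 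This uses $\nabla V_i = \mathcal{O}(|\bm{x}|^{-2})$, which follows from the single-layer representation.
    \item The surface term reduces to $-\int_{\partial D_j}\partial_\nu V_i\,\mathrm{d}\sigma$, because $V_j$ equals $1$ on $\partial D_j$ and $0$ on the other boundaries.
\end{itemize}
So the Dirichlet integral equals the boundary expression. The Dirichlet integral is symmetric in $i,j$, so all forms agree and $\mathcal{C}^{(N)}_{ij}=\mathcal{C}^{(N)}_{ji}$, which settles the index ordering from Step 2.

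\textbf{Main obstacle.} The delicate point is justifying the jump relation and the trace regularity for $\phi_i\in L^2$, i.e. the sense in which the normal derivatives exist. Because $\partial D$ is $C^{2,\eta}$, elliptic regularity makes $V_i^{(N)}$ smooth up to the boundary. Then $\phi_i$ is continuous and the classical jump formula applies pointwise.
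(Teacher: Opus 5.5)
Your proof is correct. The paper gives no proof of this lemma; it states it as a standard fact from the capacitance literature it cites. Your argument is the standard proof:
\begin{enumerate}
\item identify $V_i^{(N)}$ with $\mathcal{S}_{\mathcal{D}^{(N)}}[\phi_i]$;
\item read off $\phi_i=\partial_\nu V_i^{(N)}|_+$ from the jump relation, since the potential is constant inside each $D_j$;
\item apply Green's first identity on $B_R\setminus\overline{\mathcal{D}^{(N)}}$ to obtain the energy form;
\item use the symmetry of the energy form to resolve the $\mathcal{C}^{(N)}_{ij}$ versus $\mathcal{C}^{(N)}_{ji}$ ordering.
\end{enumerate}
The signs are right for $G=-1/(4\pi|\bm{x}-\bm{y}|)$, and so is your handling of the index ordering.

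One point in your last paragraph needs tightening. Smoothness of $V_i^{(N)}$ up to the boundary does not by itself make $\phi_i$ continuous, unless you already know $\phi_i=\partial_\nu V_i^{(N)}|_+$. As written, that justification is circular. There are two clean fixes.
\begin{itemize}
\item Invoke the jump relations for $L^2$ densities, which hold almost everywhere as nontangential limits.
\item Alternatively, avoid jump relations entirely with Green's representation formula in the exterior domain. The double-layer potential of the locally constant trace $\mathbbm{1}_{\partial D_i}$ vanishes outside $\overline{\mathcal{D}^{(N)}}$. Hence $V_i^{(N)}=\mathcal{S}_{\mathcal{D}^{(N)}}[\partial_\nu V_i^{(N)}|_+]$ there. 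Taking traces and using injectivity of $\mathcal{S}_{\mathcal{D}^{(N)}}$ gives $\phi_i=\partial_\nu V_i^{(N)}|_+$ directly. This function is H\"older continuous by Schauder theory on the $C^{2,\eta}$ boundary.
\end{itemize}
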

The capacitance operator is valuable as their eigenvalues form the leading-order expansion of the subwavelength resonance frequencies; see \cite{cbms} for details.
\begin{theorem}[Capacitance operator approximation]\label{thm:capapprox}
For a system of $N$ resonators $D_1, \dots , D_N$ there exist exactly $N$ subwavelength resonant frequencies $\omega(\delta) > 0$, with the following asymptotics as $\delta\to 0$
\[
    \omega_n = \sqrt{\delta\lambda_n} + \mc O(\delta), \quad n=1, \dots, N,
\]
where $\lambda_1, \dots, \lambda_N$ are the eigenvalues of the generalised capacitance operator $\mathcal{V}^{(N)}\mathcal{C}^{(N)}$. Moreover, the normalised resonant solution $u_n$ of \cref{eq:hheq} associated with $\omega_n$ is given asymptotically by 
\[
    u_n(\bm{x}) = \sum_{i=1}^N \bseq{u_n}{i}V_i^{(N)}(\bm{x}) +\mathcal{O}(\delta), \quad x\in \R^3, n=1, \dots, N,
\]
where $\bm u_n$ is the normalised eigenvector of the generalised capacitance operator $\mathcal{V}^{(N)}\mathcal{C}^{(N)}$ associated with $\lambda_n$.
\end{theorem}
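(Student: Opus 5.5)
This is the classical capacitance approximation, which I would prove by layer potentials and Gohberg--Sigal theory.

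\textbf{Step 1: integral formulation.} Represent the solution of \cref{eq:hheq} as
\[
u=\mathcal S^{k_i}_{\mathcal D^{(N)}}[\psi] \ \text{in } D_i, \qquad u=\mathcal S^{k}_{\mathcal D^{(N)}}[\phi] \ \text{outside } \overline{\mathcal D^{(N)}},
\]
with Helmholtz single-layer potentials. The radiation condition is then built in. The transmission conditions become a $2\times2$ block boundary integral system $\mathcal A(\omega,\delta)[\psi,\phi]^\top=0$. It involves $\mathcal S^{k}$, $\mathcal S^{k_i}$ and the jump relations $\partial_\nu \mathcal S^{k}[\cdot]|_\pm=(\pm\tfrac12 I+\mathcal K^{k,*})[\cdot]$.

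\textbf{Step 2: characteristic values near zero.} Expand in $\omega$: $\mathcal S^{k}=\mathcal S_{\mathcal D^{(N)}}+k\,\mathcal S_1+\mathcal O(k^2)$, and similarly for $\mathcal K^{k,*}$. Using the invertibility of $\mathcal S_{\mathcal D^{(N)}}$ from \cite[Lemma 2.2]{ammari.davies.ea2024Functional}, the limit operator $\mathcal A(0,0)$ is Fredholm of index zero. Its kernel is the $N$-dimensional span of
\[
(\mathcal S_{\mathcal D^{(N)}}^{-1}[\mathbbm 1_{\partial D_j}],\ \mathcal S_{\mathcal D^{(N)}}^{-1}[\mathbbm 1_{\partial D_j}]),\qquad j=1,\dots,N.
\]
This uses $\ker(-\tfrac12 I+\mathcal K^*)=\operatorname{span}\{\phi_j\}$ on each component. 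By the generalised Rouché theorem of Gohberg--Sigal, for small $\delta$ there are exactly $N$ characteristic values, counted with multiplicity, in a small neighbourhood of $0$. This yields the count "exactly $N$".

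\textbf{Step 3: leading-order reduction.} Project onto the kernel. Write $u|_{D_i}\approx c_i$ plus corrections. Integrate the interior equation over $D_i$ and apply the divergence theorem:
\[
-k_i^2 |D_i|\,c_i+\mathcal O(\omega^4) =\int_{\partial D_i}\partial_\nu u|_- =\delta\int_{\partial D_i}\partial_\nu u|_+ .
\]
Outside, $u\approx\sum_j c_j \mathcal S_{\mathcal D^{(N)}}[\phi_j]+\mathcal O(\omega)$. The outward flux then gives $\int_{\partial D_i}\partial_\nu u|_+ = -\sum_j \mathcal C^{(N)}_{ij}c_j+\mathcal O(\omega)$ by \cref{def:capmat} and \cref{lem:capcharact}. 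Since $k_i=\omega/v_i$, combining yields the system
\[
\omega^2 \bm c=\delta\,\mathcal V^{(N)}\mathcal C^{(N)}\bm c+\mathcal O(\delta\omega+\omega^4).
\]
So $\omega^2=\delta\lambda_n+\mathcal O(\delta^{3/2})$, hence $\omega_n=\sqrt{\delta\lambda_n}+\mathcal O(\delta)$. Positivity of $\lambda_n$ follows because $\mathcal C^{(N)}$ is symmetric positive definite, being the Gram matrix of the energy form in \cref{lem:capcharact}, and $\mathcal V^{(N)}$ is positive diagonal. The resonant solution is $u_n=\sum_i \bseq{u_n}{i} V_i^{(N)}+\mathcal O(\delta)$, because $\mathcal S_{\mathcal D^{(N)}}[\phi_j]=V_j^{(N)}$ outside and the interior solution is constant to leading order.

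\textbf{Main obstacle.} The hardest step is making Step 3 rigorous, together with the multiplicity bookkeeping when eigenvalues of $\mathcal V^{(N)}\mathcal C^{(N)}$ are degenerate. I would handle both by a Schur-complement (Lyapunov--Schmidt) reduction of $\mathcal A(\omega,\delta)$ onto the $N$-dimensional kernel. This gives an $N\times N$ analytic matrix $M(\omega,\delta)=\omega^2 I-\delta\mathcal V^{(N)}\mathcal C^{(N)}+\mathcal O(\delta\omega+\omega^3)$, whose zeros are located by Rouché's theorem. The error size $\mathcal O(\delta)$ also needs care: the $k\mathcal S_1$ term contributes a purely imaginary shift of order $\omega^2\sim\delta$. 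That shift is consistent with the stated estimate, but it forces the refined claim to be stated only up to $\mathcal O(\delta)$.
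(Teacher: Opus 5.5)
Your route (Helmholtz layer potentials, Gohberg--Sigal, reduction to the capacitance matrix) is the standard one in \cite{cbms}, which the paper cites instead of proving the theorem. However, the counting in Step 2 is wrong.

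The generalised Rouch\'e theorem preserves the Gohberg--Sigal multiplicity, i.e.\ the sum of the ranks of a canonical system of root functions. It does not preserve $\dim\ker\mathcal{A}(0,0)=N$, and here every root function of $\mathcal{A}(\omega,0)$ at $\omega=0$ has rank exactly two.
\begin{itemize}
\item Rank at least two: at $\delta=0$ the first row vanishes identically if you take $\phi(\omega)=(\mathcal{S}^{k})^{-1}\mathcal{S}^{k_i}[\psi(\omega)]$. The second row $(-\frac12 I+\mathcal{K}^{k_i,*})[\psi]$ has no term linear in $k_i$, because the $\mathcal{O}(k)$ part $-ik/(4\pi)$ of $G^k(\bm x)=-e^{ik|\bm x|}/(4\pi|\bm x|)$ is constant.
\item Rank at most two: the $k^2$-coefficient $\mathcal{K}_2$ satisfies $\int_{\partial D_i}\mathcal{K}_2[\sum_j c_j\phi_j]\,\mathrm{d}\sigma=-\int_{D_i}\mathcal{S}_{\mathcal{D}^{(N)}}[\sum_j c_j\phi_j]\,\mathrm{d}x=-c_i|D_i|$. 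So for $c\neq0$ the $\omega^2$-term cannot be absorbed into the range of $-\frac12 I+\mathcal{K}^*$, which consists of functions with zero mean on every $\partial D_j$.
\end{itemize}
Hence $\omega=0$ has multiplicity $2N$, and for small $\delta$ there are $2N$ characteristic values near $0$. This is exactly what your Step 3 produces: $\omega^2\approx\delta\lambda_n$ has the two roots $\pm\sqrt{\delta\lambda_n}$. As written, Steps 2 and 3 contradict each other.

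The ingredient you are missing for ``exactly $N$'' is the symmetry $\overline{\mathcal{A}(\omega,\delta)\Phi}=\mathcal{A}(-\overline{\omega},\delta)\overline{\Phi}$, which follows from $\overline{G^{k}}=G^{-\overline{k}}$. It makes the characteristic values symmetric about the imaginary axis. You should also obtain the localisation $\pm\sqrt{\delta\lambda_n}+\mathcal{O}(\delta)$ with $\lambda_n>0$ by applying Rouch\'e to $\det M(\omega,\delta)$ on small disks, which is also where degenerate $\lambda_n$ are handled. Together these leave exactly $N$ characteristic values, counted with multiplicity, in the right half-plane. They have imaginary parts of order $\delta$, so ``$\omega(\delta)>0$'' must be read as positive real part.

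Separately, the last claim of Step 3, $u_n=\sum_i\bseq{u_n}{i}V_i^{(N)}+\mathcal{O}(\delta)$, does not follow from your own expansion, which leaves an $\mathcal{O}(\omega)$ error outside the resonators. The term $k\mathcal{S}_1$ is constant in $\bm x$, so it cancels on $\partial\mathcal{D}^{(N)}$ but leaves a multiple of $k(1-\sum_l V_l^{(N)})$ in the exterior. In addition, the $\mathcal{O}(\delta\omega)$ part of $M$ generally moves the kernel vector by $\mathcal{O}(\omega)$. What your argument yields is an $\mathcal{O}(\delta^{1/2})$ error on bounded sets, and this cannot be sharpened. For a single sphere of radius $R$, the exterior mode with $u|_{\partial D}=1$ is $Re^{ik(|\bm x|-R)}/|\bm x|=V_1+ikR(1-V_1)+\mathcal{O}(k^2)$, with $V_1=R/|\bm x|$ and $k\sim\delta^{1/2}$. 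No choice of normalisation removes the middle term.
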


Here we summarize the key properties of the finite capacitance operators:
\begin{lemma}\label{lem:capprop}
    The capacitance operator $\mathcal{C}^{(N)} \in \R^{N\times N}$
    \begin{enumerate}
        \item is symmetric; 
        \item is positive definite;
        \item has negative off-diagonal entries, \emph{i.e.} $\mathcal{C}^{(N)}_{ij}<0$ for $i\neq j$ and
        \item is diagonally dominant, \emph{i.e.} $\mathcal{C}^{(N)}_{ii} > \sum_{j\neq i}\abs{\mathcal{C}^{(N)}_{ij}}$.
    \end{enumerate}
\end{lemma}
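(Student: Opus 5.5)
The plan is to work throughout with the variational characterisation of \cref{lem:capcharact}: $\mathcal{C}^{(N)}_{ij}=\int_{\Omega}\nabla V_i^{(N)}\cdot\nabla V_j^{(N)}\,\mathrm{d}x$, where $\Omega:=\mathbb{R}^3\setminus\overline{\mathcal{D}^{(N)}}$. For a vector $c\in\R^N$ set $V_c:=\sum_i c_iV_i^{(N)}$, the harmonic function in $\Omega$ that decays at infinity and equals $c_i$ on $\partial D_i$.

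\emph{Symmetry and positive definiteness.} Symmetry is immediate from the symmetric bilinear form. For positive definiteness, $c^\top\mathcal{C}^{(N)}c=\int_\Omega|\nabla V_c|^2\ge 0$. If this vanishes, then $V_c$ is constant on the connected set $\Omega$ (each $D_i$ is bounded and connected, so the exterior of a finite union of such bodies is connected in $\R^3$). Since $V_c=\mathcal O(|x|^{-1})$, that constant is $0$, and continuity up to $\partial D_i$ gives $c_i=0$. Hence $c=0$.

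\emph{Sign structure, via the maximum principle.} By the maximum principle in the exterior domain (using decay at infinity), $0\le V_i^{(N)}\le 1$ in $\Omega$. The strong maximum principle gives $0<V_i^{(N)}<1$ in $\Omega$, since $V_i^{(N)}$ is not constant. Consider a point of $\partial D_j$:
\begin{itemize}
\item For $j\neq i$, $V_i^{(N)}$ attains its minimum $0$ there. By Hopf's lemma, which requires the $C^{2,\eta}$ interior sphere condition, $\partial_\nu V_i^{(N)}>0$ on $\partial D_j$, with $\nu$ pointing outward from $D_j$, i.e.\ into $\Omega$. Then \eqref{eq_finite_cap_pde_def} gives $\mathcal{C}^{(N)}_{ij}<0$.
\item On $\partial D_i$ the function attains its maximum $1$, so $\partial_\nu V_i^{(N)}<0$ and $\mathcal{C}^{(N)}_{ii}>0$.
\end{itemize}

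\emph{Diagonal dominance.} Let $W:=\sum_{j}V_j^{(N)}$. This is harmonic, equals $1$ on $\partial\mathcal{D}^{(N)}$, and decays. Then
\[
\sum_j\mathcal{C}^{(N)}_{ij}=-\int_{\partial D_i}\partial_\nu W\,\mathrm d\sigma
\]
by symmetry together with \eqref{eq_finite_cap_pde_def}. The same maximum-principle and Hopf argument applied to $W$ (with $0<W<1$ in $\Omega$ and maximum $1$ on the boundary) gives $\partial_\nu W<0$ on $\partial D_i$. Hence the row sum is strictly positive. Since the off-diagonal entries are negative, this is exactly $\mathcal{C}^{(N)}_{ii}>\sum_{j\ne i}|\mathcal{C}^{(N)}_{ij}|$.

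The main technical points are justifying the exterior maximum principle and Hopf's lemma near the smooth boundary, and checking that the normal orientation in \eqref{eq_finite_cap_pde_def} matches the signs above. Everything else is routine.
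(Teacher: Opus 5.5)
The paper gives no proof of this lemma. It quotes it as a standard property of the finite capacitance matrix from the cited literature, and later invokes the same maximum-principle reasoning ``as in \cite[Lemma 1.13]{cbms}'' in Step~1 of the proof of Proposition~\ref{prop_str_con}. Your route is exactly that standard argument: the energy identity for symmetry and definiteness, the maximum principle plus Hopf for the signs, and $W=\sum_j V_j^{(N)}$ for the row sums. Your orientation bookkeeping is right. With $\nu$ outward from $D_j$ (into $\Omega$), a boundary minimum gives $\partial_\nu V_i^{(N)}>0$ and a boundary maximum gives $\partial_\nu W<0$. Symmetry then turns the $i$-th row sum into $-\int_{\partial D_i}\partial_\nu W\,\mathrm{d}\sigma$.

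One justification is false: ``each $D_i$ is bounded and connected, so the exterior of a finite union of such bodies is connected.'' A bounded connected $C^{2,\eta}$ domain can have an interior cavity, for example a thick spherical shell, and the paper only assumes $D$ bounded and connected. In that case $\Omega$ is disconnected. Your claim $0<V_i^{(N)}<1$ in $\Omega$ then fails, since $V_i^{(N)}\equiv 1$ in a cavity of $D_i$ and $V_i^{(N)}\equiv 0$ in a cavity of $D_j$, $j\neq i$. Likewise, ``$V_c$ is constant on the connected set $\Omega$'' is no longer available.

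The conclusions still hold, but you need to use the slab geometry. Let $\Omega_\infty$ be the unbounded component of $\Omega$.
\begin{enumerate}
\item Take $p\in\overline{D_j}$ maximising $x_3$. The ray $p+t\bm{e}_3$, $t>0$, stays in the open slab containing $D_j$ and misses every resonator. So the component of $\partial D_j$ through $p$ borders $\Omega_\infty$.
\item For positive definiteness, $V_c\equiv 0$ on $\Omega_\infty$ then forces $c_j=V_c(p)=0$.
\item For the signs, apply the strong maximum principle and Hopf on $\Omega_\infty$. This gives the strict signs of $\partial_\nu V_i^{(N)}$ and $\partial_\nu W$ on $\partial D_j\cap\partial\Omega_\infty$, which has positive measure.
\item A bounded component of $\Omega$ cannot meet a plane $\{x_1=ka\}$, because these planes lie in $\Omega_\infty$. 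So it sits in one slab, with boundary inside a single $\partial D_j$. There $V_i^{(N)}$ and $W$ have constant Dirichlet data, hence are constant and contribute zero normal flux.
\end{enumerate}
Alternatively, add the physically intended hypothesis that $\mathbb{R}^3\setminus\overline{D}$ is connected. A similar slab argument then shows $\Omega$ is connected, and your proof goes through verbatim.
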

Notably, while the capacitance operator $\mathcal{C}^{(N)}$ is symmetric, the generalised capacitance operator $\mathcal{V}^{(N)}\mathcal{C}^{(N)}$ is \emph{not}. However, by calculating the square root $(\mathcal{V}^{(N)})^{\frac{1}{2}}$, we may instead study a similar spectral problem for 
\begin{equation}\label{eq:simprob}
        (\mathcal{V}^{(N)})^{\frac{1}{2}}\mathcal{C}^{(N)}(\mathcal{V}^{(N)})^{\frac{1}{2}} \sim \mathcal{V}^{(N)}\mathcal{C}^{(N)}.
\end{equation}

We will investigate the large array limit where the number of resonators $N\to \infty$ when the system is translation invariant in one direction, i.e., it constitutes a chain of resonators.  To understand this limit, for simplicity, instead of considering chains of $N$ resonators, we let the number of resonators be $2N+1$ and identify the capacitance matrix $\mathcal{C}^{(2N+1)}\in \mathbb{R}^{(2N+1)\times (2N+1)}$ with the lattice operator on $\mathbb{Z}$:
\begin{equation} \label{eq_finite_cap_def}
\begin{aligned}
\mathcal{C}^{(2N+1)}:\lz \to \lz,\quad \mathcal{C}^{(2N+1)}:=\iota_N \circ \mathcal{C}^{(2N+1)} \circ \iota^*_N .
\end{aligned} 
\end{equation}
Here, with a little abuse of notation, $\mathcal{C}^{(2N+1)}$ denotes both the finite matrix defined in Lemma \ref{lem:capcharact} and its identification as an operator acting on $\lz$. The operator $\iota_N:\mathbb{R}^{2N+1}\to \lz$ denotes the extension by zeros and $\iota^*_N$ is its adjoint.

We analogously identify the diagonal material parameter matrix $\mathcal{V}^{(2N+1)}\in \mathbb{R}^{(2N+1)\times (2N+1)} $ as a lattice diagonal operator on $\lz$.

\section{Capacitance operator for infinitely periodic systems} \label{sec:periodic}

As mentioned above, we consider the one-dimensional lattice $\Lambda:=\mathbb{Z}(a\bm{e}_1)$. Associated with $\Lambda$ is the infinite resonator chain (see Figure \ref{fig_infinite_resonator})
\begin{equation*}
\mathcal{D}=\bigcup_{i\in\mathbb{Z}}D_i .
\end{equation*}
As in the case of finite arrays discussed in the previous section, the subwavelength resonance problem associated with the infinitely periodic resonator structure $\mathcal{D}$ is also known to be governed by the capacitance operator to the leading order; see \cite{cbms}. However, unlike the capacitance operator for the finite arrays \cite{cbms,ammari.barandun.ea2024Exponentially,ammari.barandun.ea2025Universal} and full lattices (i.e. the crystal case where the resonators are periodically arranged with respect to a $n$-dimensional lattice in $\mathbb{R}^{n}$, which will be referred to as the $n$D-in-$n$D case; see \cite{cbms,ammari.barandun.ea2023Edge}), which are well-studied, the $1$D-in-$3$D capacitance operator for an infinite periodic chain is much less understood. As the main focus of this work, we aim to prove the fundamental properties of the $1$D-in-$3$D capacitance operator, including its well-definedness and its off-diagonal decay rate.

\begin{figure}
    \centering
    \begin{tikzpicture}[scale=0.8]
%resonators
\foreach \x in {-4,-2,0,2,4} { 
        \draw [thick,fill=blue,opacity=0.1] plot [smooth,samples=400, tension=1] coordinates {({0+\x},{0.5}) ({0.2+\x},{0.4}) ({0.5+\x},{0}) ({0.2+\x},{-0.9}) ({0+\x},{-1}) ({-0.2+\x},{-0.9}) ({-0.5+\x},{0}) ({-0.2+\x},{0.4}) ({0+\x},{0.5})};
    }
\node[scale=1,thick] at (6,0) {$\cdots$};
\node[scale=1,thick] at (-6,0) {$\cdots$};
%label of resonators
\node[scale=1,thick] at ({0},{-0.2}) {$D_0$};
\node[scale=1,thick] at ({-2},{-0.2}) {$D_{-1}$};
\node[scale=1,thick] at ({2},{-0.2}) {$D_{1}$};
%spacing
\draw[dashed] ({0},{-0.5})--({0},{-1.5});
\draw[dashed] ({2},{-0.5})--({2},{-1.5});
\draw[decorate,decoration={brace,mirror}] ({0},{-1.5}) -- ({2},{-1.5});
\node[below,scale=1] at (1,-2) {$a$};
\end{tikzpicture}
    \caption{An infinite periodic chain of resonators.}
    \label{fig_infinite_resonator}
\end{figure}

\subsection{Properties of the capacitance operator}

The kernel of the capacitance operator associated with the infinite periodic array is defined via the Floquet transform:
\begin{equation} \label{eq_cap_def}
\mathcal{C}(n,m):=\frac{a}{2\pi}\int_{-\pi/a}^{\pi /a}\widehat{\mathcal{C}}^{\alpha}e^{-i\alpha(n-m)a}d\alpha,
\end{equation}
where the Floquet component $\widehat{\mathcal{C}}^{\alpha}$ is defined as
\begin{equation} \label{eq_cap_momentum_def}
\widehat{\mathcal{C}}^{\alpha}:=-\int_{\partial D}(\mathcal{S}_{D}^{\alpha})^{-1}[\mathbbm{1}_{\partial D}]d\sigma
\end{equation}
with $\mathcal{S}_{D}^{\alpha}:L^2(\partial D)\to H^1(\partial D)$ being the quasi-periodic single-layer potential operator
\begin{equation} \label{eq_SL_momentum_def}
\mathcal{S}_{D}^{\alpha}[\varphi](\bm{x})=-\frac{1}{4\pi}\int_{\partial D}d\sigma_{\bm{y}}\varphi(\bm{y})\Big[\sum_{n\in\mathbb{Z}} \frac{e^{i\alpha na}}{|\bm{x}-\bm{y}-na\bm{e}_1|}\Big]. 
\end{equation}

To rigorously justify that \eqref{eq_cap_def} is well-defined, we establish two results on the regularity of the Floquet component $\widehat{\mathcal{C}}^{\alpha}$.
\begin{theorem} \label{thm_cap_def_momentum_existence}
$\widehat{\mathcal{C}}^{\alpha}$ is smooth for $\alpha\in [-\pi/a,0)\cup(0,\pi/a]$.
\end{theorem}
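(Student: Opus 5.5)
The plan is to show that the map $\alpha\mapsto \mathcal{S}_D^\alpha$ is smooth (indeed real-analytic) in operator norm from $L^2(\partial D)$ to $H^1(\partial D)$ on $[-\pi/a,0)\cup(0,\pi/a]$, and that $\mathcal{S}_D^\alpha$ is invertible there. Smoothness of $\widehat{\mathcal{C}}^{\alpha}$ then follows at once. Indeed, inversion is analytic on the open set of invertible operators, so $\alpha\mapsto(\mathcal{S}_D^\alpha)^{-1}$ is smooth. Composing with the bounded linear functional $\psi\mapsto-\int_{\partial D}\psi\,d\sigma$ applied to the fixed datum $\mathbbm{1}_{\partial D}$ gives the claim.

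\textbf{Step 1: decomposition of the quasi-periodic Green function.} The lattice sum in \eqref{eq_SL_momentum_def} converges only conditionally, since $\sum_n e^{i\alpha na}/|n|$ converges for $\alpha\neq0$ but not absolutely. I therefore split the kernel as $G^\alpha(\bm{x})=G(\bm{x})+R^\alpha(\bm{x})$. Here $G$ is the $n=0$ term, and $R^\alpha$ collects the terms $n\neq0$, which are smooth for $\bm{x}$ in the bounded set $\partial D-\partial D$ because $D\Subset Y$ keeps $\bm{x}-na\bm{e}_1$ away from $0$.

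To handle $R^\alpha$ I would use the Poisson summation / Fourier representation in $x_1$:
\[
\sum_{n}\frac{e^{i\alpha na}}{|\bm{x}-na\bm{e}_1|}=\frac{2}{a}\sum_{m\in\mathbb{Z}} e^{i(\alpha+2\pi m/a)x_1}K_0\bigl(|\alpha+2\pi m/a|\,|x_\perp|\bigr).
\]
For $m\neq0$ the terms are analytic in $\alpha$ near any $\alpha\in[-\pi/a,\pi/a]$ and decay exponentially in $m$ when $|x_\perp|>0$. I would combine the two representations through an Ewald-type splitting, so that every piece is a uniformly convergent series of functions jointly smooth in $(\alpha,\bm{x},\bm{y})$, apart from the Laplace singularity. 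The $m=0$ term $K_0(|\alpha||x_\perp|)\sim-\log|\alpha|-\log|x_\perp|+\dots$ is singular only at $\alpha=0$, which explains why that point is excluded. The result is that $\mathcal{S}_D^\alpha=\mathcal{S}_D+\mathcal{R}^\alpha$, where $\mathcal{R}^\alpha$ has a kernel that is $C^\infty$ in $\alpha\neq0$ with values in smooth kernels. This makes $\mathcal{R}^\alpha$ a smooth family of compact (smoothing) operators $L^2\to H^1$. Getting the $\alpha$-derivatives uniformly on compact subsets of $\alpha\neq0$, including the derivatives of the $K_0$ term, is routine but must be done carefully.

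\textbf{Step 2: invertibility, the main obstacle.} Since $\mathcal{S}_D:L^2(\partial D)\to H^1(\partial D)$ is invertible in three dimensions and $\mathcal{R}^\alpha$ is compact, $\mathcal{S}_D^\alpha$ is Fredholm of index zero, so it suffices to prove injectivity. I would argue as follows.

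Suppose $\mathcal{S}_D^\alpha[\varphi]=0$, and let $u=\mathcal{S}_D^\alpha[\varphi]$ be extended to $\mathbb{R}^3$. Then $u$ is $\alpha$-quasi-periodic and harmonic off $\bigcup_n\partial D_n$, and it vanishes on each $\partial D_n$. Hence $u\equiv0$ inside each $D_n$.

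Outside, I would apply Green's identity on the unit cell $Y$ minus $\overline{D}$, truncated at $|x_\perp|<L$. The lateral contributions at $x_1=0$ and $x_1=a$ cancel by quasi-periodicity. As $L\to\infty$, the boundary term at $|x_\perp|=L$ vanishes, because for $\alpha\neq0$ every Fourier mode decays like $K_0(|\alpha+2\pi m/a||x_\perp|)$, i.e.\ exponentially. This gives $\int|\nabla u|^2=0$, so $u\equiv0$ everywhere, and the jump relation yields $\varphi=\partial_\nu u|_+-\partial_\nu u|_-=0$.

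The delicate point is precisely this decay at infinity. At $\alpha=0$ the logarithmic growth of the $m=0$ mode breaks the argument, which is consistent with the statement excluding $\alpha=0$.
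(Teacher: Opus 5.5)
Your argument is correct and has the same skeleton as the paper's proof. You split off the free-space single layer, which is invertible $L^2\to H^1$, and show that the $n\neq0$ part is a compact perturbation depending smoothly on $\alpha\neq0$. You then apply the Fredholm alternative and prove injectivity by the energy identity on the cell, with lateral fluxes cancelling by quasi-periodicity and the flux through $|x_\perp|=L$ vanishing by transverse decay. The differences are in the two technical lemmas.

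To control the conditionally convergent lattice sum and its $\alpha$-derivatives, the paper subtracts $1/|n|$. The explicit sum $\sum_{n\neq0}e^{i\alpha n}/|n|=-2\log(2\sin\frac{|\alpha|}{2})$ then carries the whole $\alpha=0$ singularity. The rest is written via the Laplace identity $\frac{1}{\sqrt{|z_\perp|^2+(n-z_1)^2}}=\int_0^\infty J_0(|z_\perp|t)e^{z_1t}e^{-nt}\,dt$, so the series in $n$ sums in closed form to $\frac{e^{-t+i\alpha}}{1-e^{-t+i\alpha}}$. Every $\bm{z}$- and $\alpha$-derivative is then an absolutely convergent $t$-integral for $\alpha\neq0$. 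This is elementary, and the same representation is reused for the $\alpha\to0$ bounds on $\mathcal{R}^\alpha$.

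Your Poisson-summation/Ewald route also works, but it is heavier, and you leave the delicate parts only sketched:
\begin{itemize}
\item the set $x_\perp=y_\perp$, where the $K_0$ series diverges;
\item the justification of Poisson summation for the non-integrable $1/|t|$ profile, e.g.\ by Abel summation.
\end{itemize}

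For the decay, the paper proves only $u,\nabla_\perp u=o(|x_\perp|^{-1/2})$, via Bessel asymptotics and an integration by parts. That is just enough, since the flux term is of size $L\,|u|\,|\nabla u|$. Your spectral representation gives a sharper and cleaner exponential bound, roughly $e^{-|\alpha||x_\perp|}$, because $|\alpha+2\pi m/a|\ge|\alpha|>0$ for every $m$. It also makes visible that the $m=0$ mode $-\frac{1}{4\pi}\cdot\frac{2}{a}K_0(|\alpha||x_\perp|)$ has leading part $\frac{1}{2\pi a}\log|\alpha|$, which is exactly the singularity of $\kappa_*(\alpha)$ exploited later in the paper.
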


\begin{theorem} \label{thm_cap_def_momentum_estimate}
There exists $\alpha_0 \in (0,1/a)$ such that for $0<|\alpha|<\alpha_0$, $\widehat{\mathcal{C}}^{\alpha}$ admits the decomposition
\begin{equation} \label{eq_cap_momentum_decompose_1}
\widehat{\mathcal{C}}^{\alpha}=-|\partial D|\kappa_{*}^{-1}(\alpha)+r(\alpha)
\end{equation}
with
\begin{equation*}
\kappa_{*}(\alpha)=\frac{1}{2\pi a}\log\big(2\sin \frac{a|\alpha|}{2}\big),
\end{equation*}
and the remainder $r(\alpha)$ satisfying the following estimate: 
\begin{equation} \label{eq_cap_momentum_remainder_1}
r(\alpha)=\mathcal{O}(\frac{1}{\log^2 |\alpha|}) .
\end{equation}
Moreover, the derivatives of the remainder satisfy
\begin{equation} \label{eq_cap_momentum_remainder_2}
\partial_{\alpha}r(\alpha)=\mathcal{O}(\frac{1}{\alpha\log^3 |\alpha|}),\quad \partial_{\alpha}^2 r(\alpha)=\mathcal{O}(\frac{1}{\alpha^2\log^3 |\alpha|}).
\end{equation}
\end{theorem}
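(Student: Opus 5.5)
The plan is to isolate the singular part of the quasi-periodic Green function as $\alpha\to 0$ and then invert $\mathcal{S}_D^\alpha$ by a rank-one perturbation argument. Write $\bm x-\bm y=(t,\bm\rho)$ with $t\in\R$, $\bm\rho\in\R^2$. Using Poisson summation in the $x_1$ direction, the lattice sum in \eqref{eq_SL_momentum_def} becomes
\[
\sum_{n}\frac{e^{i\alpha na}}{|\bm x-\bm y-na\bm e_1|}=\frac{2}{a}\sum_{m\in\Z}e^{i(\alpha+2\pi m/a)t}K_0\big(|\alpha+2\pi m/a|\,|\bm\rho|\big).
\]
The $m\neq 0$ terms are smooth in $\alpha$ near $0$. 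The $m=0$ term gives $K_0(|\alpha||\bm\rho|)=-\log(|\alpha||\bm\rho|/2)-\gamma+\mathcal O(\alpha^2\rho^2\log)$.

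I would rather work directly with the exact lattice identity. The function $\sum_{n\neq 0} e^{i\alpha na}/(|n|a)=-\frac{2}{a}\log(2\sin\frac{a|\alpha|}{2})$ is exactly the origin of $\kappa_*$. So I will write
\[
\mathcal S_D^\alpha=\kappa_*(\alpha)\,\mathcal K+\mathcal S_D^{0,\mathrm{reg}}+\mathcal R(\alpha),
\]
where $\mathcal K[\varphi]=\int_{\partial D}\varphi\,d\sigma$ times the constant $1$ (a rank-one operator, normalised with $-\frac{1}{4\pi}$ so that the coefficient matches $\frac{1}{2\pi a}\log(2\sin\frac{a|\alpha|}{2})$). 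Here $\mathcal S_D^{0,\mathrm{reg}}$ is the $\alpha$-independent operator with kernel
\[
-\frac{1}{4\pi}\Big[\frac{1}{|\bm x-\bm y|}+\sum_{n\neq0}\Big(\frac{1}{|\bm x-\bm y-na\bm e_1|}-\frac{1}{|n|a}\Big)\Big],
\]
which converges absolutely. The remainder $\mathcal R(\alpha)$ collects $\sum_{n\neq0}(e^{i\alpha na}-1)(\cdots)$. Since the bracketed differences are $\mathcal O(n^{-2})$, Abel/Fourier-series estimates give $\|\mathcal R(\alpha)\|=\mathcal O(|\alpha|)$, up to a log factor. Differentiating termwise with one factor $na$ gives $\|\partial_\alpha\mathcal R\|=\mathcal O(\log|\alpha|)$ at worst. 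For the second derivative, a summation by parts bound gives $\|\partial_\alpha^2\mathcal R\|=\mathcal O(|\alpha|^{-1})$.

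Next, invert. Assume $\mathcal S_D^{0,\mathrm{reg}}$ is Fredholm of index zero (compact perturbation of $\mathcal S_D$) and invertible on $L^2(\partial D)\to H^1(\partial D)$; if not, shift a constant between it and $\kappa_*\mathcal K$. I expect proving this invertibility to be the main obstacle; I would argue via positivity of the associated energy on mean-zero densities, as in \cite[Lemma 2.2]{ammari.davies.ea2024Functional}. Let $A(\alpha)=\mathcal S_D^{0,\mathrm{reg}}+\mathcal R(\alpha)$, which is invertible for small $\alpha$ by Neumann series. By Sherman–Morrison,
\[
(\mathcal S_D^\alpha)^{-1}[\mathbbm 1]=\frac{A^{-1}[\mathbbm 1]}{1+\kappa_*\langle A^{-1}\mathbbm1,1\rangle'},
\]
with the appropriate normalisation. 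Integrating over $\partial D$ gives
\[
\widehat{\mathcal C}^\alpha=-\frac{c(\alpha)}{1+\kappa_* c(\alpha)}, \qquad c(\alpha)=\int_{\partial D}A(\alpha)^{-1}[\mathbbm 1]\,d\sigma.
\]
Expanding in $1/\kappa_*$ gives $-\frac{1}{\kappa_*}(\ldots)+\frac{1}{\kappa_*^2c}+\ldots$. The constant in front of $\kappa_*^{-1}$ is fixed by the normalisation of $\mathcal K$; since the kernel constant involves $|\partial D|$, this yields $-|\partial D|\kappa_*^{-1}$. The next term is $\mathcal O(\kappa_*^{-2})=\mathcal O(\log^{-2}|\alpha|)$, which is \eqref{eq_cap_momentum_remainder_1}.

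For the derivatives, write $r(\alpha)=F(\kappa_*(\alpha),c(\alpha))$ with $F(\kappa,c)=\mathcal O(\kappa^{-2})$ smooth. Then $\partial_\alpha\kappa_*\sim 1/(2\pi a\alpha)$ gives $\partial_\kappa F\,\kappa_*'=\mathcal O(\kappa^{-3}\alpha^{-1})$. The term $\partial_cF\,c'$ is $\mathcal O(\kappa^{-2}\log|\alpha|)$, which is much smaller. The second derivative is analogous, using $\kappa_*''\sim-\alpha^{-2}$ and the bound $c''=\mathcal O(|\alpha|^{-1})$. This yields \eqref{eq_cap_momentum_remainder_2}.
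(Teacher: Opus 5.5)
Your decomposition is the paper's \eqref{eq_cap_def_momentum_estimate_proof_2}: your $\mathcal K$ is $P_\parallel$ and your $\mathcal S_D^{0,\mathrm{reg}}$ is $\mathcal A$. Your bounds on $\mathcal R(\alpha)$ are those of Proposition \ref{prop_remainder_R_alpha}; the paper obtains them from a Laplace-transform representation of the lattice sum rather than by summation by parts.

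The difference is in the inversion. The paper takes a $2\times2$ Schur complement with respect to constants versus mean-zero densities. This needs only that the compression $P_\perp\mathcal A P_\perp$ be invertible (Proposition \ref{prop_A_alpha_invertibility}), together with $|\kappa_*(\alpha)|\to\infty$. You instead treat $\kappa_*\mathcal K$ as a rank-one perturbation of an invertible operator and apply Sherman--Morrison. Both give the same scalar identity $\widehat{\mathcal C}^\alpha=-1/(\kappa_*(\alpha)+b(\alpha))$. In your notation $b=1/c$ (or $1/c-c_0$ after a shift), with $b$ bounded, $b'=\mathcal O(\log|\alpha|)$ and $b''=\mathcal O(|\alpha|^{-1})$. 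Hence $r=b/(\kappa_*(\kappa_*+b))$, and the derivative bounds reduce to the calculus you sketch. This is exactly the content of Proposition \ref{prop_inverse_constant_component}, which the paper leaves to the reader, so your formulation makes that step more transparent.

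Two points need repair.

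First, positivity of the energy on mean-zero densities proves the invertibility of $P_\perp\mathcal A P_\perp$, not of $\mathcal A$. When $\int_{\partial D}\varphi\neq0$, the potential grows like $\frac{1}{2\pi a}\log|x_\perp|$, and the energy identity yields no contradiction. In fact $\mathcal A$ can have a kernel for particular $D$. This is the analogue of the logarithmic-capacity-one degeneracy in 2D: the relevant Robin-type constant is negative for a small sphere and positive for a resonator of large transverse extent. So the shift $\mathcal A+c_0\mathcal K$ is not optional. To show a good $c_0$ exists you need the compressed invertibility: $\mathcal A+c_0\mathcal K$ is invertible iff a scalar Schur complement, affine in $c_0$ with nonzero slope, is nonzero. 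You therefore pass through the paper's argument anyway. The same lemma gives $c(0)\neq0$, which you tacitly use when expanding in $1/(\kappa_* c)$. Indeed, if $\int_{\partial D}A(0)^{-1}[\mathbbm 1]=0$, then $\varphi_0=A(0)^{-1}[\mathbbm 1]\neq 0$ is mean-zero with $P_\perp\mathcal A P_\perp\varphi_0=P_\perp\mathbbm 1=0$, a contradiction.

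Second, your own formula gives $-c/(1+\kappa_*c)=-\kappa_*^{-1}+\kappa_*^{-2}c^{-1}+\cdots$, with leading coefficient exactly $1$. Since $\kappa_*$ does not involve $|\partial D|$, your sentence producing a factor $|\partial D|$ from ``the normalisation'' is unfounded. The paper proves the theorem under $a=|\partial D|=1$, where the two agree. For general $D$ the leading term is $-\kappa_*^{-1}(\alpha)$.
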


Throughout this section, we assume that $a=|\partial D|=1$ (i.e., unit lattice spacing and surface area of the resonator $D$) for simplicity.

\subsubsection{Proof of Theorem \ref{thm_cap_def_momentum_existence}}
To establish Theorem \ref{thm_cap_def_momentum_existence}, it suffices to prove that the single-layer potential operator $\mathcal{S}_{D}^{\alpha}$ is smooth and invertible for $\alpha\neq 0$. To this end, we follow the standard strategy by decomposing $\mathcal{S}_{D}^{\alpha}$ into an invertible part and a compact part, and then applying the Fredholm alternative. More precisely, we write
\begin{equation} \label{eq_cap_def_momentum_existence_proof_1}
\begin{aligned}
\mathcal{S}_{D}^{\alpha}[\varphi](\bm{x})
&=-\frac{1}{4\pi}\int_{\partial D}d\sigma_{\bm{y}}\varphi(\bm{y}) \frac{1}{|\bm{x}-\bm{y}|}  -\frac{1}{4\pi}\int_{\partial D}d\sigma_{\bm{y}}\varphi(\bm{y})\Big[\sum_{n\neq 0} \frac{e^{i\alpha n}}{|\bm{x}-\bm{y}-n\bm{e}_1|}\Big]  \\
&=:\mathcal{S}_{D}^{free}[\varphi](\bm{x})+\mathcal{K}_{D}^{\alpha}[\varphi](\bm{x}) .
\end{aligned}
\end{equation}
We will first prove that the operator $\mathcal{K}_{D}^{\alpha}$ is compact and smooth:
\begin{proposition} \label{prop_cap_def_momentum_existence_proof_1}
Let $K^{\alpha}(\bm{x},\bm{y})=\sum_{n\neq 0} \frac{e^{i\alpha n}}{|\bm{x}-\bm{y}-n\bm{e}_1|}$.  For any $\alpha\in [-\pi,0)\cup(0,\pi]$, $K^{\alpha}\in C^1(\partial D\times \partial D)$. Moreover, the map $\alpha\mapsto K^{\alpha}$ is smooth for $\alpha\in [-\pi,0)\cup(0,\pi]$.
\end{proposition}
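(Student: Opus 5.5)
The series $\sum_{n\neq0} e^{i\alpha n}/|\bm{x}-\bm{y}-n\bm{e}_1|$ converges only conditionally, since the terms decay like $1/|n|$. The plan is therefore to regularise it by summation by parts (Abel summation) in $n$, using that $\alpha\neq 0$ mod $2\pi$. Write $\bm{z}=\bm{x}-\bm{y}$. Because $D\Subset Y$ with $Y$ of width $1$, $\bm{z}$ ranges over a compact set $Z=\partial D-\partial D$ whose first coordinate satisfies $|z_1|<1$. Hence $|\bm{z}-n\bm{e}_1|\ge |n|-|z_1|\ge c|n|$ for all $n\neq0$ and some $c>0$, and $f_n(\bm{z}):=|\bm{z}-n\bm{e}_1|^{-1}$ is real-analytic on a neighbourhood of $Z$ for every $n\neq 0$. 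Since $\bm{x},\bm{y}$ enter only through $\bm z$, it suffices to study $K^\alpha$ as a function of $\bm{z}\in Z$.

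\textbf{Step 1: summation by parts.} Treat $n\ge1$ and $n\le -1$ separately. For $n\ge1$ set $E_n(\alpha)=\sum_{k=1}^{n}e^{i\alpha k}=e^{i\alpha}\frac{e^{i\alpha n}-1}{e^{i\alpha}-1}$. This sum is bounded by $2/|e^{i\alpha}-1|=1/|\sin(\alpha/2)|$, uniformly on compact subsets of $[-\pi,0)\cup(0,\pi]$. Abel summation gives
\begin{equation*}
\sum_{n\ge1} e^{i\alpha n}f_n(\bm z)=\sum_{n\ge1}E_n(\alpha)\bigl(f_n(\bm z)-f_{n+1}(\bm z)\bigr),
\end{equation*}
where the boundary term vanishes because $E_N f_{N+1}\to0$.

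\textbf{Step 2: bounds on the differences.} By the mean value theorem in $n$, treating $f$ as a function of the continuous variable $t=n$, we have $\partial_{\bm z}^\beta(f_n-f_{n+1})=\mathcal O(|n|^{-2-|\beta|})$ uniformly in $\bm z\in Z$ for every multi-index $\beta$. In particular the first derivatives are $\mathcal O(n^{-3})$. So the rearranged series and its $\bm z$-derivatives converge absolutely and uniformly, which gives $K^\alpha\in C^1$ (indeed $C^\infty$) on $\partial D\times\partial D$. The sum over $n\le-1$ is handled identically after $n\mapsto -n$.

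\textbf{Step 3: smoothness in $\alpha$.} This is the main obstacle, because differentiating the original series in $\alpha$ produces terms like $n^k e^{i\alpha n}/|n|$, which are even less convergent. In the rearranged form, however, $\partial_\alpha^k E_n(\alpha)$ is a finite combination of terms $n^j e^{i\alpha n}$, $j\le k$, multiplied by smooth functions of $\alpha$ that are bounded near any $\alpha_*\neq0$. Thus $|\partial_\alpha^kE_n|\lesssim n^k$, and paired with $f_n-f_{n+1}=\mathcal O(n^{-2})$ this still diverges for $k\ge1$.

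The remedy is to iterate the summation by parts. Applying Abel summation $m$ times, with $m$ large compared with $k$, replaces $f_n$ by the $m$-th finite difference $\Delta^m f_n=\mathcal O(n^{-1-m})$ together with the derivatives of the corresponding iterated partial sums. Each application costs a factor $(e^{i\alpha}-1)^{-1}$, which is smooth away from $\alpha=0$. An equivalent and cleaner formulation writes $e^{i\alpha n}=(e^{i\alpha}-1)^{-m}\Delta^m[e^{i\alpha n}]$ with a shifted index and transfers the differences onto $f_n$. The resulting series
\begin{equation*}
\sum_n (e^{i\alpha}-1)^{-m}e^{i\alpha (n+\cdot)}(\Delta^m f)_n,
\end{equation*}
differentiated $k$ times in $\alpha$, has terms bounded by $n^{k}\,n^{-1-m}$. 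Choosing $m\ge k+2$ gives uniform convergence on compact subsets of $[-\pi,0)\cup(0,\pi]$, uniformly in $\bm z$ and together with first $\bm z$-derivatives. Hence $\alpha\mapsto K^\alpha\in C^1(\partial D\times\partial D)$ is $C^k$ for every $k$, that is, smooth.

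The careful bookkeeping of the iterated Abel summation is the only delicate part. Its boundary terms vanish for the same reason as in Step 1, namely that the iterated sums grow at most polynomially while the finite differences decay faster.
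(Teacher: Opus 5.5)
Your argument is correct, but it takes a different route from the paper's.

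\textbf{The paper's route.} The paper first subtracts $1/|n|$ termwise, which splits off the explicit singular part $\sum_{n\neq0}e^{i\alpha n}/|n|=-2\log(2\sin\frac{|\alpha|}{2})$. It then uses the Laplace-transform identity $|\bm z-n\bm e_1|^{-1}=\int_0^\infty J_0(|z_\perp|t)e^{-(n-z_1)t}\,dt$, valid since $n-z_1\geq 1-\gamma>0$, to sum the remaining geometric series in closed form. This gives $K^\alpha_+(\bm z)=\int_0^\infty\bigl(J_0(|z_\perp|t)e^{z_1t}-1\bigr)\frac{e^{-t+i\alpha}}{1-e^{-t+i\alpha}}\,dt$. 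Regularity in $\bm z$ and $\alpha$ is then read off from absolute integrability of this single integral, with all $\alpha$-dependence in the kernel $e^{-t+i\alpha}/(1-e^{-t+i\alpha})$.

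\textbf{Your route.} You stay on the discrete side. You use Abel summation against the bounded partial sums of $e^{i\alpha n}$. For $\alpha$-derivatives you iterate the summation by parts, trading powers of $n$ for finite differences of $t\mapsto|\bm z-t\bm e_1|^{-1}$, whose $m$-th difference is $\mathcal{O}(n^{-1-m})$ by homogeneity.

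\textbf{What each buys.} Your approach needs no Bessel functions or Laplace transforms. It is also more explicit than the paper's, whose $\alpha$-smoothness step is only sketched. It gives smoothness of $\alpha\mapsto K^\alpha$ as a $C^1$-valued (indeed $C^\infty$-valued) map, uniformly on compact subsets of $[-\pi,0)\cup(0,\pi]$. What it cannot give is quantitative control as $\alpha\to0$: each summation by parts costs a factor $(e^{i\alpha}-1)^{-1}$, so your bounds blow up polynomially in $1/|\alpha|$. The paper's integral representation is chosen because it is reused later. It isolates the logarithmic singularity $\kappa_*(\alpha)$ and yields the sharp estimates such as $\|\mathcal{R}^\alpha\|=\mathcal{O}(|\alpha|\log|\alpha|)$ in Proposition~\ref{prop_remainder_R_alpha}. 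It also yields the harmonicity and transverse decay in Proposition~\ref{prop_cap_def_momentum_existence_proof_2} and the $\alpha=0$ analysis in Proposition~\ref{prop_A_alpha_invertibility}.

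\textbf{A minor bookkeeping point.} In your ``cleaner'' formulation with $e^{i\alpha n}=(e^{i\alpha}-1)^{-m}\Delta^m[e^{i\alpha\cdot}]_n$, the boundary terms at the lower end $n=1$ do not vanish. They are finitely many terms of the form $(e^{i\alpha}-1)^{-j}e^{i\alpha l}(\nabla^{j-1}f)_l(\bm z)$, smooth in $\alpha\neq0$ and in $\bm z$, so they are harmless. In the iterated-partial-sum version they do vanish, since $E^{(j)}_0=0$, and the upper boundary terms go to zero as you say.
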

By Proposition \ref{prop_cap_def_momentum_existence_proof_1} and the fact that $\mathcal{S}_{D}^{free}:L^2(\partial D)\to H^1(\partial D)$ is invertible, we know $\mathcal{S}_{D}^{\alpha}$ is Fredholm for $\alpha\neq 0$. Hence, now only it remains to prove that $\mathcal{S}_{D}^{\alpha}$ is injective to conclude the proof of Theorem \ref{thm_cap_def_momentum_existence}. This is deduced from the following proposition.
\begin{proposition} \label{prop_cap_def_momentum_existence_proof_2}
Let $\varphi\in L^2(\partial D)$. For $\bm{x}\in Y$, we define
\begin{equation*}
u(\bm{x}):=-\frac{1}{4\pi}\int_{\partial D}d\sigma_{\bm{y}}\varphi(\bm{y})\Big[\sum_{n\in \mathbb{Z}} \frac{e^{i\alpha n}}{|\bm{x}-\bm{y}-n\bm{e}_1|}\Big] .
\end{equation*}
Then, for $\bm{x}\in Y\backslash \partial D$, $u(\bm{x})$ is harmonic. The function $u$ is quasi-periodic in the sense that
\begin{equation} \label{eq_cap_def_momentum_existence_proof_4}
u(1,x_{\perp})=e^{i\alpha}u(0,x_{\perp}).
\end{equation}
Across the surface, $u$ is continuous, and the normal derivative $\partial_{\nu}u$ admits a nonzero jump:
\begin{equation} \label{eq_cap_def_momentum_existence_proof_2}
u\big|_{\partial D^{-}}=u\big|_{\partial D^{+}}=\mathcal{S}_{D}^{\alpha}[\varphi],\quad \partial_{\nu}u\big|_{\partial D^{-}}-\partial_{\nu}u\big|_{\partial D^{+}}=\varphi .
\end{equation}
Moreover, as $|x_{\perp}|\to \infty$, $u(x_1,x_{\perp})$ admits the following asymptotics:
\begin{equation} \label{eq_cap_def_momentum_existence_proof_3}
u(x_1,x_{\perp})=o\big(|x_{\perp}|^{-\frac{1}{2}}\big),\quad \nabla_{x_{\perp}}u(x_1,x_{\perp})=o\big(|x_{\perp}|^{-\frac{1}{2}}\big).
\end{equation}
\end{proposition}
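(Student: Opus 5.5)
The plan is to obtain the quasi-periodic lattice sum in a form that is manifestly convergent and harmonic off the lattice copies of $\partial D$, using Poisson summation in the $x_1$ variable. The series $\sum_n e^{i\alpha n}|\bm{x}-\bm{y}-n\bm{e}_1|^{-1}$ does not converge absolutely, so I interpret it as a symmetric limit (Abel or Cesàro, or pairing $n$ with $-n$). For $\alpha\neq 0$ this is legitimate: summation by parts uses the bounded partial sums of $e^{i\alpha n}$ together with the monotone decay of $|\bm{z}-n\bm{e}_1|^{-1}$ and of its derivatives, which is exactly the mechanism behind Proposition \ref{prop_cap_def_momentum_existence_proof_1}. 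Write $\bm{z}=\bm{x}-\bm{y}=(z_1,z_\perp)$. Poisson summation in $n$ gives
\begin{equation*}
\sum_{n\in\mathbb{Z}}\frac{e^{i\alpha n}}{|\bm{z}-n\bm{e}_1|}=\sum_{m\in\mathbb{Z}} e^{i(\alpha+2\pi m)z_1}\,2K_0\big(|\alpha+2\pi m|\,|z_\perp|\big),
\end{equation*}
because the one-dimensional Fourier transform of $(t^2+r^2)^{-1/2}$ is $2K_0(|\xi| r)$. For $\alpha\in[-\pi,0)\cup(0,\pi]$ every frequency $|\alpha+2\pi m|\geq|\alpha|>0$, so the right-hand side converges absolutely and locally uniformly for $z_\perp\neq 0$, together with all its derivatives, and it decays exponentially.

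\emph{Harmonicity and quasi-periodicity.} Each term $1/|\bm{x}-\bm{y}-n\bm{e}_1|$ is harmonic in $\bm{x}$ away from $\partial D+n\bm{e}_1$. Harmonicity of the sum therefore follows from local uniform convergence of the summation-by-parts series with its derivatives on compact sets avoiding $\bigcup_n(\partial D+n\bm{e}_1)$; by the earlier assumption $D\Subset Y$, these sets include $Y\setminus\partial D$. Quasi-periodicity \eqref{eq_cap_def_momentum_existence_proof_4} comes from shifting the index: replacing $x_1$ by $x_1+1$ turns $n$ into $n-1$ and produces the factor $e^{i\alpha}$. This holds for all $\bm{x}$, and in particular on the faces $x_1=0,1$.

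\emph{Jump relations.} I split the kernel as in \eqref{eq_cap_def_momentum_existence_proof_1}, so that $u=\mathcal{S}^{free}_D[\varphi]-\frac{1}{4\pi}\int_{\partial D}K^\alpha\varphi$. By Proposition \ref{prop_cap_def_momentum_existence_proof_1}, $K^\alpha$ is $C^1$ in a neighbourhood of $\partial D\times\partial D$; more precisely, it is smooth for $\bm{x}$ near $\partial D$, since the copies $n\neq0$ stay a positive distance away. Hence the second term and its gradient are continuous across $\partial D$. The classical jump relations for the free single-layer potential with the kernel $-1/(4\pi|\cdot|)$ then give \eqref{eq_cap_def_momentum_existence_proof_2}: continuity of the trace, equal to $\mathcal{S}^\alpha_D[\varphi]$, and $\partial_\nu u|_- - \partial_\nu u|_+=\varphi$. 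This sign comes from $\partial_\nu\mathcal{S}|_\pm=(\pm\frac12 I+\mathcal{K}^*)$.

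\emph{Asymptotics.} I insert the Poisson representation into $u$. Since $|z_\perp|=|x_\perp|+\mathcal{O}(1)$ uniformly for $\bm{y}\in\partial D$, and $K_0(s)\sim\sqrt{\pi/(2s)}\,e^{-s}$ with $K_0'=-K_1$ of the same form, each mode is bounded by $C_m|x_\perp|^{-1/2}e^{-|\alpha+2\pi m||x_\perp|}$. The sum over $m$ is dominated by $e^{-|\alpha||x_\perp|}\sum_m e^{-(|\alpha+2\pi m|-|\alpha|)R}$, which is uniformly bounded for $|x_\perp|\geq R$. This gives exponential decay, which is much stronger than the $o(|x_\perp|^{-1/2})$ claimed for both $u$ and $\nabla_{x_\perp}u$.

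The main obstacle is making the Poisson summation rigorous for a conditionally convergent series with a non-integrable ($1/|t|$) summand. I will handle it by inserting a regulariser $e^{-\epsilon|n|}$, applying Poisson summation to the regularised $L^1$ function, and passing to the limit $\epsilon\to0$. The limit on the left is controlled by Abel's theorem together with summation by parts. On the right, the Fourier transform of the regularised function converges to $2K_0$ away from $\xi=0$, and $\xi=0$ is avoided precisely because $\alpha\notin2\pi\mathbb{Z}$.
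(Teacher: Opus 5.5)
Your proof is correct, and for the decay estimate it takes a genuinely different route from the paper.

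\textbf{The paper's route.} The paper never passes to Bloch modes in $x_1$. It uses the Laplace representation $|\bm z-n\bm e_1|^{-1}=\int_0^\infty J_0(|z_\perp|t)e^{-(n-z_1)t}\,dt$ for $n\geq 1$ (and the mirror formula for $n\leq -1$) and sums the geometric series in $n$. This gives $u=u^{free}+u^{rem,\alpha}_{+}+u^{rem,\alpha}_{-}$, with $u^{rem,\alpha}_\pm$ given by $t$-integrals of $J_0(|x_\perp-y_\perp|t)e^{\pm(x_1-y_1)t}$ against $e^{-t\pm i\alpha}/(1-e^{-t\pm i\alpha})$. Harmonicity then follows by differentiating under the $t$-integral, where you instead apply a uniform Dirichlet-test argument to the raw lattice sum. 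The jump relations follow from smoothness of $u^{rem,\alpha}_\pm$ near $\partial D$, exactly as in your proposal. The transverse asymptotics are obtained by splitting the $t$-integral at $|x_\perp|^{-p}$, inserting the large-argument expansion of $J_0$ and integrating by parts. This only gives $\mathcal{O}(|x_\perp|^{-p}+|x_\perp|^{-\frac{3}{2}(1-p)})$, hence $o(|x_\perp|^{-1/2})$ for $p\in(\frac12,\frac23)$, which is just enough for the Gauss--Green argument.

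\textbf{What your route buys.} Your representation $\sum_m e^{i(\alpha+2\pi m)z_1}\,2K_0(|\alpha+2\pi m||z_\perp|)$ yields the far stronger bound $\mathcal{O}(|x_\perp|^{-1/2}e^{-|\alpha||x_\perp|})$ almost immediately. It also explains the mechanism: the decay rate is the smallest Bloch frequency $|\alpha|$, and the $m=0$ mode carries the $\frac{1}{2\pi}\log|\alpha|$ divergence of $\kappa_*(\alpha)$ as $\alpha\to 0$.

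\textbf{What it costs.} The representation is singular on the axis $z_\perp=0$ and needs Poisson summation for a conditionally convergent series. You rightly use it only for large $|x_\perp|$. Your Abel regularisation works, provided you add a domination, summable in $m$, of the regularised transforms at $2\pi m$ when letting $\epsilon\to 0$. A slicker justification: for fixed $z_\perp\neq 0$, the function $e^{-i\alpha z_1}\sum_n e^{i\alpha n}|\bm z-n\bm e_1|^{-1}$ is continuous and $1$-periodic in $z_1$, since the symmetric partial sums converge uniformly. Unfolding its Fourier coefficients gives $2K_0(|\alpha+2\pi m||z_\perp|)$, and these are absolutely summable. The paper's Laplace representation, by contrast, converges absolutely up to $\partial D$. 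It is reused in Appendix~\ref{sec:app_B} for the $\alpha\to 0$ bounds on $\mathcal{R}^\alpha$ and for Proposition~\ref{prop_A_alpha_invertibility}, so it does double duty.

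\textbf{A sign correction.} With $+$ denoting the exterior trace, the relation $\partial_\nu\mathcal{S}|_\pm=\pm\frac12 I+\mathcal{K}^*$ that you cite gives $\partial_\nu u|_+-\partial_\nu u|_-=\varphi$ (check $\varphi\equiv 1$ on a sphere), not the sign you wrote. The sign in the statement corresponds to the opposite labelling of the two sides. The paper itself uses the other sign in the proof of Proposition~\ref{prop_A_alpha_invertibility}, and the discrepancy is harmless for the injectivity argument.
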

We now apply Proposition \ref{prop_cap_def_momentum_existence_proof_2} to prove the injectivity of $\mathcal{S}_{D}^{\alpha}$. Suppose that $\mathcal{S}_{D}^{\alpha}[\varphi]=0$. Then, applying the Gauss-Green formula for $u(\bm{x})$ for $\bm{x}\in Y\backslash \overline{D}$ and $\bm{x}\in D$, respectively, yields
\begin{equation*}
\int_{Y\backslash \overline{D}}|\nabla u|^2=\int_{D}|\nabla u|^2=0.
\end{equation*}
This, with the continuity of Dirichlet data shown in \eqref{eq_cap_def_momentum_existence_proof_2}, implies $u(\bm{x})\equiv 0$ for $\bm{x}\in Y$. Then, using the jump formula \eqref{eq_cap_def_momentum_existence_proof_2}, we conclude that $\varphi=\partial_{\nu}u\big|_{\partial D^{-}}-\partial_{\nu}u\big|_{\partial D^{+}}=0$.

The proofs of Propositions \ref{prop_cap_def_momentum_existence_proof_1} and \ref{prop_cap_def_momentum_existence_proof_2} are given in \Cref{sec:app_A}.

\subsubsection{Proof of Theorem \ref{thm_cap_def_momentum_estimate}}

Before showing the proof, we first illustrate the irregularity of $\widehat{\mathcal{C}}^{\alpha}$ near $\alpha=0$. In fact, this follows from the definition of the single-layer potential \eqref{eq_SL_momentum_def}, which is clearly ill-defined for $\alpha=0$ due to the divergence of the harmonic series. Looking more closely at \eqref{eq_SL_momentum_def}, \textit{we observe that the divergence occurs only for constant functions in $L^2(\partial D)$}. In fact, for $\varphi \in L^2(\partial D)$ with $\int_{\partial D}\varphi =0$, we see that
\begin{equation} \label{eq_cap_def_momentum_estimate_proof_1}
\begin{aligned}
\mathcal{S}_{D}^{\alpha}[\varphi](\bm{x})&=-\frac{1}{4\pi}\int_{\partial D}d\sigma_{\bm{y}}\varphi(\bm{y})\Big[\sum_{n\in\mathbb{Z}} \frac{e^{i\alpha n}}{|\bm{x}-\bm{y}-n\bm{e}_1|}\Big]  \\
&=-\frac{1}{4\pi}\int_{\partial D}d\sigma_{\bm{y}}\varphi(\bm{y})\Big[\sum_{n\in\mathbb{Z}} \big(\frac{1}{|\bm{x}-\bm{y}-n\bm{e}_1|}-\frac{1}{|n\bm{e}_1|}\big)e^{i\alpha n}\Big],
\end{aligned}
\end{equation}
where no divergence appears as $\alpha\to 0$.

In light of this observation, we decompose $\mathcal{S}_{D}^{\alpha}$ into the two orthogonal components of $L^2(\partial D)$, i.e., the constant-valued and mean-zero functions, and apply the strategy based on the Schur complement to study the inverse $(\mathcal{S}_{D}^{\alpha})^{-1}$. To start, we introduce the projectors
\begin{equation*}
P_{\parallel }u:=(\int_{\partial D}u)\mathbbm{1}_{\partial D},\quad P_{\perp }=\mathbbm{1}-P_{\parallel }
\end{equation*}
and write
\begin{equation} \label{eq_schur_decomposition}
\mathcal{S}_{D}^{\alpha}
=\begin{pmatrix}
P_{\parallel }\mathcal{S}_{D}^{\alpha}P_{\parallel } & P_{\parallel }\mathcal{S}_{D}^{\alpha}P_{\perp } \\
P_{\perp }\mathcal{S}_{D}^{\alpha}P_{\parallel } & P_{\perp }\mathcal{S}_{D}^{\alpha}P_{\perp }
\end{pmatrix}
=:
\begin{pmatrix}
\mathcal{S}_{D}^{\alpha,11} & \mathcal{S}_{D}^{\alpha,12} \\
\mathcal{S}_{D}^{\alpha,21} & \mathcal{S}_{D}^{\alpha,22} 
\end{pmatrix}.
\end{equation}
The asymptotic behaviour of each block is illustrated as follows. In light of \eqref{eq_cap_def_momentum_estimate_proof_1}, we decompose $\mathcal{S}_{D}^{\alpha}$ as follows:
\begin{equation} \label{eq_cap_def_momentum_estimate_proof_2}
\begin{aligned}
\mathcal{S}_{D}^{\alpha}[\varphi](\bm{x})&=-\frac{1}{4\pi}\int_{\partial D}d\sigma_{\bm{y}}\varphi(\bm{y})\sum_{n\neq 0}\frac{e^{i\alpha n}}{|n\bm{e}_1|} \\
&\quad -\frac{1}{4\pi}\int_{\partial D}d\sigma_{\bm{y}}\varphi(\bm{y})\Big[\frac{1}{|\bm{x}-\bm{y}|}+\sum_{n\neq 0} \big(\frac{1}{|\bm{x}-\bm{y}-n\bm{e}_1|}-\frac{1}{|n\bm{e}_1|}\big)\Big]  \\
&\quad -\frac{1}{4\pi}\int_{\partial D}d\sigma_{\bm{y}}\varphi(\bm{y})\Big[\sum_{n\neq 0} \big(\frac{1}{|\bm{x}-\bm{y}-n\bm{e}_1|}-\frac{1}{|n\bm{e}_1|}\big)(e^{i\alpha n}-1)\Big]  \\
&=:\kappa_{*}(\alpha)P_{\parallel} +\mathcal{A}+\mathcal{R}^{\alpha}
\end{aligned}
\end{equation}
with
\begin{equation*}
\kappa_{*}(\alpha)=-\frac{1}{4\pi}\sum_{n\neq 0}\frac{e^{i\alpha n}}{|n\bm{e}_1|}= \frac{1}{2\pi}\log\big(2\sin \frac{|\alpha|}{2}\big) .
\end{equation*}
We will prove that \textit{the remainder $\mathcal{R}^{\alpha}$ is uniformly bounded as $\alpha\to 0$} (in contrast to the blowing-up term $\kappa_{*}(\alpha)$).
\begin{proposition} \label{prop_remainder_R_alpha}
The operator-valued map $\alpha\mapsto \mathcal{R}^{\alpha}\in \mathcal{B}(L^2(\partial D),H^1(\partial D))$ is smooth for $\alpha\in [-\pi,0)\cup(0,\pi]$. 
Here, $\mathcal{B}(X,Y)$, where $X$ and $Y$ are two Hilbert spaces, denotes the space of linear operators from $X$ into $Y$.
Moreover, as $\alpha\to 0$, it satisfies the following estimates.
\begin{equation} \label{eq_remainder_R_alpha_1}
\|\mathcal{R}^{\alpha}\|_{\mathcal{B}(L^2(\partial D),H^1(\partial D))}=\mathcal{O}(|\alpha| \log |\alpha|) ,
\end{equation}
\begin{equation} \label{eq_remainder_R_alpha_2}
\|\partial_{\alpha}\mathcal{R}^{\alpha}\|_{\mathcal{B}(L^2(\partial D),H^1(\partial D))}=\mathcal{O}(\log |\alpha|) ,
\end{equation}
\begin{equation} \label{eq_remainder_R_alpha_3}
\|\partial_{\alpha}^2 \mathcal{R}^{\alpha}\|_{\mathcal{B}(L^2(\partial D),H^1(\partial D))}=\mathcal{O}(|\alpha|^{-1}) .
\end{equation}
\end{proposition}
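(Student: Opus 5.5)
The plan is to work directly with the kernel of $\mathcal{R}^{\alpha}$,
\begin{equation*}
R^{\alpha}(\bm{x},\bm{y})=-\frac{1}{4\pi}\sum_{n\neq 0} f_n(\bm{x}-\bm{y})\,(e^{i\alpha n}-1),\qquad f_n(\bm{z}):=\frac{1}{|\bm{z}-n\bm{e}_1|}-\frac{1}{|n|},
\end{equation*}
and to bound it, together with its $\alpha$-derivatives, in $C^1(\partial D\times\partial D)$. A $C^1$ kernel on the compact surface $\partial D$ defines a bounded operator $L^2(\partial D)\to H^1(\partial D)$, with norm controlled by the $C^1$ norm of the kernel. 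So all three estimates reduce to pointwise bounds on the kernel and its first $\bm{x}$-gradient.

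\textbf{Bounds on $f_n$.} Since $D\Subset Y$ is bounded, $\bm{z}=\bm{x}-\bm{y}$ ranges over a bounded set. A Taylor expansion in $\bm{z}/n$ then gives, uniformly, for $|n|$ large,
\begin{equation*}
f_n(\bm{z})=\frac{\operatorname{sgn}(n)\,z_1}{n^2}+\mathcal{O}(|n|^{-3}),\qquad \nabla_{\bm{z}} f_n(\bm{z})=\mathcal{O}(|n|^{-2}).
\end{equation*}
For the finitely many small $n\neq 0$, $|\bm{z}-n\bm{e}_1|$ stays bounded away from zero, because $D_n$ and $D$ are disjoint compact sets. Hence those terms are smooth, and the claimed smoothness for $\alpha\neq 0$ also follows.

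\textbf{Smoothness for $\alpha\neq0$.} Write $\mathcal{R}^{\alpha}=\mathcal{K}_D^{\alpha}-(\text{finite-part constant})+\kappa_*(\alpha)P_\parallel$-type terms from \eqref{eq_cap_def_momentum_estimate_proof_2}. Smoothness then follows from Proposition~\ref{prop_cap_def_momentum_existence_proof_1} and from the explicit smoothness of $\kappa_*$ away from $0$. Alternatively, one can argue directly: each $\partial_\alpha^k$ brings down a factor $(in)^k$, and the resulting series are handled by summation by parts, as in the proof of that proposition.

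\textbf{Estimates as $\alpha\to0$.} Using $|e^{i\alpha n}-1|\le\min(|\alpha n|,2)$ and $|f_n|,|\nabla f_n|\lesssim n^{-2}$, split the sum at $|n|\sim|\alpha|^{-1}$:
\begin{equation*}
\sum_{0<|n|\le 1/|\alpha|}\frac{|\alpha||n|}{n^2}+\sum_{|n|>1/|\alpha|}\frac{2}{n^2}\lesssim|\alpha|\log\frac{1}{|\alpha|}+|\alpha|.
\end{equation*}
This gives \eqref{eq_remainder_R_alpha_1} for the kernel and its gradient.

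For the first derivative, $\partial_\alpha R^\alpha$ has coefficients $in e^{i\alpha n} f_n$. These are only $\mathcal{O}(1/|n|)$, so the series is not absolutely convergent, and this is the main obstacle. I would split $f_n=\operatorname{sgn}(n) z_1 n^{-2}+g_n$ with $g_n,\nabla g_n=\mathcal{O}(|n|^{-3})$.

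The $g_n$ part gives $\sum |n||g_n|<\infty$, a bounded contribution. The leading part gives
\begin{equation*}
i z_1\sum_{n\neq0}\frac{e^{i\alpha n}}{|n|}=-4\pi i\,z_1\,\kappa_*(\alpha),
\end{equation*}
explicitly, which is $\mathcal{O}(\log|\alpha|)$, and its gradient in $\bm{x}$ obeys the same bound. Actually the expansion should be carried out for $\nabla f_n$ too: $\nabla f_n=\operatorname{sgn}(n)\bm{e}_1n^{-2}+\mathcal{O}(|n|^{-3})$, which yields \eqref{eq_remainder_R_alpha_2}.

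For the second derivative, the coefficients are $-n^2e^{i\alpha n}f_n$. The same split gives three pieces:
\begin{enumerate}
\item the $\mathcal{O}(|n|^{-3})$ remainder, now weighted by $n^2$, which produces $\sum e^{i\alpha n}/|n|$-type sums and hence $\mathcal{O}(\log|\alpha|)$, after expanding one more Taylor order ($f_n=\operatorname{sgn}(n)z_1n^{-2}+c(\bm{z})n^{-3}+\mathcal{O}(n^{-4})$);
\item the leading piece, $-z_1\sum_{n\neq 0}\operatorname{sgn}(n)e^{i\alpha n}=-z_1\cdot 2i\sum_{n\ge1}\sin(\alpha n)$, interpreted as the derivative of the convergent sum, i.e. $-4\pi i z_1\kappa_*'(\alpha)=\mathcal{O}(|\alpha|^{-1})$;
\item the $\mathcal{O}(n^{-4})$ tail, which is summable against $n^2$.
\end{enumerate}
This gives \eqref{eq_remainder_R_alpha_3}.

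The care needed is in justifying the termwise differentiation. I would handle it by differentiating the closed-form expressions (the explicit $\kappa_*$ pieces) and using absolute convergence for the remainders.
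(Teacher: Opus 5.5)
Your argument is correct, but it takes a different route from the paper's.

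The paper never expands $f_n$. Using $|z_1|\le\gamma<1$, it writes $f_n(\bm z)=\int_0^\infty\big(J_0(|z_\perp|t)e^{z_1t}-1\big)e^{-nt}\,dt$ for $n\ge1$ and sums the geometric series in $n$. This gives the kernel of $\mathcal{R}^\alpha_+$ as $\int_0^\infty H(\bm z;t)\,\frac{e^{-t}}{1-e^{-t}}\,\frac{e^{i\alpha}-1}{1-e^{-t+i\alpha}}\,dt$. All $\alpha$-dependence then sits in one elementary factor. For $\alpha\neq0$, every $\alpha$-derivative can be taken under an absolutely convergent integral, so smoothness is immediate. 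The three bounds reduce to $\int_0^1(t+|\alpha|)^{-k}\,dt$, $k=1,2$, using only crude pointwise bounds on $H$ and $\nabla_{\bm z}H$.

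You stay on the lattice side instead. Your multipole expansion peels off the non-absolutely-summable pieces. Its third-order term is actually $\tfrac12(3z_1^2-|\bm z|^2)|n|^{-3}$, even in $n$, but this does not affect your bound. You then handle those pieces through the explicit series $\sum_{n\neq0}e^{i\alpha n}/|n|=-4\pi\kappa_*(\alpha)$ and its antiderivative $2i\sum_{n\ge1}\sin(n\alpha)/n^2$. Termwise differentiation is justified by Dirichlet-test uniform convergence on compacts away from $\alpha=0$, and by absolute convergence of the tails.

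This costs you an extra Taylor order per derivative. Smoothness for $\alpha\neq0$ also has to be imported from Proposition~\ref{prop_cap_def_momentum_existence_proof_1}, through $\mathcal{R}^\alpha=\mathcal{S}_D^{free}+\mathcal{K}_D^\alpha-\kappa_*(\alpha)P_\parallel-\mathcal{A}$. Your summation-by-parts alternative does not work for $k\ge2$: for $k=2$ the termwise differentiated series has terms behaving like $-\operatorname{sgn}(n)z_1e^{i\alpha n}$, which do not tend to zero.

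In exchange, your route identifies the singular structure exactly. You get $\partial_\alpha\mathcal{R}^\alpha=i\kappa_*(\alpha)\mathcal{M}+\mathcal{O}(1)$, where $\mathcal{M}$ is the rank-two operator with kernel $x_1-y_1$. So \eqref{eq_remainder_R_alpha_2} is sharp, and the logarithmic blow-up is carried by a finite-rank dipole term. The paper's absolute-value bounds do not reveal this.
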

The proof of Proposition \ref{prop_remainder_R_alpha} is given in \Cref{sec:app_B}.

Let us first look at the second diagonal block $\mathcal{S}_{D}^{\alpha,22}$. By the decomposition \eqref{eq_cap_def_momentum_estimate_proof_2}, it is clear that
\begin{equation} \label{eq_cap_def_momentum_estimate_proof_3}
\mathcal{S}_{D}^{\alpha,22}=P_{\perp}(\kappa_{*}(\alpha)P_{\parallel}+\mathcal{A}+\mathcal{R}^{\alpha})P_{\perp}=P_{\perp}(\mathcal{A}+\mathcal{R}^{\alpha})P_{\perp},
\end{equation}
which indicates that the blowing-up term $\kappa_{*}(\alpha)$ is subtracted from the block $\mathcal{S}_{D}^{\alpha,22}$. Hence, Proposition \ref{prop_remainder_R_alpha} ensures that $\mathcal{S}_{D}^{\alpha,22}$ is continuous for $\alpha\in [-\pi,\pi]$. Moreover, it is invertible, as guaranteed by the following claim (proved in \Cref{sec:app_B}) and the estimate in Proposition \ref{prop_remainder_R_alpha}.
\begin{proposition} \label{prop_A_alpha_invertibility}
The operator $P_{\perp }\mathcal{A}P_{\perp }: P_{\perp }L^2(\partial D)\to P_{\perp }H^1(\partial D)$ is invertible.
\end{proposition}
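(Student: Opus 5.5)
We will show that $P_{\perp}\mathcal{A}P_{\perp}$ is Fredholm of index zero and injective on $P_{\perp}L^2(\partial D)$; invertibility then follows from the Fredholm alternative.

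\textbf{Fredholm property.} Write
\[
\mathcal{A}=\mathcal{S}_{D}^{free}+\mathcal{K}^{0},
\]
where $\mathcal{K}^{0}$ has kernel
\[
-\frac{1}{4\pi}\sum_{n\neq 0}\Big(\frac{1}{|\bm{x}-\bm{y}-n\bm{e}_1|}-\frac{1}{|n|}\Big).
\]
For $\bm{x},\bm{y}$ in the compact set $\partial D$, each summand is smooth, and after symmetrising $n$ with $-n$ the paired summands are $\mathcal{O}(n^{-3})$. The same holds for the $(\bm{x},\bm{y})$-derivatives. So the kernel is $C^1(\partial D\times\partial D)$, and $\mathcal{K}^0:L^2\to H^1$ is compact, by the argument of Proposition \ref{prop_cap_def_momentum_existence_proof_1}.

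Since $\mathcal{S}_D^{free}:L^2(\partial D)\to H^1(\partial D)$ is invertible, $P_\perp\mathcal{S}_D^{free}P_\perp$ is Fredholm of index zero. Indeed, it differs from the invertible $\mathcal{S}_D^{free}$ by the finite-rank operators $P_\parallel\mathcal{S}_D^{free}$ and $\mathcal{S}_D^{free}P_\parallel$ once both sides are viewed on the full spaces, and restricting to the complemented subspaces preserves the index. Adding the compact term $P_\perp\mathcal{K}^0P_\perp$ does not change the index. So it suffices to prove injectivity.

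\textbf{Injectivity.} Take $\varphi\in P_\perp L^2(\partial D)$, i.e. $\int_{\partial D}\varphi=0$, with $P_\perp\mathcal{A}\varphi=0$. Then $\mathcal{A}\varphi=c$ on $\partial D$ for some constant $c$. Define the periodic ($\alpha=0$) potential
\[
u(\bm{x})=-\frac{1}{4\pi}\int_{\partial D}\varphi(\bm{y})\Big[\sum_{n}\frac{1}{|\bm{x}-\bm{y}-n\bm{e}_1|}-\sum_{n\neq0}\frac{1}{|n|}\Big]d\sigma_{\bm{y}}.
\]
Because $\int\varphi=0$, this is a convergent lattice sum, as in \eqref{eq_cap_def_momentum_estimate_proof_1}, and it coincides with the $\alpha\to0$ limit of the function in Proposition \ref{prop_cap_def_momentum_existence_proof_2}. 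The same proof gives the following properties of $u$:
\begin{enumerate}
\item[(i)] $u$ is harmonic off $\partial D$ and periodic in $x_1$;
\item[(ii)] $u=\mathcal{A}\varphi=c$ on $\partial D$, with jump $\partial_\nu u|_- -\partial_\nu u|_+=\varphi$;
\item[(iii)] $u$ decays as $|x_\perp|\to\infty$.
\end{enumerate}

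The decay in (iii) is the main obstacle, and it is where $\int\varphi=0$ is essential. The zero Fourier mode in $x_1$ of the periodic potential is the 2D logarithmic potential $\frac{1}{2\pi}\log|x_\perp-y_\perp|$ integrated against $\varphi$. Since $\varphi$ has total mass zero, this decays like $|x_\perp|^{-1}$, with gradient $\mathcal{O}(|x_\perp|^{-2})$. The nonzero Fourier modes decay exponentially, by Bessel $K_0$ asymptotics. The subtracted constant $\sum_{n\neq0}1/|n|$, although formally divergent, is multiplied by $\int\varphi=0$, so it drops out.

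\textbf{Energy argument.} Apply Green's formula in $D$ and in $Y\setminus\overline D$, truncated at $|x_\perp|=R$. The lateral faces $x_1=0$ and $x_1=1$ cancel by periodicity, and the cylinder boundary terms vanish as $R\to\infty$ by (iii). Using $u=c$ on $\partial D$ and the jump relation in (ii),
\[
\int_{Y}|\nabla u|^2=c\int_{\partial D}\varphi=0.
\]
Hence $u$ is constant in $D$ and in $Y\setminus\overline D$. The jump relation then gives $\varphi=\partial_\nu u|_--\partial_\nu u|_+=0$, which proves injectivity.
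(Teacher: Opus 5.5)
Your proof is correct and follows the paper's route. You get the Fredholm property from $\mathcal{A}=\mathcal{S}_D^{free}+{}$(smooth-kernel compact part), then prove injectivity via Green's identity on the periodic cell for the $\alpha=0$ potential, using $\int_{\partial D}\varphi=0$ both for decay in $|x_\perp|$ and to kill $c\int_{\partial D}\varphi$. The only differences are minor: you justify index zero explicitly, where the paper leaves it implicit, and you obtain the decay from the $x_1$-Fourier decomposition (a 2D logarithmic zero mode plus $K_0$-decaying nonzero modes). The paper instead replaces $1/|n|$ by $1/|\bm{x}-\bm{y}_0-n\bm{e}_1|$ and applies the mean-value bound $\lesssim\sum_{n}(|x_\perp|^2+n^2)^{-1}=\mathcal{O}(|x_\perp|^{-1})$.
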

Now, we consider the first diagonal block $\mathcal{S}_{D}^{\alpha,11}$. Again, by the decomposition \eqref{eq_cap_def_momentum_estimate_proof_2} and the estimate in Proposition \ref{prop_remainder_R_alpha}, one sees
\begin{equation} \label{eq_cap_def_momentum_estimate_proof_4}
\mathcal{S}_{D}^{\alpha,11}=P_{\parallel}(\kappa_{*}(\alpha)P_{\parallel}+\mathcal{A}+\mathcal{R}^{\alpha})P_{\parallel}=\kappa_{*}(\alpha)P_{\parallel}\big(\mathbbm{1}+\mathcal{O}(\frac{1}{\log |\alpha|})\big)P_{\parallel},
\end{equation}
which implies that $\mathcal{S}_{D}^{\alpha,11}$ is also invertible. Thus, by the method of the Schur complement, we know that $\mathcal{S}_{D}^{\alpha}$ is invertible with the following expression if $\mathcal{V}^{\alpha}:=\mathcal{S}_{D}^{\alpha,11}-\mathcal{S}_{D}^{\alpha,12}(\mathcal{S}_{D}^{\alpha,22})^{-1}\mathcal{S}_{D}^{\alpha,21}$ is invertible:
\begin{equation*}
(\mathcal{S}_{D}^{\alpha})^{-1}=
\begin{pmatrix}
(\mathcal{V}^{\alpha})^{-1} & -(\mathcal{V}^{\alpha})^{-1}\mathcal{S}_{D}^{\alpha,12}(\mathcal{S}_{D}^{\alpha,22})^{-1} \\
-(\mathcal{S}_{D}^{\alpha,22})^{-1}\mathcal{S}_{D}^{\alpha,21}(\mathcal{V}^{\alpha})^{-1} & (\mathcal{S}_{D}^{\alpha,22})^{-1}+(\mathcal{S}_{D}^{\alpha,22})^{-1} \mathcal{S}_{D}^{\alpha,21} (\mathcal{V}^{\alpha})^{-1} \mathcal{S}_{D}^{\alpha,12} (\mathcal{S}_{D}^{\alpha,22})^{-1}
\end{pmatrix}.
\end{equation*}
Indeed, $\mathcal{V}^{\alpha}$ is invertible for small $\alpha$. In fact, by \eqref{eq_cap_def_momentum_estimate_proof_2} and Proposition \ref{prop_remainder_R_alpha}, we know that
\begin{equation*}
\mathcal{S}_{D}^{\alpha,12/21}=P_{\parallel/\perp}(\mathcal{A}+\mathcal{R}^{\alpha})P_{\perp/\parallel}=\mathcal{O}(1).
\end{equation*}
Hence, for small $\alpha$, it follows that
\begin{equation} \label{eq_cap_def_momentum_estimate_proof_5}
\mathcal{S}_{D}^{\alpha,12}(\mathcal{S}_{D}^{\alpha,22})^{-1}\mathcal{S}_{D}^{\alpha,21}  (\mathcal{S}_{D}^{\alpha,11})^{-1}=\mathcal{O}(\|(\mathcal{S}_{D}^{\alpha,11})^{-1}\|)=\mathcal{O}(\frac{1}{\log |\alpha|}),
\end{equation}
and the inverse $(\mathcal{V}^{\alpha})^{-1}$ is given by
\begin{equation} \label{eq_cap_def_momentum_estimate_proof_6}
(\mathcal{V}^{\alpha})^{-1}=(\mathcal{S}_{D}^{\alpha,11})^{-1}\big(\mathbbm{1}-\mathcal{S}_{D}^{\alpha,12}(\mathcal{S}_{D}^{\alpha,22})^{-1}\mathcal{S}_{D}^{\alpha,21}  (\mathcal{S}_{D}^{\alpha,11})^{-1} \big)^{-1}.
\end{equation}
Furthermore, as shown in the next proposition, using \eqref{eq_cap_def_momentum_estimate_proof_3}-\eqref{eq_cap_def_momentum_estimate_proof_6}, Proposition \ref{prop_remainder_R_alpha}, and the resolvent identity, a detailed estimate of $(\mathcal{V}^{\alpha})^{-1}$ can be obtained. The proof is left to the interested reader.
\begin{proposition} \label{prop_inverse_constant_component}
There exists $\alpha_0 \in (0,1)$ such that for $0<|\alpha|<\alpha_0$, it holds that
\begin{equation}
(\mathcal{V}^{\alpha})^{-1}=\kappa_{*}^{-1}(\alpha)P_{\parallel}+h^{\alpha},
\end{equation}
where the remainder $h^{\alpha}$ admits the following estimates as $\alpha\to 0$:
\begin{equation} \label{eq_inverse_constant_component_1}
\|h^{\alpha}\|_{\mathcal{B}(H^1(\partial D),L^2(\partial D))}=\mathcal{O}(\frac{1}{\log^2 |\alpha|}),
\end{equation}
\begin{equation} \label{eq_inverse_constant_component_2}
\|\partial_{\alpha}h^{\alpha}\|_{\mathcal{B}(H^1(\partial D),L^2(\partial D))}=\mathcal{O}(\frac{1}{|\alpha|\log^3 |\alpha|}),
\end{equation}
\begin{equation} \label{eq_inverse_constant_component_3}
\|\partial_{\alpha}^2 h^{\alpha}\|_{\mathcal{B}(H^1(\partial D),L^2(\partial D))}=\mathcal{O}(\frac{1}{|\alpha|^2\log^3 |\alpha|}).
\end{equation}
\end{proposition}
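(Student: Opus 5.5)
We write everything on the one-dimensional range of $P_{\parallel}$, identified with scalars via $\mathbbm{1}_{\partial D}$ (recall $|\partial D|=1$). Set $B^{\alpha}:=\mathcal{S}_{D}^{\alpha,12}(\mathcal{S}_{D}^{\alpha,22})^{-1}\mathcal{S}_{D}^{\alpha,21}$. By \eqref{eq_cap_def_momentum_estimate_proof_4}, $\mathcal{S}_{D}^{\alpha,11}=(\kappa_*(\alpha)+a_{11}+\rho(\alpha))P_{\parallel}$, where $a_{11}:=\int_{\partial D}\mathcal{A}[\mathbbm{1}]$ is an $\alpha$-independent constant and $\rho(\alpha):=\int_{\partial D}\mathcal{R}^{\alpha}[\mathbbm{1}]$. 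The Schur complement is then the scalar
\[
\mathcal{V}^{\alpha}=\big(\kappa_*(\alpha)+m(\alpha)\big)P_{\parallel},\qquad m(\alpha):=a_{11}+\rho(\alpha)-b(\alpha),
\]
with $b(\alpha)P_{\parallel}=B^{\alpha}$. The key claim is that $m$ is bounded with $m'=\mathcal{O}(\log|\alpha|)$ and $m''=\mathcal{O}(|\alpha|^{-1})$. For $\rho$ this is exactly Proposition \ref{prop_remainder_R_alpha}. For $b$, the factors $\mathcal{S}_{D}^{\alpha,12},\mathcal{S}_{D}^{\alpha,21}$ equal the constant blocks of $\mathcal{A}$ plus blocks of $\mathcal{R}^{\alpha}$, so they obey the same bounds. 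By Proposition \ref{prop_A_alpha_invertibility} and \eqref{eq_remainder_R_alpha_1}, $(\mathcal{S}_{D}^{\alpha,22})^{-1}$ exists and is uniformly bounded for small $\alpha$. Its derivatives follow from the resolvent identities $\partial_\alpha T^{-1}=-T^{-1}(\partial_\alpha T)T^{-1}$ and
\[
\partial_\alpha^2T^{-1}=2T^{-1}(\partial_\alpha T)T^{-1}(\partial_\alpha T)T^{-1}-T^{-1}(\partial_\alpha^2T)T^{-1},
\]
which give bounds $\mathcal{O}(\log|\alpha|)$ and $\mathcal{O}(\log^2|\alpha|+|\alpha|^{-1})=\mathcal{O}(|\alpha|^{-1})$. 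The Leibniz rule then yields the claimed bounds for $b$.

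Next we invert the scalar:
\[
(\mathcal{V}^{\alpha})^{-1}=\frac{1}{\kappa_*+m}P_{\parallel}=\kappa_*^{-1}P_{\parallel}+h^{\alpha},\qquad h^{\alpha}=-\frac{m}{\kappa_*(\kappa_*+m)}P_{\parallel}.
\]
Since $\kappa_*(\alpha)\sim\frac{1}{2\pi}\log|\alpha|\to-\infty$ and $m=\mathcal{O}(1)$, choosing $\alpha_0$ small makes $|\kappa_*+m|\ge\frac12|\kappa_*|$. This gives $\|h^{\alpha}\|=\mathcal{O}(\log^{-2}|\alpha|)$, which is \eqref{eq_inverse_constant_component_1}. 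Operator norms from $H^1$ to $L^2$ of a multiple of $P_{\parallel}$ are bounded by a constant times the scalar.

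For the derivatives we write $h=-m\,g$ with $g:=(\kappa_*(\kappa_*+m))^{-1}$. We use $\kappa_*'=\frac{1}{4\pi}\cot(\alpha/2)\operatorname{sgn}\alpha=\mathcal{O}(|\alpha|^{-1})$ and $\kappa_*''=\mathcal{O}(\alpha^{-2})$. Then
\[
g'=-\frac{\kappa_*'(2\kappa_*+m)+\kappa_*m'}{\kappa_*^2(\kappa_*+m)^2}.
\]
The first term in the numerator gives $\mathcal{O}(|\alpha|^{-1}\log^{-3}|\alpha|)$. The second gives $\mathcal{O}(\log^{-2}|\alpha|)$, which is smaller. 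Hence
\[
h'=-m'g-mg'=\mathcal{O}\big(\log|\alpha|\cdot\log^{-2}|\alpha|\big)+\mathcal{O}\big(|\alpha|^{-1}\log^{-3}|\alpha|\big)=\mathcal{O}\big(|\alpha|^{-1}\log^{-3}|\alpha|\big),
\]
since $\log^{-1}|\alpha|\ll|\alpha|^{-1}\log^{-3}|\alpha|$. This is \eqref{eq_inverse_constant_component_2}.

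The second derivative is the main bookkeeping obstacle. Differentiating $g'$ once more, the dominant term is $\kappa_*''\kappa_*/\kappa_*^4\sim\alpha^{-2}\log^{-3}|\alpha|$. The term $(\kappa_*')^2/\kappa_*^4$ is $\mathcal{O}(\alpha^{-2}\log^{-4}|\alpha|)$. Terms containing $m'$ or $m''$ contribute at most $|\alpha|^{-1}\log^{-2}|\alpha|$. For $h''=-m''g-2m'g'-mg''$ we have $m''g=\mathcal{O}(|\alpha|^{-1}\log^{-2}|\alpha|)$ and $m'g'=\mathcal{O}(|\alpha|^{-1}\log^{-2}|\alpha|)$, both of which are $o(\alpha^{-2}\log^{-3}|\alpha|)$. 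This yields \eqref{eq_inverse_constant_component_3}. The only delicate point is checking that the crude $\mathcal{O}(|\alpha|^{-1})$ bound on $m''$ from \eqref{eq_remainder_R_alpha_3} is still absorbed, which holds because it enters multiplied by $g=\mathcal{O}(\log^{-2}|\alpha|)$.
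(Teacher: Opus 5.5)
Your argument is correct and follows the route the paper indicates but leaves to the reader: the Schur complement \eqref{eq_cap_def_momentum_estimate_proof_6}, Proposition \ref{prop_remainder_R_alpha}, and the resolvent identity for $(\mathcal{S}_{D}^{\alpha,22})^{-1}$. You make it fully explicit by using that $P_{\parallel}$ has rank one, so that $\mathcal{V}^{\alpha}=(\kappa_*+m)P_{\parallel}$ can be inverted and differentiated as a scalar; the only slip is that $\kappa_*'(\alpha)=\frac{1}{4\pi}\cot(\alpha/2)$ carries no factor $\operatorname{sgn}\alpha$ (it must be odd), which leaves all your magnitude bounds unchanged.
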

With Proposition \ref{prop_inverse_constant_component}, Theorem \ref{thm_cap_def_momentum_estimate} follows directly by noting that
\begin{equation*}
\widehat{\mathcal{C}}^{\alpha}=-\int_{\partial D}(\mathcal{S}_{D}^{\alpha})^{-1}[\mathbbm{1}_{\partial D}]
=-\int_{\partial D}(\mathcal{V}^{\alpha})^{-1}[\mathbbm{1}_{\partial D}]
=-\kappa_{*}^{-1}(\alpha)+r(\alpha)
\end{equation*}
with $$
r(\alpha):=-\int_{\partial D}h^{\alpha}[\mathbbm{1}_{\partial D}] .
$$

\subsection{Decay estimate of the long-range interactions}\label{sec:LRI}

\subsubsection{Statement of the result}
Unlike the full-lattice capacitance operator, which exhibits exponential off-diagonal decay \cite{ammari2025nonlinear_soliton,ammari2026resolvent_convergence}, the $1$D-in-$3$D capacitance operator is expected to host only a polynomial off-diagonal decay rate due to the additional transverse degrees of freedom beyond the lattice direction, as will be shown in the next theorem. This is manifested by the blowing up of $\partial_{\alpha}\widehat{\mathcal{C}}^{\alpha}$, as shown in Theorem \ref{thm_cap_def_momentum_estimate}. Using the estimate obtained in Theorem \ref{thm_cap_def_momentum_estimate}, we confirm that the $1$D-in-$3$D capacitance operator indeed possesses a polynomial off-diagonal decay rate.
\begin{theorem} \label{thm_off_diag_decay}
There exists $C\in\mathbb{R},N>0$ such that for $n,m\in\mathbb{Z}$ with $|n-m|>N$, it holds that
\begin{equation*}
\mathcal{C}(n,m)=\frac{C}{|n-m|\log^2|n-m|}+o\left(\frac{1}{|n-m|\log^2|n-m|}\right).
\end{equation*}
\end{theorem}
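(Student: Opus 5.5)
Set $k:=n-m$, $|k|\to\infty$, with $a=|\partial D|=1$, so that $\mathcal{C}(n,m)=\frac{1}{2\pi}\int_{-\pi}^{\pi}\widehat{\mathcal{C}}^{\alpha}e^{-i\alpha k}\,d\alpha$. The approach is to take a smooth cutoff $\chi$ supported in $|\alpha|<\alpha_0$, with $\chi\equiv1$ near $0$, and split
\[
\widehat{\mathcal{C}}^{\alpha}=\chi(\alpha)\big(-\kappa_*^{-1}(\alpha)\big)+\chi(\alpha)r(\alpha)+(1-\chi(\alpha))\widehat{\mathcal{C}}^{\alpha},
\]
using \cref{thm_cap_def_momentum_estimate} near $0$ and \cref{thm_cap_def_momentum_existence} away from $0$.

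\textbf{The smooth part.} The term $(1-\chi)\widehat{\mathcal{C}}^{\alpha}$ is smooth on $[-\pi,\pi]$. Its periodicity at $\pm\pi$ follows from the $2\pi$-periodicity of $e^{i\alpha n}$ in \eqref{eq_SL_momentum_def}. Repeated integration by parts therefore makes its Fourier coefficient $\mathcal{O}(|k|^{-M})$ for every $M$, so it is negligible.

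\textbf{The leading part.} Near $0$ we have $-\kappa_*^{-1}(\alpha)=-2\pi/\log(2\sin(|\alpha|/2))=-2\pi/\log|\alpha|+\mathcal{O}(\alpha^2/\log^2|\alpha|)$. The plan is to compute the Fourier asymptotics of $g(\alpha)=\chi(\alpha)/\log|\alpha|$ explicitly. Since $g$ is even, its coefficient is $\frac1\pi\int_0^{\alpha_0}\chi(\alpha)\cos(k\alpha)/\log\alpha\,d\alpha$. Integrating by parts once gives
\[
-\frac{1}{\pi k}\int_0^{\alpha_0}\sin(k\alpha)\,\partial_\alpha\Big(\frac{\chi}{\log\alpha}\Big)d\alpha ,
\]
with no boundary terms because $1/\log\alpha\to0$ at $0$. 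The dominant piece of the derivative is $-1/(\alpha\log^2\alpha)$. Substituting $t=k\alpha$ turns $\int_0 \sin(k\alpha)/(\alpha\log^2\alpha)\,d\alpha$ into $\int_0^{k\alpha_0}\sin t/(t\log^2(t/k))\,dt$. For $t$ in any fixed range, $\log^2(t/k)=\log^2 k\,(1+\mathcal{O}(|\log t|/\log k))$, so this integral is $\frac{1}{\log^2k}\big(\frac{\pi}{2}+o(1)\big)$. The contributions from $t\ll1$ and $t\gg1$ are controlled using $|\sin t/t|\le1$, the oscillation of $\sin t$, and one more integration by parts for large $t$. This yields a coefficient of $g$ equal to $\frac{c}{|k|\log^2|k|}(1+o(1))$ with $c=\frac{1}{2}$ up to sign. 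The $\mathcal{O}(\alpha^2/\log^2)$ correction is $C^1$ with an integrable second derivative of size $1/\log^2$, so its coefficient is $o(1/k^{2})$. Thus the constant $C$ in \cref{thm_off_diag_decay} comes from this term alone.

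\textbf{The remainder $r$.} This is the main obstacle: the target is $\int\chi r e^{-ik\alpha}d\alpha=o\big(1/(|k|\log^2|k|)\big)$. One integration by parts is legitimate because $r\to0$ by \eqref{eq_cap_momentum_remainder_1} and $\partial_\alpha r=\mathcal{O}(1/(|\alpha|\log^3|\alpha|))$ is integrable by \eqref{eq_cap_momentum_remainder_2}. This reduces the problem to bounding $\frac1k\int\chi\,\partial_\alpha r\,e^{-ik\alpha}d\alpha$. I split the integral at $|\alpha|=1/|k|$:
\begin{itemize}
\item On $|\alpha|<1/|k|$, the absolute bound gives $\int_0^{1/k}\frac{d\alpha}{\alpha\log^3\alpha}\sim\frac{1}{\log^2k}$. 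This is only $\mathcal{O}$, not $o$, so I refine it: the bound is $\frac{1}{2\log^2(1/\varepsilon)}$ for the piece $|\alpha|<\varepsilon/k$. On $\varepsilon/k<|\alpha|<1/k$ the same bound gives $\mathcal{O}(|\log\varepsilon|/\log^3k)$.
\item On $|\alpha|>1/|k|$, I integrate by parts once more using the $\partial_\alpha^2 r$ bound. The boundary term at $1/k$ is $\frac1k\cdot\frac{k}{\log^3k}$, and the integral is $\frac1k\int_{1/k}\frac{d\alpha}{\alpha^2\log^3\alpha}\lesssim\frac{1}{\log^3k}$. Together this gives $\mathcal{O}(1/\log^3k)$.
\end{itemize}
Even so, the first piece is only $\mathcal{O}(1/\log^2 k)$ uniformly. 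To get $o$, the key extra input is that the bound on $\partial_\alpha r$ is $1/(\alpha\log^3)$, which is one log better than the leading term's derivative $1/(\alpha\log^2)$. Hence $\int_0^{1/k}|\partial_\alpha r|\lesssim 1/\log^2 k$ while the main term is of order $\log k/\log^2k$-relative, and I would check this bookkeeping carefully. If it falls short by exactly one log, the fallback is to sharpen Proposition~\ref{prop_inverse_constant_component} to an expansion $h^\alpha=c'\kappa_*^{-2}+\mathcal{O}(\log^{-3})$. The $\kappa_*^{-2}$ term can then be treated exactly like $g$, and it contributes $\mathcal{O}(1/(k\log^3k))$.
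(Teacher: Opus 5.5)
Your smooth part and your leading-term computation are fine: the leading term gives $\frac{1}{\pi k}\int_0 \frac{\sin(k\alpha)}{\alpha\log^2\alpha}\,d\alpha\sim\frac{1}{2k\log^2k}$. Your bound on $|\alpha|>1/|k|$ is also correct, since $\int_{1/k}^{\alpha_0}\frac{d\alpha}{\alpha^2\log^3(1/\alpha)}\lesssim\frac{k}{\log^3 k}$.

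The gap is the region $|\alpha|<1/|k|$, and it is real. There you replace the phase by its modulus, $|e^{-ik\alpha}|\le 1$. This gives $\int_0^{1/k}\frac{d\alpha}{\alpha\log^3(1/\alpha)}=\frac{1}{2\log^2 k}$, which is of the \emph{same} order as the main contribution $\frac{\pi}{2\log^2 k}$, so no $o(\cdot)$ follows.

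Your ``one log better'' bookkeeping does not repair this. It compares a crude absolute bound for $r$ with the main term \emph{after} oscillatory cancellation. The crude absolute bound for the main term on the same interval is $\int_0^{1/k}\frac{d\alpha}{\alpha\log^2(1/\alpha)}=\frac{1}{\log k}$, so the comparison is not like for like.

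The $\varepsilon/k$ refinement gains nothing either. In fact $\int_0^{\varepsilon/k}\frac{d\alpha}{\alpha\log^3(1/\alpha)}=\frac{1}{2\log^2(k/\varepsilon)}$, not $\frac{1}{2\log^2(1/\varepsilon)}$, and this is still $\sim\frac{1}{2\log^2 k}$ for fixed $\varepsilon$.

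The fallback expansion $h^\alpha=c'\kappa_*^{-2}+\mathcal{O}(\log^{-3})$ would need new derivative bounds on the new remainder. These would be of order $1/(|\alpha|\log^4|\alpha|)$ and $1/(\alpha^2\log^4|\alpha|)$, and nothing in the paper provides them. So as written, the $o(\cdot)$ claim is not proved.

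The missing idea is to use that the phase \emph{vanishes} at $\alpha=0$, not just that it is bounded. By Lemma~\ref{lem_real_even_cap_momentum}, $r$ is even, so $\partial_\alpha r$ is odd. Hence $\int_{-1/k}^{1/k}\partial_\alpha r\,e^{-ik\alpha}\,d\alpha=-2i\int_0^{1/k}\partial_\alpha r(\alpha)\sin(k\alpha)\,d\alpha$. On $(0,1/k)$ you have $|\sin(k\alpha)|\le k\alpha$ and $\log(1/\alpha)\ge\log k$. The integral is therefore bounded by $\int_0^{1/k}\frac{k\,d\alpha}{\log^3(1/\alpha)}\le\frac{1}{\log^3 k}$. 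This is exactly how the paper bounds $I_{1,near}^{(1)}$.

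With this one change your argument closes, and the remainder $r$ contributes $\mathcal{O}(1/(|k|\log^3|k|))$ to $\mathcal{C}(n,m)$.

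Once repaired, your far-region step is in fact sharper than the paper's. The paper bounds $[n^{-1},n^{-p}]$ absolutely. After integrating by parts from $n^{-p}$, it also uses the lossy estimate $\alpha^{-2}\le n^{p}\alpha^{-1}$. Together these produce the term $\frac{1-p^2}{2p^2}\frac{1}{\log^2 n}$, which only becomes negligible by letting $p\to1$. Integrating by parts directly from $1/k$ and evaluating $\int_{1/k}\alpha^{-2}\log^{-3}(1/\alpha)\,d\alpha$ sharply avoids that loss. It gives an explicit rate $\mathcal{O}(1/(|k|\log^3|k|))$ for the error.
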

We briefly comment on this result. The most interesting feature of the decay rate obtained in Theorem \ref{thm_off_diag_decay} is that, although $\mathcal{C}(n,m)$ decays slower than $1/|n-m|^p$ for any $p>1$, \textit{it is summable along the off-diagonals}, i.e.,
\begin{equation} \label{eq_column_sum}
\sum_{m\in\mathbb{Z}}|\mathcal{C}(n,m)|<\infty .
\end{equation}
The property \eqref{eq_column_sum} has implications, for example, in Anderson localisation. As proved in \cite{aizenman.molchanov1993Localization,shi2021multiscale}, a discrete operator with finite off-diagonal sum exhibits only a pure point spectrum when a large enough on-site i.i.d. random perturbation is applied. In particular, we expect more for our 1D-like operator, since physically, for one-dimensional systems, a random disorder with arbitrarily small amplitude should result in a purely point spectrum. Nevertheless, to our knowledge, this is only proved for discrete operators with an off-diagonal decay rate faster than $1/|n-m|^8$ \cite{Molchanov1999localization_1d_long_range} (see also \cite{shi2023almost_periodic_long_range,jian2025localization_interaction_long_range}), which excludes our borderline decay rate. We hope that our concrete physical example can provide an additional motivation for experts in this field.

We note that, in the statement of Theorem \ref{thm_off_diag_decay}, the capacitance operator itself does not encode any information about the material parameters inside the resonators, which is important for manipulating waves in the subwavelength regime. (For example, tuning the material parameter locally may confine the wave in the defected resonator, leading to the appearance of defect-localised modes.) In fact, similar to Section 2, if one takes the material parameter into account, the governing operator of the discrete resonance problem will become (see \eqref{eq:simprob})
\begin{equation*}
\mathcal{V} \mathcal{C} \quad \text{or}\quad \mathcal{V}^{\frac{1}{2}}\mathcal{C} \mathcal{V}^{\frac{1}{2}}
\end{equation*}
with $\mathcal{V}$ being a positive multiplication operator on $\ell^2(\mathbb{Z})$. As $\mathcal{V}$ is diagonal, the off-diagonal decay rate of $V\mathcal{C}$ or $V^{\frac{1}{2}}\mathcal{C} \mathcal{V}^{\frac{1}{2}}$ is dominated by the polynomial decay rate shown in Theorem \ref{thm_off_diag_decay}. This means that the off-diagonal estimate in Theorem \ref{thm_off_diag_decay} holds also for the generalised capacitance operator, including non-uniform distribution of material parameters.

\subsubsection{Proof of Theorem \ref{thm_off_diag_decay}}

Based on Theorem \ref{thm_cap_def_momentum_estimate}, we prove Theorem \ref{thm_off_diag_decay} by a standard asymptotic analysis of oscillating integrals. By periodicity, it suffices to prove Theorem \ref{thm_off_diag_decay} for $m=0$ and $n>0$.

To start, we note that the Hermiticity of the single-layer potential $\mathcal{S}_{D}^{\alpha}$ implies that the Floquet component $\widehat{\mathcal{C}}^{\alpha}$ is real and symmetric.
\begin{lemma} \label{lem_real_even_cap_momentum}
For $\alpha\in [-\pi,0)\cup (0,\pi]$, we have $\widehat{\mathcal{C}}^{\alpha}= \widehat{\mathcal{C}}^{-\alpha}\in\mathbb{R}$.
\end{lemma}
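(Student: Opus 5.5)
The plan is to read off two structural symmetries of the quasi-periodic kernel, complex conjugation and transposition, and combine them. The first gives $\mathcal{S}_{D}^{-\alpha}=\overline{\mathcal{S}_{D}^{\alpha}}$, which yields $\widehat{\mathcal{C}}^{-\alpha}=\overline{\widehat{\mathcal{C}}^{\alpha}}$. Together the two show that $\mathcal{S}_{D}^{\alpha}$ is Hermitian, which yields $\widehat{\mathcal{C}}^{\alpha}\in\mathbb{R}$. Throughout, $\alpha\in[-\pi,0)\cup(0,\pi]$, so $-\alpha$ lies in the same set. By Theorem \ref{thm_cap_def_momentum_existence}, both $\mathcal{S}_{D}^{\alpha}$ and $\mathcal{S}_{D}^{-\alpha}$ are invertible from $L^2(\partial D)$ to $H^1(\partial D)$.

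\emph{Step 1 (conjugation).} Write $G^{\alpha}(\bm{x},\bm{y})=\sum_{n\in\mathbb{Z}} e^{i\alpha n}/|\bm{x}-\bm{y}-n\bm{e}_1|$. This series is understood as the singular $n=0$ term plus $K^{\alpha}$ from Proposition \ref{prop_cap_def_momentum_existence_proof_1}, with the sum over $n\neq0$ taken as a limit of symmetric partial sums $\sum_{0<|n|\le M}$. The denominators are real, so $\overline{G^{\alpha}(\bm{x},\bm{y})}=G^{-\alpha}(\bm{x},\bm{y})$ termwise, and hence $\overline{\mathcal{S}_{D}^{\alpha}[\varphi]}=\mathcal{S}_{D}^{-\alpha}[\overline{\varphi}]$. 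Put $\phi^{\alpha}:=(\mathcal{S}_{D}^{\alpha})^{-1}[\mathbbm{1}_{\partial D}]$. Conjugating $\mathcal{S}_{D}^{\alpha}\phi^{\alpha}=\mathbbm{1}_{\partial D}$ and using that $\mathbbm{1}_{\partial D}$ is real gives $\mathcal{S}_{D}^{-\alpha}\overline{\phi^{\alpha}}=\mathbbm{1}_{\partial D}$. Injectivity of $\mathcal{S}_{D}^{-\alpha}$ then gives $\phi^{-\alpha}=\overline{\phi^{\alpha}}$, and integrating over $\partial D$ yields $\widehat{\mathcal{C}}^{-\alpha}=\overline{\widehat{\mathcal{C}}^{\alpha}}$.

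\emph{Step 2 (transposition, hence Hermiticity).} Reindex $n\mapsto -n$ inside the symmetric partial sums; this leaves each partial sum unchanged. Since $|\bm{x}-\bm{y}+n\bm{e}_1|=|\bm{y}-\bm{x}-n\bm{e}_1|$, we get
\begin{equation*}
G^{\alpha}(\bm{x},\bm{y})=\sum_{n} \frac{e^{-i\alpha n}}{|\bm{y}-\bm{x}-n\bm{e}_1|}=G^{-\alpha}(\bm{y},\bm{x}).
\end{equation*}
Combining with Step 1 gives $G^{\alpha}(\bm{x},\bm{y})=\overline{G^{\alpha}(\bm{y},\bm{x})}$. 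Hence $\mathcal{S}_{D}^{\alpha}$ is Hermitian with respect to the $L^2(\partial D)$ pairing: for $\varphi,\psi\in L^2(\partial D)$,
\begin{equation*}
\langle \mathcal{S}_{D}^{\alpha}\varphi,\psi\rangle=\langle \varphi,\mathcal{S}_{D}^{\alpha}\psi\rangle .
\end{equation*}
The Fubini interchange is justified because the $n=0$ part is weakly singular and $K^{\alpha}$ is continuous on $\partial D\times\partial D$.

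\emph{Step 3 (reality).} Since $\mathcal{S}_{D}^{\alpha}\phi^{\alpha}=\mathbbm{1}_{\partial D}$,
\begin{equation*}
\int_{\partial D}\phi^{\alpha}\,d\sigma=\langle \mathbbm{1}_{\partial D},\phi^{\alpha}\rangle=\langle \mathcal{S}_{D}^{\alpha}\phi^{\alpha},\phi^{\alpha}\rangle .
\end{equation*}
By Step 2 this quantity equals its own complex conjugate, so it is real. Thus $\widehat{\mathcal{C}}^{\alpha}=-\int_{\partial D}\phi^{\alpha}\,d\sigma\in\mathbb{R}$, and with Step 1, $\widehat{\mathcal{C}}^{-\alpha}=\overline{\widehat{\mathcal{C}}^{\alpha}}=\widehat{\mathcal{C}}^{\alpha}$.

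The only delicate point is that $\sum_{n\neq0}e^{i\alpha n}/|\bm{x}-\bm{y}-n\bm{e}_1|$ converges only conditionally, so the reindexing $n\mapsto-n$ must be justified. I would handle this by fixing the symmetric-partial-sum convention (or, equivalently, by grouping the $n$ and $-n$ terms, which gives an absolutely convergent sum after subtracting the constant $\kappa_*$ part as in \eqref{eq_cap_def_momentum_estimate_proof_2}). That convention is invariant under $n\mapsto-n$ and commutes with conjugation, and its limit agrees with the kernel $K^{\alpha}$ of Proposition \ref{prop_cap_def_momentum_existence_proof_1}.
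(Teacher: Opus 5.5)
Your proof is correct and follows the same route as the paper. Hermiticity of $\mathcal{S}_{D}^{\alpha}$ gives $\widehat{\mathcal{C}}^{\alpha}\in\mathbb{R}$, and the conjugation symmetry $\overline{\mathcal{S}_{D}^{\alpha}[\varphi]}=\mathcal{S}_{D}^{-\alpha}[\overline{\varphi}]$ gives $\widehat{\mathcal{C}}^{-\alpha}=\overline{\widehat{\mathcal{C}}^{\alpha}}$. You also supply details the paper only asserts: the kernel identity $G^{\alpha}(\bm{x},\bm{y})=\overline{G^{\alpha}(\bm{y},\bm{x})}$ behind Hermiticity, and the precise form $\mathcal{S}_{D}^{-\alpha}=\mathcal{T}\mathcal{S}_{D}^{\alpha}\mathcal{T}$ of the paper's $\mathcal{T}\mathcal{S}_{D}^{\alpha}$. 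Your convergence worry is harmless, since for $\alpha\neq0$ each one-sided sum converges separately by Dirichlet's test, so the reindexing $n\mapsto-n$ is legitimate.
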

The detailed proof is given in \Cref{sec:app_C}. With this property, we see that
\begin{equation*}
\mathcal{C}(n,0)=\frac{1}{2\pi}\int_{-\pi}^{\pi}\widehat{\mathcal{C}}^{\alpha}e^{-i\alpha n}d\alpha
=\frac{1}{\pi}\int_{0}^{\pi}\widehat{\mathcal{C}}^{\alpha}\cos (n\alpha)d\alpha.
\end{equation*}
Select a smooth function $\eta\in C^{\infty}([0,\pi])$ such that
\begin{equation*}
\eta(\alpha)=1\quad \text{ for $\alpha\in [0,\alpha_{0}/2]$},\quad \eta(\alpha)=0\quad \text{ for $\alpha\in [\alpha_{0},\pi]$}
\end{equation*}
with $\alpha_0>0$ introduced in Theorem \ref{thm_cap_def_momentum_estimate}. Then, we further decompose the integral and integrate by parts
\begin{equation} \label{eq_off_diag_decay_proof_1}
\begin{aligned}
\mathcal{C}(n,0)&=\frac{1}{\pi}\int_{0}^{\pi}\eta(\alpha)\widehat{\mathcal{C}}^{\alpha}\cos (n\alpha)d\alpha + \frac{1}{\pi}\int_{0}^{\pi}(1-\eta(\alpha))\widehat{\mathcal{C}}^{\alpha}\cos (n\alpha)d\alpha \\
&=-\frac{1}{n\pi}\int_{0}^{\alpha_0/2}(\partial_{\alpha}\widehat{\mathcal{C}}^{\alpha})\sin (n\alpha)d\alpha \\
&\quad -\frac{1}{n\pi}\int_{\alpha_0/2}^{\alpha_0} \big[\eta(\alpha)(\partial_{\alpha}\widehat{\mathcal{C}}^{\alpha})+\partial_{\alpha}\eta(\alpha)\cdot 
\widehat{\mathcal{C}}^{\alpha}\big]\sin (n\alpha)d\alpha \\
&\quad -\frac{1}{n\pi}\int_{\alpha_0/2}^{\pi} \partial_{\alpha}\big[(1-\eta(\alpha)) \widehat{\mathcal{C}}^{\alpha} \big] \sin (n\alpha)d\alpha \\
&=:-\frac{1}{n\pi} \big[I_1(n)+I_2(n)+I_3(n)\big].
\end{aligned}
\end{equation}
Note that, by Theorem \ref{thm_cap_def_momentum_existence}, $(1-\eta(\alpha)) \widehat{\mathcal{C}}^{\alpha}$ is smooth on $[\alpha_0/2,\pi]$. Hence, a further integration by parts shows that
\begin{equation} \label{eq_off_diag_decay_proof_2}
I_3(n)=\mathcal{O}(n^{-1})
\end{equation}
as $n\to \infty$. Similarly, we have
\begin{equation} \label{eq_off_diag_decay_proof_3}
I_2(n)=\mathcal{O}(n^{-1}).
\end{equation}
Our main focus will be on $I_1(n)$, i.e., the integral near the origin. Let $n$ be large enough so that $n^{-1}<n^{-2/3}<\alpha_0/2$. We further decompose $I_1(n)$ as follows:
\begin{equation} \label{eq_off_diag_decay_proof_4}
\begin{aligned}
I_1(n)
&=-\int_{0}^{n^{-2/3}}(\partial_{\alpha}\kappa_*^{-1}(\alpha))\sin (n\alpha)d\alpha + \int_{0}^{n^{-1}}(\partial_{\alpha}r(\alpha))\sin (n\alpha)d\alpha \\
&\quad -\int_{n^{-2/3}}^{\alpha_0/2}(\partial_{\alpha}\kappa_*^{-1}(\alpha))\sin (n\alpha)d\alpha + \int_{n^{-1}}^{\alpha_0/2}(\partial_{\alpha}r(\alpha))\sin (n\alpha)d\alpha \\
&=:I_{1,near}^{(0)}(n) + I_{1,near}^{(1)}(n) + I_{1,far}^{(0)}(n) + I_{1,far}^{(1)}(n),
\end{aligned}
\end{equation}
where $\kappa_{*},r$ are introduced in Theorem \ref{thm_cap_def_momentum_estimate}. Recalling that $\kappa_{*}(\alpha)$ admits an explicit expression, one can directly check that the following result holds.
\begin{proposition} \label{prop_off_diag_decay_proof_1}
There exists $C_1\in\mathbb{R},N_1>0$ such that for $n> N_1$, it holds
\begin{equation*}
I_{1,near}^{(0)}(n)=\frac{C_1}{\log^2|n|} + \mathcal{O}\big(\frac{1}{\log^3|n|}\big),\quad
I_{1,far}^{(0)}(n)=\mathcal{O}\big(\frac{1}{\log^3|n|}\big).
\end{equation*}
\end{proposition}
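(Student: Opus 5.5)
We work directly from the explicit formula $\kappa_*(\alpha)=\frac{1}{2\pi}\log(2\sin\frac{\alpha}{2})$ for $\alpha\in(0,\alpha_0)$, with $a=1$. Set $L(\alpha):=\log(2\sin\frac{\alpha}{2})$, so that $\kappa_*^{-1}=2\pi/L$ and
\[
\partial_\alpha\kappa_*^{-1}(\alpha)=-\frac{2\pi L'(\alpha)}{L(\alpha)^2},\qquad L'(\alpha)=\tfrac12\cot\tfrac{\alpha}{2}=\frac{1}{\alpha}+\mathcal{O}(\alpha).
\]
Also $L(\alpha)=\log\alpha+\mathcal{O}(\alpha^2)$. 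Hence
\[
\partial_\alpha\kappa_*^{-1}(\alpha)=-\frac{2\pi}{\alpha\log^2\alpha}\bigl(1+\mathcal{O}(\alpha^2/|\log\alpha|)\bigr)+\mathcal{O}\Bigl(\frac{\alpha}{\log^2\alpha}\Bigr),
\]
and, more precisely, $\partial_\alpha\kappa_*^{-1}=-2\pi/(\alpha\log^2\alpha)+E(\alpha)$ with $E(\alpha)=\mathcal{O}(\alpha/\log^2\alpha)$ bounded and integrable. The contribution of $E$ to either integral is $\mathcal{O}(n^{-2/3})$ for the near part. For the far part we integrate by parts once, using that $E$ is smooth on $[n^{-2/3},\alpha_0/2]$ with $E'=\mathcal{O}(1/\log^2\alpha)$; this gives $\mathcal{O}(n^{-1})$. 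Both are $\mathcal{O}(1/\log^3 n)$, so only the model kernel $g(\alpha):=1/(\alpha\log^2\alpha)$ matters.

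\textbf{Near part.} Substitute $t=n\alpha$:
\[
\int_0^{n^{-2/3}} \frac{\sin(n\alpha)}{\alpha\log^2\alpha}\,d\alpha=\int_0^{n^{1/3}}\frac{\sin t}{t\,(\log t-\log n)^2}\,dt .
\]
For $t\le n^{1/3}$ we have $|\log t|\le\frac13\log n$ except on $t<n^{-1/3}$, so we split at $t=n^{-1/3}$. On $[0,n^{-1/3}]$ we use $|\sin t/t|\le1$, which gives a contribution $\le n^{-1/3}$. On $[n^{-1/3},n^{1/3}]$ we expand
\[
(\log n-\log t)^{-2}=\log^{-2}n\,\bigl(1+\mathcal{O}(|\log t|/\log n)\bigr).
\]
The main term is $\log^{-2}n\int_{n^{-1/3}}^{n^{1/3}}\frac{\sin t}{t}\,dt=\frac{\pi}{2}\log^{-2}n+\mathcal{O}(n^{-1/3}\log^{-2}n)$. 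The error term is bounded by $\log^{-3}n\int_{n^{-1/3}}^{n^{1/3}}\bigl|\frac{\sin t}{t}\log t\bigr|\,dt$.

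\textbf{The main obstacle.} The last integral is \emph{not} bounded uniformly: the tail $\int^{n^{1/3}}\frac{|\log t|}{t}\,dt\sim\log^2 n$ would spoil the bound. So we must keep the oscillation. We split this integral at $t=1$. On $[n^{-1/3},1]$ the integral is $\mathcal{O}(1)$. On $[1,n^{1/3}]$ we do not use absolute values: the function $h(t):=(\log n-\log t)^{-2}-\log^{-2}n$ is monotone with $h'(t)=\mathcal{O}(1/(t\log^3 n))$. Integrating $\frac{\sin t}{t}h(t)$ by parts (equivalently, applying the second mean value theorem) gives $\mathcal{O}(\sup|h|)+\int_1^{n^{1/3}}t^{-1}|h'(t)|\,dt=\mathcal{O}(\log^{-3}n)$, since $\sup|h|=\mathcal{O}(\log^{-3}n)$ there. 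This yields
\[
I^{(0)}_{1,near}(n)=-2\pi\cdot\tfrac{\pi}{2}\log^{-2}n+\mathcal{O}(\log^{-3}n),
\]
so $C_1=\pi^2$, using that $I^{(0)}_{1,near}$ carries a minus sign in front of $\partial_\alpha\kappa_*^{-1}$.

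\textbf{Far part.} Integrate $g(\alpha)\sin(n\alpha)$ by parts on $[n^{-2/3},\alpha_0/2]$:
\[
\int g\sin(n\alpha)\,d\alpha=\Bigl[-\frac{g\cos(n\alpha)}{n}\Bigr]+\frac1n\int g'\cos(n\alpha)\,d\alpha .
\]
The boundary term is $\mathcal{O}\bigl(g(n^{-2/3})/n\bigr)=\mathcal{O}(n^{-1/3}/\log^2 n)$. Since $|g'|\lesssim 1/(\alpha^2\log^2\alpha)$, the integral term is bounded by $\frac1n\int_{n^{-2/3}}\alpha^{-2}\,d\alpha\lesssim n^{-1/3}$. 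Both are $o(\log^{-3}n)$, which proves $I^{(0)}_{1,far}(n)=\mathcal{O}(\log^{-3}n)$.
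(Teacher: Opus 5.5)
Your overall scheme is a sound way to carry out the ``direct check'' that the paper leaves to the reader. Removing the smooth error $E$, rescaling $t=n\alpha$, splitting at $t=n^{-1/3}$ and $t=1$, subtracting $\log^{-2}n$, and integrating the far part by parts once are all fine, and $C_1=\pi^2$ is the correct constant. However, the step you identify as the crux rests on a false claim. On $[1,n^{1/3}]$ the function $h$ increases from $h(1)=0$ to
\[
h(n^{1/3})=\Bigl(\tfrac23\log n\Bigr)^{-2}-\log^{-2}n=\tfrac54\,\log^{-2}n ,
\]
so $\sup|h|$ is of order $\log^{-2}n$, not $\log^{-3}n$. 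A bound of the form $\mathcal{O}(\sup|h|)+\int_1^{n^{1/3}}t^{-1}|h'(t)|\,dt$ therefore only gives an error of the same order as the main term $\frac{\pi}{2}\log^{-2}n$, and the leading constant is lost. The second mean value theorem is not ``equivalent'' here and fails the same way. It gives $h(n^{1/3})\int_\xi^{n^{1/3}}\frac{\sin t}{t}\,dt$ for some $\xi\in[1,n^{1/3}]$, which is again only $\mathcal{O}(\log^{-2}n)$, since $\xi$ may be close to $1$.

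The repair is to integrate by parts against the tail antiderivative $F(t):=-\int_t^\infty\frac{\sin s}{s}\,ds$, which satisfies $F'(t)=\frac{\sin t}{t}$ and $|F(t)|\le 2/t$. Then
\[
\int_1^{n^{1/3}}\frac{\sin t}{t}\,h(t)\,dt=F(n^{1/3})\,h(n^{1/3})-F(1)\,h(1)-\int_1^{n^{1/3}}F(t)\,h'(t)\,dt .
\]
The second term vanishes because $h(1)=0$. The first is $\mathcal{O}(n^{-1/3}\log^{-2}n)$ because of the decay of $F$. The last is bounded by $\frac{27}{2}\log^{-3}n\int_1^\infty t^{-2}\,dt$, using $|h'(t)|\le\frac{27}{4}\,t^{-1}\log^{-3}n$. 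The boundary terms are harmless precisely because of $h(1)=0$ and the $1/t$ decay of $F$ at the upper endpoint; your write-up invokes neither.

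Separately, fix the sign in your final display. Since $\partial_\alpha\kappa_*^{-1}=-2\pi/(\alpha\log^2\alpha)+E$, one has
\[
I^{(0)}_{1,near}(n)=2\pi\int_0^{n^{-2/3}}\frac{\sin(n\alpha)}{\alpha\log^2\alpha}\,d\alpha+\mathcal{O}(n^{-2/3})=+\pi^2\log^{-2}n+\mathcal{O}(\log^{-3}n).
\]
The positivity of $C_1$ is what the paper needs later when it chooses $p$ with $C_2\frac{1-p^2}{p^2}<C_1$.
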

The remaining integrals involving $r(\alpha)$ are estimated using the asymptotics obtained in Theorem \ref{thm_cap_def_momentum_estimate}. The following result holds. 
\begin{proposition} \label{prop_off_diag_decay_proof_2}
For any $p\in (0,1)$, there exists $N_2>0$ such that for $n> N_2$, it holds
\begin{equation*}
I_{1,near}^{(1)}(n)+I_{1,far}^{(1)}(n)\leq C_2\big(\frac{1-p^2}{p^2}\frac{1}{\log^2|n|}+\frac{1}{\log^3|n|}+ \frac{1}{|n|^{1-p}}\big),
\end{equation*}
where $C_2>0$ is independent of $p$.
\end{proposition}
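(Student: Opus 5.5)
We bound the two integrals separately, using only the remainder estimates \eqref{eq_cap_momentum_remainder_1}--\eqref{eq_cap_momentum_remainder_2} of Theorem \ref{thm_cap_def_momentum_estimate}.

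\textbf{Near part.} On $(0,n^{-1})$ we use $|\sin(n\alpha)|\le n\alpha$ together with $|\partial_\alpha r(\alpha)|\le C/(\alpha|\log\alpha|^3)$. This gives
\begin{equation*}
|I_{1,near}^{(1)}(n)|\le C n\int_0^{n^{-1}}\frac{d\alpha}{|\log\alpha|^3}\le \frac{C}{\log^3 n}.
\end{equation*}
The middle bound holds because $|\log\alpha|\ge\log n$ on this interval.

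\textbf{Far part: splitting.} We split $I_{1,far}^{(1)}(n)$ at $\alpha=n^{-p}$. The threshold $n^{-p}$ is where the parameter $p$ enters.

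\textbf{Far part on $(n^{-p},\alpha_0/2)$.} Here we integrate by parts once more:
\begin{equation*}
\int_{n^{-p}}^{\alpha_0/2}\partial_\alpha r\,\sin(n\alpha)\,d\alpha=\Big[-\frac{\cos(n\alpha)}{n}\partial_\alpha r\Big]_{n^{-p}}^{\alpha_0/2}+\frac1n\int_{n^{-p}}^{\alpha_0/2}\partial_\alpha^2 r\,\cos(n\alpha)\,d\alpha .
\end{equation*}
The boundary terms are bounded by $n^{-1}\cdot n^{p}/\log^3(n^p)$, hence by $C n^{-(1-p)}$. Using $|\partial_\alpha^2 r|\le C/(\alpha^2|\log\alpha|^3)$, the integral term is bounded by $\frac Cn\int_{n^{-p}}^{\infty}\alpha^{-2}\,d\alpha\le Cn^{p-1}$. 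So this piece is $\mathcal{O}(n^{-(1-p)})$. The constant is uniform in $p$ up to the harmless factor $p^{-3}$ coming from $\log^3(n^p)$, which is absorbed by taking $n$ large depending on $p$.

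\textbf{Far part on $(n^{-1},n^{-p})$.} This is the main obstacle, since no oscillation gain is available at scale $\alpha\gtrsim n^{-1}$ with only crude bounds. We bound $|\sin|\le1$ and integrate the bound on $\partial_\alpha r$ exactly. With $L=|\log\alpha|$ we have $d\alpha/\alpha=-dL$, so
\begin{equation*}
\int_{n^{-1}}^{n^{-p}}\frac{d\alpha}{\alpha|\log\alpha|^3}=\frac12\Big[\frac{1}{(p\log n)^2}-\frac{1}{(\log n)^2}\Big]=\frac{1-p^2}{2p^2}\,\frac{1}{\log^2 n}.
\end{equation*}
This is exactly the first term in the claimed bound, with $C_2$ equal to the constant in \eqref{eq_cap_momentum_remainder_2}, independent of $p$. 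This explains the form of the statement.

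\textbf{Conclusion.} Summing the near part and the two far pieces gives the claim for $n>N_2(p)$. Letting $p\to1$ afterwards then shows that this contribution is $o(1/\log^2 n)$.
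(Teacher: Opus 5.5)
Your proposal is correct and follows essentially the same route as the paper. It gets $\mathcal{O}(\log^{-3} n)$ on $(0,n^{-1})$, splits the far part at $n^{-p}$ with the crude bound giving exactly $\frac{1-p^2}{2p^2}\log^{-2} n$ on $(n^{-1},n^{-p})$, and applies one more integration by parts on $(n^{-p},\alpha_0/2)$ to get $\mathcal{O}(n^{-(1-p)})$; the differences are cosmetic ($|\sin(n\alpha)|\le n\alpha$ instead of the substitution $t=n\alpha$, and $|\partial_\alpha^2 r|\le C\alpha^{-2}$ directly), and your handling of the $p^{-3}$ factor in the boundary term is slightly more careful than the paper's.
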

The detailed proof of this proposition is presented in \Cref{sec:app_C}. By \eqref{eq_off_diag_decay_proof_4}, Propositions \ref{prop_off_diag_decay_proof_1} and \ref{prop_off_diag_decay_proof_2}, we select $p\in (0,1)$ so that
\begin{equation*}
C_2\frac{1-p^2}{p^2}<C_1.
\end{equation*}
The proof is then complete.

\begin{remark}
The approach of this section can be readily generalised to study the off-diagonal decay rate of the $1$D-in-$2$D (or $2$D-in-$3$D) capacitance operators, which requires more careful asymptotic analysis because the behaviour of the free-space Green function for the $1$D-in-$2$D system (Hankel function) is more delicate than $1/|\bm{x}-\bm{y}|$. We leave this extension to the interested reader.

One may also expect to include the case where multiple resonators are contained in the unit strip $Y$, namely $\cup_{1\leq k\leq K}D_k\Subset Y$, in which case the Floquet component of the capacitance operator is a $K\times K$ matrix defined as
\begin{equation*}
\widehat{\mathcal{C}}^{\alpha}_{i,j}:=\int_{\partial D_j}(\mathcal{S}_{D}^{\alpha})^{-1}[\mathbbm{1}_{\partial D_j}]d\sigma ,
\end{equation*}
and $\mathcal{C}_{i,j}$ is defined via the inverse Floquet transform. However, we should point out that the proof in this paper cannot directly apply to that case. The difficulty is due to the fact that the constant part of the decomposition \eqref{eq_cap_def_momentum_estimate_proof_2} (i.e., the part that is proportional to $\kappa_{*}(\alpha)$) is of rank $1$ instead of rank $K$. This leads to the problem that the first diagonal in \eqref{eq_schur_decomposition} is not invertible, and therefore prevents the application of the Schur complement to study the asymptotic behaviour of $\widehat{\mathcal{C}}^{\alpha}$. We leave this problem for future study.
\end{remark}

\section{Strong convergence of the finite capacitance operator and the implication for the spectrum}\label{sec:SC}
In this section, we first establish the strong convergence of the finite capacitance operator to its infinite counterpart as the chain size goes to infinity. Then, with this operator convergence, we discuss the convergence of the spectrum associated with the truncated system to that of the infinite structure, which shall recover the convergence results presented in \cite{ammari.davies.ea2023Convergence, ammari.davies.ea2025Spectral}. We also make use of  the decay rate of the off-diagonal entries of the capacitance operator to estimate the  convergence rate 
for the defect modes.  Our estimate is the first rigorous estimate of the rate of convergence. Nevertheless, it is weaker than the effective rate observed numerically in \cite{ammari.davies.ea2025Spectral, ammari.davies.ea2023Convergence}. Note that estimating the rate of convergence of the resonant frequencies of a system of coupled resonators in a truncated periodic lattice to the essential spectrum of the corresponding infinite lattice is more challenging due to delocalisation.

\subsection{Strong convergence of operator}
The main result of this section is as follows.
\begin{proposition} \label{prop_str_con}
    Let $\mathcal{C}^{(2N+1)}:\ell^2(\Z)\to \ell^2(\Z)$ be the finite capacitance operator introduced in \eqref{eq_finite_cap_def}, and let $\mathcal{C}$ be the capacitance operator associated with the infinite chain. Then 
    \[
        \mathcal{C}^{(2N+1)} \to \mathcal{C} \text{ strongly,}\quad \text{as }N\to \infty.
    \]
\end{proposition}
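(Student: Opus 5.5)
Since the operators $\mathcal{C}^{(2N+1)}$ will be shown to be uniformly bounded, strong convergence reduces to convergence on the dense set of finitely supported sequences. By linearity it then suffices to show, for each fixed $m\in\mathbb{Z}$, that $\|\mathcal{C}^{(2N+1)}e_m-\mathcal{C}e_m\|_{\ell^2}\to 0$ as $N\to\infty$, where $e_m$ is the $m$-th unit vector.

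\textbf{Step 1 (uniform bound).} By Lemma \ref{lem:capprop}, $\mathcal{C}^{(2N+1)}$ is symmetric and diagonally dominant with nonpositive off-diagonal entries. Schur's test then gives
\[
\|\mathcal{C}^{(2N+1)}\|\le \max_i\Big(\mathcal{C}^{(2N+1)}_{ii}+\sum_{j\ne i}|\mathcal{C}^{(2N+1)}_{ij}|\Big)\le 2\max_i\mathcal{C}^{(2N+1)}_{ii}.
\]
By the energy characterisation in Lemma \ref{lem:capcharact}, $\mathcal{C}^{(2N+1)}_{ii}$ is the Dirichlet energy minimised over exterior functions equal to $1$ on $\partial D_i$ and $0$ on the other resonators. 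Comparing with the two-resonator problem for $D_i$ and a cut-off near $D_i$ (the minimiser on the smaller exterior domain, extended suitably) bounds this energy by a constant independent of $N$ and $i$. The limit operator $\mathcal{C}$ is bounded because $\widehat{\mathcal{C}}^\alpha\in L^\infty$; this follows from Theorems \ref{thm_cap_def_momentum_existence} and \ref{thm_cap_def_momentum_estimate}, since $\kappa_*^{-1}\to0$.

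\textbf{Step 2 (entrywise convergence).} Fix $n,m$. I want $\mathcal{C}^{(2N+1)}_{nm}\to\mathcal{C}(n,m)$. I would identify $\mathcal{C}(n,m)$ with $-\int_{\partial D_m}\partial_\nu V_n\,d\sigma$, where $V_n$ is the (suitably defined) potential of the infinite chain, equal to $1$ on $\partial D_n$ and $0$ on the other $D_j$. The Floquet synthesis $\int (\mathcal{S}^\alpha_D)^{-1}[\mathbbm 1]e^{-i\alpha n}d\alpha$ should reproduce this, using Proposition \ref{prop_cap_def_momentum_existence_proof_2} and the integrability of $(\mathcal{S}^\alpha_D)^{-1}$ near $\alpha=0$ from Proposition \ref{prop_inverse_constant_component}. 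Convergence then follows from a variational argument. $V_n^{(2N+1)}$ is the minimiser of the Dirichlet energy on $\mathbb{R}^3\setminus\overline{\mathcal{D}^{(2N+1)}}$, and the energies are monotone in $N$ because the admissible class shrinks as resonators are added. The minimisers therefore converge in the energy space to the minimiser for the infinite chain. Via the polarised energy formula this gives $\mathcal{C}^{(2N+1)}_{nm}\to\mathcal{C}(n,m)$.

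\textbf{Step 3 (tail control).} This step is the main obstacle: entrywise convergence must be upgraded to $\ell^2$ convergence of columns. This needs a bound on $\sum_{|n|>M}|\mathcal{C}^{(2N+1)}_{nm}|^2$ that is uniform in $N$. Two routes are available.
\begin{itemize}
\item \emph{Row sums.} Each row sum $\sum_n\mathcal{C}^{(2N+1)}_{nm}=\int\nabla V_m^{(2N+1)}\cdot\nabla(\textstyle\sum_n V_n^{(2N+1)})$ equals the flux of the total potential $\sum_nV_n^{(2N+1)}$ through $\partial D_m$. This is $\ge0$ and tends to zero. Combined with the sign structure, convergence of the diagonal entry and entrywise convergence, this controls the $\ell^1$ tail $\sum_{n\ne m}|\mathcal{C}^{(2N+1)}_{nm}|$: it equals $\mathcal{C}^{(2N+1)}_{mm}$ minus the row sum. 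A Scheffé-type argument ($\ell^1$ norms converge and entries converge) then gives $\ell^1$, hence $\ell^2$, convergence of the column.
\item \emph{Monotonicity.} Alternatively, the maximum principle shows that $|\mathcal{C}^{(2N+1)}_{nm}|$ is bounded by a comparison quantity decaying like Theorem \ref{thm_off_diag_decay}, after which dominated convergence applies.
\end{itemize}
I would attempt the $\ell^1$ row-sum route first. It needs only the facts that the row sum of $\mathcal{C}$ vanishes, i.e.\ $\widehat{\mathcal{C}}^{0}=0$ (consistent with $\kappa_*^{-1}(\alpha)\to0$), and that the row sums of $\mathcal{C}^{(2N+1)}$ tend to $0$ for fixed $m$. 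The latter holds because the total potential tends to $1$ near $D_m$ as the chain grows: the capacity per unit length of a long chain gives a flux of order $1/\log N$.
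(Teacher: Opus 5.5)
Your proof is correct in outline and takes a different route from the paper in both substantive steps.

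\textbf{Entrywise convergence.} The paper sandwiches the error with the maximum principle:
$\sum_{|p|\ge N}\mathcal{C}(p,j)\le \mathcal{C}^{(2N+1)}(i,j)-\mathcal{C}(i,j)\le 0$.
This needs the summability from Theorem~\ref{thm_off_diag_decay}. In return it gives an error bound uniform in $i$, which the paper reuses for the rate in Proposition~\ref{prop_upp_con_spectrum}. Your monotone convergence of minimisers over nested admissible classes needs no decay input, but it is purely qualitative. Both arguments rely on identifying the Floquet-defined kernel with the energy/flux of the infinite-chain potentials $V_n$, which the paper also asserts without proof.

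\textbf{From entries to columns.} The paper splits off $|i|>N$ and invokes dominated convergence on $|i|\le N$. As written, however, it only has a uniform $\ell^1$ bound, not a dominating sequence, so it does not exclude mass escaping towards $|i|\approx N$. Your Scheff\'e argument tracks that mass through the row sums, which is exactly what is needed here. The paper's step can be repaired the same way, using the sign $\mathcal{C}^{(2N+1)}(i,0)\le\mathcal{C}(i,0)$ together with $\sum_i\mathcal{C}(i,0)=0$.

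\textbf{Two adjustments.}

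First, you do not need the capacity-per-unit-length heuristic to show $s_N:=\sum_n\mathcal{C}^{(2N+1)}_{nm}\to 0$. Set $f_N=\mathcal{C}^{(2N+1)}_{\cdot m}$ and $f=\mathcal{C}(\cdot,m)$. Fatou and the sign structure give
$\|f\|_{\ell^1}\le\liminf_N\|f_N\|_{\ell^1}=2\mathcal{C}(m,m)-\limsup_N s_N<\infty$.
Absolute summability then lets you evaluate the continuous Fourier series at $\alpha=0$, so $\sum_n\mathcal{C}(n,m)=\lim_{\alpha\to 0}\widehat{\mathcal{C}}^{\alpha}=0$. This step uses $r(\alpha)\to 0$ as well as $\kappa_*^{-1}(\alpha)\to 0$. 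Hence $\|f\|_{\ell^1}=2\mathcal{C}(m,m)$, so $\limsup_N s_N\le 0$, and since $s_N>0$ you get $s_N\to 0$.

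Second, drop the alternative ``monotonicity'' route. Off the diagonal, the comparison with the infinite chain goes the wrong way: $\mathcal{C}^{(2N+1)}_{nm}\le\mathcal{C}(n,m)\le 0$, so the finite entries are larger in modulus. The only available upper bound, $|\mathcal{C}(n,m)|+\sum_{|p|\ge N}|\mathcal{C}(p,m)|$, is not summable over $|n|\le N$ uniformly in $N$, so it gives no dominating sequence.
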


\begin{proof}
We first outline the main idea of the proof with the details presented in several steps. First, we claim that the diagonal elements of $\mathcal{C}^{(2N+1)}$ are uniformly bounded in the following sense:
\begin{equation} \label{eq_str_con_proof_1}
\mathcal{C}^{(2N+1)}(i,i)\leq \mathcal{C}(i,i),\quad \text{for all $N\geq 1$ and $-N\leq i\leq N$} .
\end{equation}
Hence, recalling that $\mathcal{C}^{(2N+1)}$ is diagonally dominant (see Lemma \ref{lem:capprop}), we conclude by the Schur test that the operator norm $\|\mathcal{C}^{(2N+1)}\|_{\mathcal{B}(\lz)}$ is uniformly bounded in $N$. This implies that, using the Banach-Steinhaus theorem, it is sufficient to prove the following to conclude the proof of Proposition \ref{prop_str_con}:
\begin{equation}\label{eq_str_con_proof_2}
(\mathcal{C}^{(2N+1)}-\mathcal{C})\bm{v}_{i}\to 0,\quad \text{for all $i\in\mathbb{Z}$} ,
\end{equation}
with $\bm{v}_i$ being the $i$\textsuperscript{th} unit basis. Without loss of generality, we show only the idea for the case $i=0$. Indeed, by direct calculation,
we have
\begin{equation*}
\begin{aligned}
\|(\mathcal{C}^{(2N+1)}-\mathcal{C})\bm{v}_{0}\|_{\lz}^2
&=\sum_{|i|>N}|\mathcal{C}(i,0)|^2 + \sum_{|i|<N}|\mathcal{C}^{(2N+1)}(i,0)-\mathcal{C}(i,0)|^2 \\
&\leq \sum_{|i|>N}|\mathcal{C}(i,0)|^2 + \big(\sum_{|i|<N}|\mathcal{C}^{(2N+1)}(i,0)-\mathcal{C}(i,0)|\big)^2 .
\end{aligned}
\end{equation*}
By the absolute convergence of infinite capacitance elements, as established in Theorem \ref{thm_off_diag_decay}, the first sum converges to zero as $N\to\infty$. The convergence of the second sum is proved using the dominated convergence theorem. In fact, note that
\begin{equation*}
\sum_{|i|<N}|\mathcal{C}^{(2N+1)}(i,0)-\mathcal{C}(i,0)|\leq 
\sum_{|i|<N}|\mathcal{C}^{(2N+1)}(i,0)|+\sum_{i\in\mathbb{Z}}|\mathcal{C}(i,0)|
\overset{(i)}{\leq }\mathcal{C}(i,i)+\sum_{i\in\mathbb{Z}}|\mathcal{C}(i,0)| <\infty ,
\end{equation*}
where the diagonal dominance of $\mathcal{C}^{(N)}$ and \eqref{eq_str_con_proof_1} is applied in (i). Hence, by the dominated convergence theorem, it suffices to show the following pointwise convergence to conclude the proof: 
\begin{equation} \label{eq_str_con_proof_3}
\mathcal{C}^{(2N+1)}(i,j)-\mathcal{C}(i,j) \to 0 ,\quad \text{for all $-N\leq i,j\leq N$} .
\end{equation}
In the sequel, we prove \eqref{eq_str_con_proof_1} and \eqref{eq_str_con_proof_3},  separately in two steps.

{\color{blue}Step 1.} The proof of \eqref{eq_str_con_proof_1} is based on the maximum principle of harmonic functions, as in \cite[Lemma 1.13]{cbms}. In fact, similar to \eqref{eq_finite_cap_pde_def}, we note that $\mathcal{C}(i,j)$ admits the following expression:
\begin{equation*} 
\mathcal{C}(i,j)=-\int_{\partial D_j}\frac{\partial V_i}{\partial \nu}d\sigma,\quad -N\leq i,j\leq N,
\end{equation*}
where $V_i$ solves the following equation:
\begin{equation*}
\left\{
\begin{aligned}
&\Delta V_i = 0\quad  \text{in } \mathbb{R}^3 \setminus \overline{\mathcal{D}}, \\
&V_i = \mathbbm{1}_{\partial D_i}\quad  \text{on } \partial \mathcal{D}, \\
&V_i(x) \to 0\quad  \text{as } |x| \to \infty .
\end{aligned}
\right.
\end{equation*}
This implies, for any $-N\leq i,j\leq N$,
\begin{equation} \label{eq_str_con_proof_4}
\mathcal{C}^{(2N+1)}(i,j)-\mathcal{C}(i,j)=-\int_{\partial D_j}\frac{\partial w_i^{(N,N)}}{\partial \nu}d\sigma,
\end{equation}
where $w_i^{(N,N+k)}$ ($k\geq 0$) solves
\begin{equation} \label{eq_str_con_proof_5}
\left\{
\begin{aligned}
&\Delta w_i^{(N,N+k)} = 0\quad  \text{in } \mathbb{R}^3 \setminus \overline{\mathcal{D}}, \\
&w_i^{(N,N+k)}|_{\partial D_j} = 0, \quad\text{for $|j|\leq N+k$, and } w_i^{(N,N+k)}|_{\partial D_j} = V_i^{(N)}|_{\partial D_j}, \quad\text{for $|j|>N+k$}, \\
&w_i^{(N,N+k)}(x) \to 0\quad  \text{as } |x| \to \infty .
\end{aligned}
\right.
\end{equation}

\begin{figure}
    \centering
    \begin{tikzpicture}[scale=0.6]
%resonators
\foreach \x in {-8,-6,-2,0,2,6,8} { 
        \draw [thick,fill=blue,opacity=0.1] plot [smooth,samples=400, tension=1] coordinates {({0+\x},{0.5}) ({0.2+\x},{0.4}) ({0.5+\x},{0}) ({0.2+\x},{-0.9}) ({0+\x},{-1}) ({-0.2+\x},{-0.9}) ({-0.5+\x},{0}) ({-0.2+\x},{0.4}) ({0+\x},{0.5})};
    }
\node[scale=1,thick] at (4,0) {$\cdots$};
\node[scale=1,thick] at (-4,0) {$\cdots$};
\node[scale=1,thick] at (10,0) {$\cdots$};
\node[scale=1,thick] at (-10,0) {$\cdots$};
%label of resonators
\node[scale=0.5,thick] at ({-8},{-0.2}) {$D_{-N-k-1}$};
\node[scale=0.5,thick] at ({-6},{-0.2}) {$D_{-N-k}$};
\node[scale=0.5,thick] at ({-2},{-0.2}) {$D_{-1}$};
\node[scale=0.5,thick] at ({0},{-0.2}) {$D_{0}$};
\node[scale=0.5,thick] at ({2},{-0.2}) {$D_{1}$};
\node[scale=0.5,thick] at ({6},{-0.2}) {$D_{N+k}$};
\node[scale=0.5,thick] at ({8},{-0.2}) {$D_{N+k+1}$};
%boundary trace
\node[scale=0.7,thick,above] at ({-8},{0.8}) {$V_i|_{\partial D_{-N-k-1}}$};
\node[scale=0.7,thick,above] at ({-6},{0.8}) {$0$};
\node[scale=0.7,thick,above] at ({-2},{0.8}) {$0$};
\node[scale=0.7,thick,above] at ({0},{0.8}) {$0$};
\node[scale=0.7,thick,above] at ({2},{0.8}) {$0$};
\node[scale=0.7,thick,above] at ({6},{0.8}) {$0$};
\node[scale=0.7,thick,above] at ({8},{0.8}) {$V_i|_{\partial D_{N+k+1}}$};
\end{tikzpicture}
    \caption{The value above the chain refers to the trace of $w_i^{(N,N+k)}$ on the respective resonator. By construction, $w_i^{(N,N+k)}$ vanishes on the surface of the first $2(N+k)+1$ resonators.}
    \label{fig_trace_aux_function}
\end{figure}

The trace of $w_i^{(N,N+k)}$ on $\partial \mathcal{D}$ is sketched in Figure \ref{fig_trace_aux_function} for illustration. In particular, we note that
\begin{equation*}
w_i^{(N,N+k)}\big|_{\partial D_j}-w_i^{(N,N+k+1)}\big|_{\partial D_j}=\left\{
\begin{aligned}
&V_i^{(N)}|_{\partial D_j}\in \big[0,\min\big\{1,\frac{C}{|j|}\big\}\big],\quad |j|=N+k, \\
&0,\quad \text{otherwise},
\end{aligned}
\right.
\end{equation*}
for some constant $C>0$ depending only on $N$. By the maximum principle, this implies that
\begin{equation*}
\frac{\partial w_i^{(N,N+k)}}{\partial \nu}\big|_{\partial D_j}
\geq \frac{\partial w_i^{(N,N+k+1)}}{\partial \nu}\big|_{\partial D_j}.
\end{equation*}
Hence, by \eqref{eq_str_con_proof_4}, it follows that
\begin{equation} \label{eq_str_con_proof_6}
\mathcal{C}^{(2N+1)}(i,j)-\mathcal{C}(i,j)\leq -\lim_{k\to\infty} \int_{\partial D_j}\frac{\partial w_i^{(N,N+k)}}{\partial \nu}d\sigma .
\end{equation}
As $k\to\infty$, it is clear that the normal derivative $\frac{\partial w_i^{(N,N+k)}}{\partial \nu}$ on the fixed boundary $\partial D_i$ converges to zero since $\text{dist}(D_i,\cup_{|j|> N+k} D_j)\to\infty$. Therefore, the right side of \eqref{eq_str_con_proof_6} is equal to zero, which leads to \eqref{eq_str_con_proof_1} by taking $j=i$.

{\color{blue}Step 2.} The proof of \eqref{eq_str_con_proof_3} is based on the same strategy. From \eqref{eq_str_con_proof_5} and the fact that $V_i^{(N)}|_{\partial D_j}<1$ for $|j|\geq N+k$, we know that
\begin{equation*}
w_i^{(N,N+k)}\Big|_{\partial \mathcal{D}} \leq \Big(w_i^{(N,N+k+1)}+V_{N+k}+V_{-N-k}\Big)\Big|_{\partial \mathcal{D}}.
\end{equation*}
Hence, applying the maximum principle, we obtain
\begin{equation*}
\frac{\partial w_i^{(N,N+k)}}{\partial \nu}\big|_{\partial D_j}
\leq \frac{\partial w_i^{(N,N+k+1)}}{\partial \nu}\big|_{\partial D_j}+\frac{\partial V_{N+k}}{\partial \nu}\big|_{\partial D_j}+\frac{\partial V_{-N-k}}{\partial \nu}\big|_{\partial D_j}.
\end{equation*}
Compared with \eqref{eq_str_con_proof_6}, this provides a lower bound of $\mathcal{C}^{(2N+1)}(i,j)-\mathcal{C}(i,j)$:
\begin{equation} \label{eq_str_con_proof_7}
\mathcal{C}^{(2N+1)}(i,j)-\mathcal{C}(i,j)\geq \sum_{|p|\geq N} \mathcal{C}(p,j) .
\end{equation}
Note that the right side of \eqref{eq_str_con_proof_7} converges to zero as $N\to\infty$, thanks to Theorem \ref{thm_off_diag_decay}. This estimate, together with \eqref{eq_str_con_proof_6}, completes the proof of \eqref{eq_str_con_proof_3}.
\end{proof}

\begin{remark}
We note that the strong convergence, as presented in Proposition \ref{prop_str_con}, is sharp. In fact, choosing test functions $u_N$ supported on $\mathbb{Z}\backslash \{-N,\cdots,N\}$ such that $\|\mathcal{C}u_N\|_{\lz}\equiv 1$, which exists since $\mathcal{C}$ is periodic, we observe
\begin{equation*}
\|(\mathcal{C}-\mathcal{C}^{(2N+1)})u_N\|_{\lz}=\|\mathcal{C}u_N\|_{\lz}\equiv 1 .
\end{equation*}
This implies that the operator $\mathcal{C}^{(2N+1)}$ does not converge to $\mathcal{C}$ in the operator norm; in other words, the convergence in the strong operator topology is the best that can be expected.
\end{remark}

\subsection{Convergence of the spectrum}

The following result follows directly from Proposition \ref{prop_str_con}. 
\begin{corollary} \label{corol_modified_str_con}
Let $\tilde{\mathcal{C}}^{(2N+1)}:=(\mathcal{V}^{(2N+1)})^{\frac{1}{2}}\mathcal{C}^{(2N+1)}(\mathcal{V}^{(2N+1)})^{\frac{1}{2}}$ and $\tilde{\mathcal{C}}:=\mathcal{V}^{\frac{1}{2}}\mathcal{C}\mathcal{V}^{\frac{1}{2}}$ be the generalised capacitance operators. Then, we have
\[
    \tilde{\mathcal{C}}^{(2N+1)} \to \tilde{\mathcal{C}} \text{ strongly,}\quad \text{as }N\to \infty.
\]
\end{corollary}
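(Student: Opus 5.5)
We will deduce Corollary \ref{corol_modified_str_con} from Proposition \ref{prop_str_con} using the fact that strong convergence is preserved under products with uniformly bounded, strongly convergent sequences. We interpret $\mathcal{V}$ as the diagonal multiplication operator on $\lz$ with entries $v_i^2/|D_i|$. We interpret $\mathcal{V}^{(2N+1)}$ as its truncation $\iota_N\iota_N^*\mathcal{V}\iota_N\iota_N^*$, so that $(\mathcal{V}^{(2N+1)})^{\frac12}=P_N\mathcal{V}^{\frac12}P_N$ with $P_N:=\iota_N\iota_N^*$ the coordinate projection onto $\{-N,\dots,N\}$. We assume, as is implicit in the physical setting, that the material parameters are bounded, i.e. $\sup_i v_i^2/|D_i|<\infty$. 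Then $\mathcal{V}^{\frac12}$ is bounded, and $\|(\mathcal{V}^{(2N+1)})^{\frac12}\|\le\|\mathcal{V}^{\frac12}\|$ uniformly in $N$.

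\textbf{Step 1.} Show $(\mathcal{V}^{(2N+1)})^{\frac12}\to\mathcal{V}^{\frac12}$ strongly. Since $P_N\to \mathbbm{1}$ strongly and $P_N$ commutes with the diagonal operator $\mathcal{V}^{\frac12}$, we have $(\mathcal{V}^{(2N+1)})^{\frac12}\bm u=\mathcal{V}^{\frac12}P_N\bm u\to\mathcal{V}^{\frac12}\bm u$ for every $\bm u\in\lz$.

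\textbf{Step 2.} Recall from the proof of Proposition \ref{prop_str_con} that $\sup_N\|\mathcal{C}^{(2N+1)}\|_{\mathcal{B}(\lz)}<\infty$. This follows from \eqref{eq_str_con_proof_1}, diagonal dominance in Lemma \ref{lem:capprop}, the Schur test, and the translation invariance $\mathcal{C}(i,i)=\mathcal{C}(0,0)$.

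\textbf{Step 3.} Use the standard lemma: if $A_N\to A$ and $B_N\to B$ strongly with $\sup_N\|A_N\|<\infty$, then $A_NB_N\to AB$ strongly. This holds because
\[
\|A_NB_N\bm u-AB\bm u\|\le \sup_M\|A_M\|\,\|(B_N-B)\bm u\|+\|(A_N-A)B\bm u\|.
\]
Apply it twice: first with $A_N=\mathcal{C}^{(2N+1)}$ and $B_N=(\mathcal{V}^{(2N+1)})^{\frac12}$, then with $A_N=(\mathcal{V}^{(2N+1)})^{\frac12}$ and $B_N=\mathcal{C}^{(2N+1)}(\mathcal{V}^{(2N+1)})^{\frac12}$. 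This yields $\tilde{\mathcal{C}}^{(2N+1)}\to\tilde{\mathcal{C}}$ strongly.

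There is no real obstacle here. The only point needing care is the uniform bound on $\|\mathcal{C}^{(2N+1)}\|$, which already comes out of the proof of Proposition \ref{prop_str_con}; the Banach–Steinhaus theorem also gives it directly from strong convergence. The boundedness of $\mathcal{V}$ is the only assumption we must state explicitly.
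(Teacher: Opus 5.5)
Your proposal is correct. It is exactly the argument the paper leaves implicit when it says the corollary ``follows directly from Proposition \ref{prop_str_con}'': strong convergence of $\mathcal{C}^{(2N+1)}$, the uniform bound $\sup_N\|\mathcal{C}^{(2N+1)}\|<\infty$ (from \eqref{eq_str_con_proof_1}, diagonal dominance and the Schur test), the strong convergence $P_N\mathcal{V}^{\frac12}P_N\to\mathcal{V}^{\frac12}$, and the standard product lemma. You are also right to state the boundedness of $\mathcal{V}$ explicitly, since the paper assumes it tacitly (for instance $v_i\in[1-\varepsilon,1+\varepsilon]$ in Section \ref{sec:anderson}).
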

We briefly discuss how Corollary \ref{corol_modified_str_con} implies the convergence of the spectrum. The central problem is the following:
\begin{center}
    Given $\lambda_0\in \sigma(\tilde{\mathcal{C}})$, can $\lambda_0$ be approximated by the spectrum of $\tilde{\mathcal{C}}^{(2N+1)}$?
\end{center}
The answer is, as we shall see, positive. In particular, the argument is independent of the nature of $\lambda_0$, either in the discrete or continuous part of $\sigma(\tilde{\mathcal{C}})$. Indeed, since $\tilde{\mathcal{C}}$ and $\tilde{\mathcal{C}}^{(2N+1)}$ are self-adjoint, the strong convergence in Corollary \ref{corol_modified_str_con} implies that the strong resolvent convergence
\begin{equation} \label{eq_spec_con_1}
(\tilde{\mathcal{C}}^{(2N+1)}-\lambda)^{-1}\to (\tilde{\mathcal{C}}-\lambda)^{-1}\quad \text{strongly},\quad \forall \lambda\in U_b\cap\rho(\tilde{\mathcal{C}}),
\end{equation}
where $U_b:=\{\lambda\in\mathbb{C}:\, \sup_{N\geq 0}\|(\tilde{\mathcal{C}}^{(2N+1)}-\lambda)^{-1}\|_{\mathcal{B}(\lz)}<\infty\}$ and $\rho(\tilde{\mathcal{C}})$ is the resolvent set of $\tilde{\mathcal{C}}$; see \cite[Chapter VIII, Section 1, Corollary 1.6]{kato2013perturbation}, in which \eqref{eq_spec_con_1} is referred to as \textit{the generalised strong convergence}. Following directly by this resolvent convergence, we know that the spectrum of $\tilde{\mathcal{C}}^{(2N+1)}$ is lower semi-continuous in the following sense:
\begin{center}
For any open set $O\ni 
\lambda_0$, there exists $N_{O}>0$ such that $O\cap \sigma(\tilde{\mathcal{C}}^{(2N+1)})\neq \emptyset$ for all $N>N_{O}$.
\end{center}
Hence, choosing a sequence of neighbourhoods $(\lambda_0-1/k,\lambda_0+1/k)$, we conclude that the following result holds. 
\begin{corollary} \label{corol_lower_con_spectra}
Let $\lambda_0\in \sigma(\tilde{\mathcal{C}})$. There exists a subsequence $\tilde{\mathcal{C}}^{(2N_k+1)}$ whose spectrum approximates $\lambda_0$ in the sense that 
\begin{equation*}
\exists \lambda_k \in \sigma(\tilde{\mathcal{C}}^{(2N_k+1)}) \quad \text{s.t.}\quad \lambda_k\to \lambda_0 ,
\end{equation*}
as $k\to\infty$.
\end{corollary}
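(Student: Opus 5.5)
The plan is to obtain Corollary \ref{corol_lower_con_spectra} directly from the lower semi-continuity of the spectrum, so the only real work is to justify that semi-continuity through the generalised strong resolvent convergence \eqref{eq_spec_con_1}.

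\textbf{Step 1 (strong resolvent convergence on the real axis).} Corollary \ref{corol_modified_str_con} gives $\tilde{\mathcal{C}}^{(2N+1)}\to\tilde{\mathcal{C}}$ strongly. All these operators are self-adjoint and bounded, with norms bounded uniformly in $N$ by the Schur-test bound from the proof of Proposition \ref{prop_str_con}. For nonreal $\lambda$ the resolvents satisfy $\|(\tilde{\mathcal{C}}^{(2N+1)}-\lambda)^{-1}\|\le|\Im\lambda|^{-1}$, so $\lambda\in U_b$. The second resolvent identity then yields strong convergence of the resolvents at such $\lambda$. This is \eqref{eq_spec_con_1}, via Kato's generalised strong convergence.

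\textbf{Step 2 (lower semi-continuity).} Suppose the conclusion fails for some $\lambda_0\in\sigma(\tilde{\mathcal{C}})$ and some $\epsilon>0$. Then infinitely many $N$ satisfy $\sigma(\tilde{\mathcal{C}}^{(2N+1)})\cap(\lambda_0-\epsilon,\lambda_0+\epsilon)=\emptyset$.

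\begin{itemize}
\item Passing to that subsequence, self-adjointness gives $\|(\tilde{\mathcal{C}}^{(2N+1)}-\lambda)^{-1}\|\le 2/\epsilon$ for every $\lambda$ with $|\lambda-\lambda_0|<\epsilon/2$. Hence this disc lies in $U_b$ along the subsequence.
\item To avoid invoking Kato's theorem in the degenerate region, I would argue directly with Weyl sequences. Since $\lambda_0\in\sigma(\tilde{\mathcal{C}})$ and $\tilde{\mathcal{C}}$ is self-adjoint, there is a unit vector $u$ with $\|(\tilde{\mathcal{C}}-\lambda_0)u\|<\epsilon/4$.
\item By strong convergence, $\|(\tilde{\mathcal{C}}^{(2N+1)}-\lambda_0)u\|<\epsilon/2$ for all large $N$.
\item The spectral theorem then forces $\mathrm{dist}(\lambda_0,\sigma(\tilde{\mathcal{C}}^{(2N+1)}))<\epsilon/2$, contradicting the choice of the subsequence.
\end{itemize}

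This argument actually gives, for every $\epsilon>0$, some $N_\epsilon$ with $\sigma(\tilde{\mathcal{C}}^{(2N+1)})\cap(\lambda_0-\epsilon,\lambda_0+\epsilon)\ne\emptyset$ for all $N>N_\epsilon$.

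\textbf{Step 3 (extracting the sequence).} Take $\epsilon=1/k$ and choose $N_k>\max(N_{1/k},N_{k-1})$, so that the $N_k$ are strictly increasing. Then pick $\lambda_k\in\sigma(\tilde{\mathcal{C}}^{(2N_k+1)})$ with $|\lambda_k-\lambda_0|<1/k$. This gives $\lambda_k\to\lambda_0$, which is the statement. In fact the argument yields the stronger statement that every $N$ large enough admits such approximants, not merely a subsequence.

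The only potential obstacle is making the semi-continuity rigorous. The Weyl-vector argument in Step 2 sidesteps it entirely, since it needs nothing beyond strong convergence and self-adjointness, so I expect no genuine difficulty.
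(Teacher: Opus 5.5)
Your proposal is correct. Its overall skeleton matches the paper's: lower semi-continuity of the spectrum, followed by the neighbourhoods $(\lambda_0-1/k,\lambda_0+1/k)$ to extract the $\lambda_k$. The difference is how you justify the semi-continuity.

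The paper gets it from Kato's theory. Strong convergence of the self-adjoint operators gives the generalised strong resolvent convergence \eqref{eq_spec_con_1} on $U_b\cap\rho(\tilde{\mathcal{C}})$, and lower semi-continuity is then quoted as a consequence of that resolvent convergence. You bypass resolvents entirely, using three ingredients:
\begin{itemize}
\item an approximate eigenvector $u$ of $\tilde{\mathcal{C}}$ at $\lambda_0$ (Weyl criterion);
\item strong convergence applied to this single vector;
\item the spectral-theorem bound $\operatorname{dist}(\lambda_0,\sigma(B))\le\|(B-\lambda_0)u\|$ for self-adjoint $B=\tilde{\mathcal{C}}^{(2N+1)}$.
\end{itemize}

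Your route is shorter and self-contained. It needs no uniform norm bound, no set $U_b$ and no resolvent identity. It also gives $\operatorname{dist}(\lambda_0,\sigma(\tilde{\mathcal{C}}^{(2N+1)}))\to 0$ along the whole sequence. The paper's ``for all $N>N_O$'' formulation gives the same, so neither approach actually needs the subsequence in the statement.

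Your argument is the qualitative version of the test-vector argument the paper itself uses to prove Proposition~\ref{prop_upp_con_spectrum}. There, an exact eigenvector with controlled decay replaces your approximate one and yields a rate. The paper's route, in exchange, places the result inside Kato's framework of generalised strong convergence, which it also uses for the upper semi-continuity on the discrete spectrum.

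Finally, your Step 1 and the first bullet of Step 2 are never used. The Weyl-vector argument needs neither resolvent convergence nor a contradiction set-up, so you can drop both.
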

This recovers the convergence results for the defect modes and the essential spectrum that are proved in \cite{ammari.davies.ea2025Spectral, ammari.davies.ea2023Convergence}.

Moreover, when $\lambda_0$ belongs to the discrete spectrum of $\tilde{\mathcal{C}}$, we have the following stronger result on the convergence of the spectrum. Compared with Corollary \ref{corol_lower_con_spectra}, one should notice that the upper semi-continuity holds for the discrete spectrum, as indicated in \cite[Chapter VIII, Section 1.4]{kato2013perturbation}.

\begin{proposition} \label{prop_upp_con_spectrum}
Let $\sigma_{pp}(\tilde{\mathcal{C}})$ be the pure point spectrum of $\tilde{\mathcal{C}}$ and let $\lambda_0\in \sigma_{pp}(\tilde{\mathcal{C}})$. Then the spectrum of $\tilde{\mathcal{C}}^{(2N+1)}$ approximates $\lambda_0$ in the sense that there exist $\lambda_N \in \sigma(\tilde{\mathcal{C}}^{(2N+1)})$ and $C>0$ such that
\begin{equation*}
|\lambda_N-\lambda_0|<\frac{C}{\log^{\frac{1}{2}} N} ,
\end{equation*}
as $N\to\infty$.
\end{proposition}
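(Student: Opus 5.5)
The plan is a quasimode (Weyl sequence) argument using a truncated eigenvector. For a self-adjoint operator $T$ and a unit vector $u$ we have $\operatorname{dist}(\lambda_0,\sigma(T))\le \|(T-\lambda_0)u\|$. So it suffices to build unit vectors $u_N$ supported in $\{-N,\dots,N\}$ with
\[
\|(\tilde{\mathcal{C}}^{(2N+1)}-\lambda_0)u_N\|_{\lz}\le C\log^{-1/2}N .
\]
Let $\psi\in\lz$ with $\|\psi\|=1$ satisfy $\tilde{\mathcal{C}}\psi=\lambda_0\psi$. Take $u_N$ to be a smooth cutoff $\chi_N\psi$ (normalised), where $\chi_N(j)=\chi(j/N)$ equals $1$ on $|j|\le N/2$ and vanishes for $|j|\ge N$. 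Split
\[
(\tilde{\mathcal{C}}^{(2N+1)}-\lambda_0)\chi_N\psi=(\tilde{\mathcal{C}}^{(2N+1)}-\tilde{\mathcal{C}})\chi_N\psi+[\tilde{\mathcal{C}},\chi_N]\psi+\chi_N(\tilde{\mathcal{C}}-\lambda_0)\psi .
\]
The last term vanishes.

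\textbf{Step 1 (truncation term).} By the bounds \eqref{eq_str_con_proof_6} and \eqref{eq_str_con_proof_7}, for $|i|,|j|\le N$,
\[
0\le \mathcal{C}(i,j)-\mathcal{C}^{(2N+1)}(i,j)\le \sum_{|p|\ge N}|\mathcal{C}(p,j)| .
\]
By Theorem \ref{thm_off_diag_decay}, $\sum_{|p|>M}|\mathcal{C}(p,0)|\lesssim 1/\log M$. For $|j|\le N/2$ the right-hand side is therefore $\lesssim 1/\log N$. The first term of the splitting is then bounded via the Schur test: its entries are nonnegative, with row and column sums controlled by $\sup_j\sum_{|p|\ge N}|\mathcal{C}(p,j)|$ (using the bounded diagonally dominant structure). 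For sites with $N/2<|j|\le N$, I use instead the mass of $\psi$ there, which tends to $0$. I expect the rate here to come out as $\mathcal{O}(1/\log N)$ plus a tail of $\psi$. To control that tail I need a decay rate for $\psi$. The plan is a Combes--Thomas type argument: since $\lambda_0$ is isolated and $\mathcal{C}$ has summable decay $\sim 1/(|k|\log^2|k|)$, $\psi$ inherits a decay of the same polynomial-logarithmic type, giving $\sum_{|j|>N/2}|\psi_j|^2\lesssim 1/\log N$ at least. $\mathcal{V}$ is diagonal and bounded above and below, so it does not affect any of these estimates.

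\textbf{Step 2 (commutator term).} We have
\[
([\tilde{\mathcal{C}},\chi_N]\psi)_i=\sum_j\tilde{\mathcal{C}}(i,j)\bigl(\chi_N(j)-\chi_N(i)\bigr)\psi_j ,\qquad |\chi_N(j)-\chi_N(i)|\le\min\{1,|i-j|/N\}.
\]
By Schur's test the norm is at most $\sum_k|\mathcal{C}(k,0)|\min\{1,|k|/N\}$. Split this sum at $|k|=\sqrt N$: the part with $|k|\le\sqrt N$ is $\lesssim N^{-1/2}$, and the part with $|k|>\sqrt N$ is $\lesssim 1/\log N$. Combining everything gives $\|(\tilde{\mathcal{C}}^{(2N+1)}-\lambda_0)u_N\|\lesssim 1/\log N$, or $\log^{-1/2}N$ if only $\ell^2$-type tail bounds (square roots of the summed tails) are available. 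This yields the claim with the stated exponent $1/2$.

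\textbf{Main obstacle.} The hardest part is the decay of the eigenvector near the truncation boundary. If Combes--Thomas fails for such slow decay, I would fall back on a weaker, $\psi$-independent approach. Bound $\|(\tilde{\mathcal{C}}^{(2N+1)}-\tilde{\mathcal{C}})u\|^2\le\|A\|\,\langle u,Au\rangle$ for the nonnegative difference matrix $A$ (nonnegative by \eqref{eq_str_con_proof_6}, \eqref{eq_str_con_proof_7}), and control $\langle u,Au\rangle$ by the $1/\log N$ tail sums. The square root in this bound is exactly what produces the exponent $\tfrac12$.
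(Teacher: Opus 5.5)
Your architecture is the right one: a quasimode built from the eigenvector, the entrywise bounds \eqref{eq_str_con_proof_6}--\eqref{eq_str_con_proof_7}, and eigenvector decay as the key input. Your commutator estimate in Step~2 is also correct. The paper simply avoids the commutator by applying $\tilde{\mathcal{C}}^{(2N+1)}-\tilde{\mathcal{C}}$ to the exact eigenvector and cutting sharply at $N/2$.

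The gap is in Step~1, which is exactly where the rate is decided. Write $A:=\tilde{\mathcal{C}}^{(2N+1)}-\tilde{\mathcal{C}}$. What you actually know is:
\begin{itemize}
\item[(a)] $|A(i,j)|\lesssim\sum_{|p|\ge N}|\mathcal{C}(p,j)|\lesssim 1/\log N$ for $|i|\le N$, $|j|\le N/2$;
\item[(b)] $\mathcal{O}(1)$ row and column sums, from diagonal dominance, \eqref{eq_str_con_proof_1} and summability of $\mathcal{C}$.
\end{itemize}
Summing (a) over the $2N+1$ rows loses a factor $N$. With these tools the row sums $\sum_{|j|\le N/2}|A(i,j)|$ for $|i|$ close to $N$ are only $\mathcal{O}(1)$. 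So the Schur test does not give a small norm: entrywise smallness plus bounded row and column sums does not imply a small $\ell^2$ operator norm ($\varepsilon$ times the all-ones $\varepsilon^{-1}\times\varepsilon^{-1}$ matrix has norm $1$). Your claimed $\mathcal{O}(1/\log N)$ is therefore unsupported, and the exponent $\tfrac12$ at the end is asserted rather than derived.

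The paper closes this step using that the eigenvector lies in $\ell^1$. For $v=u^{(N)}$,
\[
\|Av\|_{\ell^2}^2\le\|Av\|_{\ell^\infty}\|Av\|_{\ell^1}\le\Bigl(\sup_{n,\;|m|<N/2}|A(n,m)|\Bigr)\Bigl(\sup_{m}\sum_{n}|A(n,m)|\Bigr)\|v\|_{\ell^1}^2\lesssim\frac{\|u\|_{\ell^1}^2}{\log N},
\]
which is the weighted Cauchy--Schwarz argument of \eqref{eq_upp_con_spectrum_proof_8}--\eqref{eq_upp_con_spectrum_proof_12}. Taking the square root of this $\ell^\infty$--$\ell^1$ interpolation is where $\log^{-1/2}N$ really comes from.

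Hence the eigenvector decay cannot be deferred: you need $\psi\in\ell^1$, not just $\sum_{|j|>N/2}|\psi_j|^2\lesssim 1/\log N$. Combes--Thomas in its usual form conjugates by exponential weights and does not apply to kernels decaying like $1/(|k|\log^2|k|)$. The paper instead proves Lemma \ref{lem_inverse_closed} via Baskakov--Gr\"ochenig inverse-closedness of the algebra $\mathcal{E}_g$ with $g(n)=(1+|n|)\log^2(e+|n|)$. It applies this to the resolvent on a contour around the isolated $\lambda_0$ and uses the Riesz projection to get $|u(n)|\lesssim 1/(|n|\log^2|n|)$.

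Your fallback does not bypass this. The inequality $\|Au\|^2\le\|A\|\langle u,Au\rangle$ needs $A\ge 0$ as an operator. Entrywise nonnegativity inside the block does not give that, since a symmetric matrix with nonnegative entries need not be positive semidefinite; it would require a separate Dirichlet-principle argument. Even granting it, bounding $\langle u,Au\rangle$ by $1/\log N$ from entrywise information again needs $\|u\|_{\ell^1}<\infty$, so the fallback is not $\psi$-independent.

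A $\psi$-independent bound for the $|j|\le N/2$ part is available if you add the identity $\sum_i\mathcal{C}(i,j)=\lim_{\alpha\to0}\widehat{\mathcal{C}}^{\alpha}=0$, which follows from Theorem \ref{thm_cap_def_momentum_estimate}. Together with positivity of the finite row sums, it bounds the column sums of $A$ over $|j|\le N/2$ by $\lesssim\sum_{|i|>N}|\mathcal{C}(i,j)|\lesssim 1/\log N$. Schur with $\mathcal{O}(1)$ row sums then gives $\log^{-1/2}N$. Even so, a rate for the tail $\|\psi-\chi_N\psi\|$ still requires the decay lemma.
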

The proof is based on the following Combes-Thomas-type estimate for discrete operators with polynomially off-diagonal decaying kernels.
\begin{lemma} \label{lem_inverse_closed}
Suppose that $z\in \rho(\tilde{\mathcal{C}})$. Then, the resolvent $\mathcal{R}_z=(\tilde{\mathcal{C}}-z)^{-1}$ admits the same off-diagonal decay estimate as $\tilde{\mathcal{C}}$, i.e., for $|n-m|$ being large
\begin{equation} \label{eq_inverse_closed_1}
\big|\mathcal{R}_z(n,m)\big|\leq \frac{C}{|n-m|\log^2|n-m|},
\end{equation}
where $C>0$ depends only on $\text{dist}(z,\sigma(\tilde{\mathcal{C}}))$. Consequently, the projection $P_{\lambda_0}=\mathbbm{1}_{\{\lambda_0\}}(\tilde{\mathcal{C}})$ satisfies
\begin{equation} \label{eq_inverse_closed_2}
\big|P_{\lambda_0}(n,m)\big|\leq \frac{C}{|n-m|\log^2|n-m|},
\end{equation}
for some $C>0$ and $|n-m|$ being large.
\end{lemma}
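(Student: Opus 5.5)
Our plan is to treat the decay class as a weighted operator algebra and to show that it is inverse-closed, in the spirit of Jaffard/Sjöstrand. Set $w(k):=(1+|k|)\log^2(2+|k|)$ and let $\mathcal{A}_w$ be the set of matrices $A$ on $\lz$ with
\[
\|A\|_w:=\sup_{n,m}w(n-m)|A(n,m)|<\infty .
\]
By Theorem \ref{thm_off_diag_decay}, $\mathcal{C}\in\mathcal{A}_w$. Since $\mathcal{V}$ is diagonal and bounded, the remark after Theorem \ref{thm_off_diag_decay} gives $\tilde{\mathcal{C}}\in\mathcal{A}_w$ as well. The first step is to check that $\mathcal{A}_w$ is a Banach algebra up to a constant, i.e. $\|AB\|_w\le c\|A\|_w\|B\|_w$. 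This follows from the weight property
\[
\sum_k \frac{1}{w(n-k)\,w(k-m)}\le \frac{c}{w(n-m)} .
\]
To prove it, split the sum into the region where $|n-k|\ge|n-m|/2$ and the region where $|k-m|\ge|n-m|/2$. In each region the far factor is bounded by $2/w(n-m)$ (up to constants), and the near factor is summable because $\sum_k 1/w(k)<\infty$. This is exactly where the $\log^2$ (summability) is essential. One also needs $\mathcal{A}_w\subset\mathcal{B}(\lz)$, which follows from the Schur test, again by summability.

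The main step is inverse-closedness: if $A\in\mathcal{A}_w$ and $A$ is invertible on $\lz$, then $A^{-1}\in\mathcal{A}_w$. Apply this to $A=\tilde{\mathcal{C}}-z$. I would prove it directly as follows. For self-adjoint $\tilde{\mathcal{C}}$, write
\[
B:=(A^*A)/\|A\|^2 = I-R \quad\text{with } \|R\|\le 1-\delta<1 ,
\]
where $\delta$ depends on $\mathrm{dist}(z,\sigma(\tilde{\mathcal{C}}))$. Then $A^{-1}=\|A\|^{-2}\sum_j R^jA^*$. The difficulty is that the algebra constant $c$ makes $\|R^j\|_w$ grow like $c^j\|R\|_w^j$, so a naive Neumann series diverges. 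I would instead use Brandenburg's trick or a Barnes-type splitting: decompose $R=R_L+R_{\text{far}}$, where $R_L$ is the band-limited part (entries with $|n-m|\le L$) and $R_{\text{far}}$ is the rest. Then for $j\le J\sim\log$-scale we have
\[
|R^j(n,m)|\le (\text{operator-norm bound})+\text{tail contributions}.
\]
The band part contributes nothing once $|n-m|>jL$. Each tail contribution contains at least one far factor, bounded via the convolution property by $j\,c'\,\|R\|^{j-1}/w(n-m)$, using the operator norm on the remaining factors; this is the Sjöstrand/Jaffard mechanism for subconvolutive weights. Summing over $j$ with the geometric factor $(1-\delta)^{j}$ gives $|A^{-1}(n,m)|\le C/w(n-m)$. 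I expect this bookkeeping, i.e. getting $j\|R\|^{j-1}$ rather than $c^j$, to be the main obstacle. Carrying it out requires the estimate on a single far factor in a product with $\ell^2$-bounded operators, and I would handle this through the weight property combined with Schur-type row-sum bounds. If this fails, I would fall back on citing Sjöstrand's/Gröchenig–Leinert's theorem that algebras with subconvolutive, GRS-satisfying weights are inverse-closed in $\mathcal{B}(\ell^2)$. The weight $w$ is polynomially bounded, so it satisfies the GRS condition.

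For the projection estimate \eqref{eq_inverse_closed_2}, assume as usual that $\lambda_0$ is an isolated eigenvalue, and take a small circle $\Gamma$ around $\lambda_0$ at positive distance from $\sigma(\tilde{\mathcal{C}})$. Riesz's formula gives
\[
P_{\lambda_0}=-\frac{1}{2\pi i}\oint_\Gamma \mathcal{R}_z\,dz .
\]
By the first part, the constant in \eqref{eq_inverse_closed_1} depends only on $\mathrm{dist}(z,\sigma(\tilde{\mathcal{C}}))$, which is bounded below uniformly on $\Gamma$. Integrating the entrywise bound over $\Gamma$ therefore yields \eqref{eq_inverse_closed_2}.
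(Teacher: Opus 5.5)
This is essentially the paper's proof. The paper places $\tilde{\mathcal{C}}$ in the class $\mathcal{E}_g$ with $g(n)=(1+|n|)\log^2(e+|n|)$, checks submultiplicativity and subconvolutivity, cites the generalised Baskakov inverse-closedness theorem (your fallback), and then applies the Riesz formula; so your direct Neumann-series route is unnecessary, and as sketched it would not close, because operator-norm bounds on the non-far factors carry no off-diagonal decay. Two small fixes: use $\log^2(e+|k|)$ rather than $\log^2(2+|k|)$, since $w(0)=\log^2 2<1$ breaks the normalisation and submultiplicativity that the cited theorems assume; and justify uniformity of the bound over $\Gamma$ by continuity of inversion in the Banach algebra $\mathcal{A}_w$, since a qualitative inverse-closedness theorem does not by itself give a constant depending only on $\mathrm{dist}(z,\sigma(\tilde{\mathcal{C}}))$.
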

The proof of Lemma \ref{lem_inverse_closed} is based on the generalised Baskakov inverse-closedness theorem, presented in Appendix \ref{sec:app_D}. Now, we apply Lemma \ref{lem_inverse_closed} to prove Proposition \ref{prop_upp_con_spectrum}.

\begin{proof}[Proof of Proposition \ref{prop_upp_con_spectrum}]
{\color{blue}Step 1.} Since $\lambda_0\in \sigma_{pp}(\tilde{\mathcal{C}})$, we can select a normalised eigenfunction $u\in \ell^2(\mathbb{Z})$ associated with $\lambda_0$. Choosing $n_0\in \mathbb{Z}$ such that $u$ is peaked at $n_0$ and assume $n_0=0$ for simplicity, i.e., $|u(0)|=\|u\|_{\ell^{\infty}(\mathbb{Z})}$, which exists because $\ell^{2}(\mathbb{Z})\subset \ell^{\infty}(\mathbb{Z})$, the estimate \eqref{eq_inverse_closed_2} implies that $u_0$ decays as $|n|\to\infty$:
\begin{equation} \label{eq_upp_con_spectrum_proof_3}
|u(n)|=\frac{\big|P_{\lambda_0}(n,0)\big|}{|u(0)|}
\leq \frac{C\|u\|_{\ell^{\infty}(\mathbb{Z})}^{-1}}{|n|\log^2|n|},
\end{equation}
where in the first equality we have assumed without loss of generality that $P_{\lambda_0}$ is of rank one. We will use $u$ as a test function for the spectrum of the finite capacitance operator and show that, as $N\to\infty$,
\begin{equation} \label{eq_upp_con_spectrum_proof_1}
\|(\tilde{\mathcal{C}}^{(2N+1)}-\lambda_0)u\|_{\ell^2(\mathbb{Z})}\leq \frac{C}{\log^{\frac{1}{2}} N} ,
\end{equation}
by which it follows that
\begin{equation*}
\text{dist}\big(\sigma(\tilde{\mathcal{C}}^{(2N+1)}),\lambda_0\big)\leq \frac{C}{\log^{\frac{1}{2}} N},
\end{equation*}
and hence the proof of Proposition \ref{prop_upp_con_spectrum} is complete.

{\color{blue}Step 2.} To prove \eqref{eq_upp_con_spectrum_proof_1}, we introduce the truncation of $u$ defined as follows:
\begin{equation} \label{eq_cut_off}
u^{(N)}(n)=\left\{
\begin{aligned}
&u(n),\quad |n|<\frac{N}{2},\\
&0,\quad \text{otherwise.}
\end{aligned}
\right.
\end{equation}
Then the left side of \eqref{eq_upp_con_spectrum_proof_1} is decomposed as two parts
\begin{equation} \label{eq_upp_con_spectrum_proof_2}
\begin{aligned}
\|(\tilde{\mathcal{C}}^{(2N+1)}-\lambda_0)u\|_{\ell^2(\mathbb{Z})}
&\leq \|(\tilde{\mathcal{C}}^{(2N+1)}-\tilde{\mathcal{C}})u\|_{\ell^2(\mathbb{Z})} \\
&\leq \|(\tilde{\mathcal{C}}^{(2N+1)}-\tilde{\mathcal{C}})u^{(N)}\|_{\ell^2(\mathbb{Z})}
+\|(\tilde{\mathcal{C}}^{(2N+1)}-\tilde{\mathcal{C}})(u-u^{(N)})\|_{\ell^2(\mathbb{Z})} \\
&\leq \|(\tilde{\mathcal{C}}^{(2N+1)}-\tilde{\mathcal{C}})u^{(N)}\|_{\ell^2(\mathbb{Z})} \\
&\quad +\Big(\|\tilde{\mathcal{C}}^{(2N+1)} \|_{\mathcal{B}(\ell^2(\mathbb{Z}))}+\|\tilde{\mathcal{C}} \|_{\mathcal{B}(\ell^2(\mathbb{Z}))}\Big)\|u-u^{(N)}\|_{\ell^2(\mathbb{Z})} .
\end{aligned}
\end{equation}
Note that $\|\tilde{\mathcal{C}}^{(2N+1)} \|$ is uniformly bounded in $N$ as $\mathcal{C}^{(2N+1)}$ is
\begin{equation} \label{eq_upp_con_spectrum_proof_9}
\begin{aligned}
\|\mathcal{C}^{(2N+1)} \|_{\mathcal{B}(\ell^2(\mathbb{Z}))}
&\overset{(i)}{\leq} \sup_{n\in\mathbb{Z}}\sum_{m\in \mathbb{Z}}|\mathcal{C}^{(2N+1)}(n,m)| \\
&\overset{(ii)}{\leq} 2 \sup_{n\in\mathbb{Z}} \mathcal{C}^{(2N+1)}(n,n)
\overset{(iii)}{\leq} 2 \sup_{n\in\mathbb{Z}} \mathcal{C}(n,n)\overset{(iv)}{<}\infty,
\end{aligned}
\end{equation}
where the inequality (i) follows from the Schur test, (ii) is a consequence of the diagonal dominance of $\mathcal{C}^{(2N+1)}$, (iii) follows from the estimate \eqref{eq_str_con_proof_1} and (iv) results from the translation invariance of $\mathcal{C}$. On the other hand, the tail $u-u^{(N)}$ is estimated using \eqref{eq_upp_con_spectrum_proof_3}:
\begin{equation*}
\|u-u^{(N)}\|_{\ell^2(\mathbb{Z})}
\leq C\big(\sum_{|n|\geq \frac{N}{2}}\frac{1}{|n|^2\log^4|n|}\big)^{\frac{1}{2}}
\leq \frac{C}{N^{\frac{1}{2}}\log^{\frac{3}{2}}N} .
\end{equation*}
Hence, we conclude from \eqref{eq_upp_con_spectrum_proof_2} that
\begin{equation} \label{eq_upp_con_spectrum_proof_4}
\begin{aligned}
\|(\tilde{\mathcal{C}}^{(2N+1)}-\lambda_0)u\|_{\ell^2(\mathbb{Z})}
&\leq \|(\tilde{\mathcal{C}}^{(2N+1)}-\tilde{\mathcal{C}})u^{(N)}\|_{\ell^2(\mathbb{Z})}
+\frac{C}{N^{\frac{1}{2}}\log^{\frac{3}{2}}N} .
\end{aligned}
\end{equation}
The remaining paragraphs are devoted to estimate $\|(\tilde{\mathcal{C}}^{(2N+1)}-\tilde{\mathcal{C}})u^{(N)}\|_{\ell^2(\mathbb{Z})}$.

{\color{blue}Step 3.} By definition,
\begin{equation} \label{eq_upp_con_spectrum_proof_5}
\begin{aligned}
\|(\tilde{\mathcal{C}}^{(2N+1)}-\tilde{\mathcal{C}})u^{(N)}\|_{\ell^2(\mathbb{Z})}^2
=&\sum_{|n|\leq N}\big|\sum_{|m|<\frac{N}{2}}\big( \tilde{\mathcal{C}}^{(2N+1)}(n,m)-\tilde{\mathcal{C}}(n,m) \big)u(m)  \big|^2 \\
&+\sum_{|n|> N}\big|\sum_{|m|<\frac{N}{2}}\tilde{\mathcal{C}}(n,m) u(m)  \big|^2 .
\end{aligned}
\end{equation}
For the second sum, we apply the Cauchy-Schwarz inequality and see that
\begin{equation} \label{eq_upp_con_spectrum_proof_6}
\begin{aligned}
\sum_{|n|> N}\big|\sum_{|m|<\frac{N}{2}}\tilde{\mathcal{C}}(n,m) u(m)  \big|^2
&\leq \sum_{|n|> N}\sum_{|m|<\frac{N}{2}}|\tilde{\mathcal{C}}(n,m) |^2
\sum_{|m|<\frac{N}{2}}| u(m) |^2 \\
&\overset{(i)}{\leq} \sum_{|n|> N}\sum_{|m|<\frac{N}{2}}|\tilde{\mathcal{C}}(n,m) |^2 \\
&\overset{(ii)}{\leq} C\sum_{|n|> N}N\cdot \frac{1}{\big(|n|-\frac{N}{2}\big)^2\log^4\big(|n|-\frac{N}{2}\big)},
\end{aligned}
\end{equation}
where (i) follows from the normalisation $\|u\|_{\ell^2(\mathbb{Z})}=1$, and to derive the inequality (ii), we note that for $|n|>N,|m|<\frac{N}{2}$,
\begin{equation*}
|\tilde{\mathcal{C}}(n,m)|\leq \frac{C}{|n-m|\log^2|n-m|}
\leq \frac{C}{\big(|n|-\frac{N}{2}\big)\log^2\big(|n|-\frac{N}{2}\big)}.
\end{equation*}
The right side of \eqref{eq_upp_con_spectrum_proof_6} is estimated elementarily:
\begin{equation*}
\sum_{|n|> N}\frac{1}{\big(|n|-\frac{N}{2}\big)^2\log^4\big(|n|-\frac{N}{2}\big)}
\leq \frac{C}{N\log^3 N} .
\end{equation*}
Hence,
\begin{equation} \label{eq_upp_con_spectrum_proof_7}
\begin{aligned}
\sum_{|n|> N}\big|\sum_{|m|<\frac{N}{2}}\tilde{\mathcal{C}}(n,m) u(m)  \big|^2
\leq \frac{C}{\log^3 N}.
\end{aligned}
\end{equation}
For the first sum in \eqref{eq_upp_con_spectrum_proof_5}, we apply the Cauchy-Schwarz inequality again
\begin{equation} \label{eq_upp_con_spectrum_proof_8}
\begin{aligned}
&\sum_{|n|\leq N}\big|\sum_{|m|<\frac{N}{2}}\big( \tilde{\mathcal{C}}^{(2N+1)}(n,m)-\tilde{\mathcal{C}}(n,m) \big)u(m)  \big|^2 \\
&\leq \sum_{|n|\leq N}
\sum_{|m|<\frac{N}{2}}\big| \tilde{\mathcal{C}}^{(2N+1)}(n,m)-\tilde{\mathcal{C}}(n,m) \big|^2 |u(m)|\sum_{|m|<\frac{N}{2}}|u(m)| \\
&\overset{(i)}{\leq} C \sum_{|n|\leq N}
\sum_{|m|<\frac{N}{2}}\big| \tilde{\mathcal{C}}^{(2N+1)}(n,m)-\tilde{\mathcal{C}}(n,m) \big|^2 \frac{1}{|m|\log^2|m|} \\
&\leq C\sum_{|m|<\frac{N}{2}}\frac{1}{|m|\log^2|m|}
\sup_{|n|\leq N}\big| \tilde{\mathcal{C}}^{(2N+1)}(n,m)-\tilde{\mathcal{C}}(n,m) \big| \sum_{|n|\leq N}\big| \tilde{\mathcal{C}}^{(2N+1)}(n,m)-\tilde{\mathcal{C}}(n,m) \big| , 
\end{aligned}
\end{equation}
where we have applied \eqref{eq_upp_con_spectrum_proof_3} to derive (i). We now give an $(m,N)$-independent estimate for the two terms involving $\tilde{\mathcal{C}}^{(2N+1)}(n,m)-\tilde{\mathcal{C}}(n,m)$. First, arguing similarly as in \eqref{eq_upp_con_spectrum_proof_9}, we have
\begin{equation} \label{eq_upp_con_spectrum_proof_10}
\begin{aligned}
\sum_{|n|\leq N}\big| \tilde{\mathcal{C}}^{(2N+1)}(n,m)-\tilde{\mathcal{C}}(n,m) \big| 
&\leq \sum_{|n|\leq N} | \tilde{\mathcal{C}}^{(2N+1)}(n,m) |+ |\tilde{\mathcal{C}}(n,m) | \\
&\leq \sup_{m\in\mathbb{Z}}\Big[\mathcal{C}(m,m)+\sum_{n\in\mathbb{Z}}|\mathcal{C}(n,m)|\Big] <\infty .
\end{aligned}
\end{equation}
On the other hand, using estimates \eqref{eq_str_con_proof_7} and \eqref{eq_str_con_proof_6}, we have
\begin{equation} \label{eq_upp_con_spectrum_proof_11}
\begin{aligned}
\sup_{|n|\leq N}\big| \tilde{\mathcal{C}}^{(2N+1)}(n,m)-\tilde{\mathcal{C}}(n,m) \big| 
&\leq \sup_{|n|\leq N} \sum_{|j|\geq N} |\mathcal{C}(j,m)|
=\sum_{|j|\geq N} |\mathcal{C}(j,m)| \\
&\leq \sum_{|j|\geq N}\frac{C}{\big(|j|-\frac{N}{2}\big)\log^2\big(|j|-\frac{N}{2}\big)} \leq \frac{C}{\log N} .
\end{aligned}
\end{equation}
Thus, by \eqref{eq_upp_con_spectrum_proof_8}-\eqref{eq_upp_con_spectrum_proof_11}, we conclude
\begin{equation} \label{eq_upp_con_spectrum_proof_12}
\sum_{|n|\leq N}\big|\sum_{|m|<\frac{N}{2}}\big( \tilde{\mathcal{C}}^{(2N+1)}(n,m)-\tilde{\mathcal{C}}(n,m) \big)u(m)  \big|^2
\leq \frac{C}{\log N}\sum_{|m|<\frac{N}{2}}\frac{1}{|m|\log^2|m|}
\leq \frac{C}{\log N} .
\end{equation}
Then the proof of \eqref{eq_upp_con_spectrum_proof_1} is complete by \eqref{eq_upp_con_spectrum_proof_4}, \eqref{eq_upp_con_spectrum_proof_5}, \eqref{eq_upp_con_spectrum_proof_7} and \eqref{eq_upp_con_spectrum_proof_12}.
\end{proof}

\begin{remark} \label{rmk_optimality}
We note that, the logarithmic convergence rate proved in Proposition \ref{prop_upp_con_spectrum} is far from being optimal. In fact, in \cite{ammari.davies.ea2023Convergence}, 
it was  observed numerically that the actual convergence rate of the discrete spectrum is polynomial. This problem can be resolved by improving the estimate of the left side of \eqref{eq_upp_con_spectrum_proof_8}. Indeed, by adjusting the cut-off scaling (e.g., setting $N^p$ with $p<1$ instead of $N/2$ in \eqref{eq_cut_off}), the estimate \eqref{eq_upp_con_spectrum_proof_7} will become algebraic, but \eqref{eq_upp_con_spectrum_proof_12} is still logarithmic. The only way to improve this logarithmic bound is to replace \eqref{eq_upp_con_spectrum_proof_11} with a sharper estimate. This, in turn, requires a better estimate of the difference between the capacitance elements than the one presented in \eqref{eq_str_con_proof_3}. The problem is very challenging due to the borderline decay of the off diagonal entries of the capacitance operator. We leave this problem for future work.
\end{remark}

\begin{remark}
 Note  that our capacitance operator formalism can also be generalised to infinite structures that are not translationally invariant and have interfaces such as the analogue of the well-known interface Su-Schrieffer-Heeger model, and to estimate the rate of convergence of the localised interface mode in the truncated structure to the one in the infinite structure, as numerically illustrated in \cite{ammari.davies.ea2025Spectral}. 
\end{remark}

\begin{remark} Note that for crystals of high-contrast resonators, i.e. 3D-in-3D, as proved in \cite{ammari.qiu2025Analysis}, the off-diagonal entries of the capacitance operator decay exponentially. In this case, it is easy to prove that for any $\lambda_0 \in \sigma_{pp}(\tilde{\mathcal{C}})$, there exists a sequence $\lambda_N \in \sigma(\tilde{\mathcal{C}}^{(2N+1)})$ that convergences exponentially to $\lambda_0$.

\end{remark}

\section{Anderson localisation for arbitrary disorder} \label{sec:anderson}
Having established the long-range interaction and the strong convergence of the finite generalised capacitance matrix $\mc V^{(2N+1)} \mc C^{(2N+1)}$ to the generalised capacitance operator $\mathcal{V} \mc{C}$ on $\ell^2(\Z)$, we now aim to investigate its spectral properties as a disorder is introduced in the form of a material parameter perturbation. In this section, we shall consider an infinite chain of spherical resonators and randomly perturb the material parameters by choosing 
\[
    v_i \sim 1 + \mathcal{U}([-\varepsilon, \varepsilon]),
\]
where $\mathcal{U}([-\varepsilon, \varepsilon])$ is a uniform distribution with 
zero mean-value and standard variation $\varepsilon/\sqrt{3}$, yielding a random material parameter operator
\begin{equation}
    \begin{aligned}
        \mc{V}:\ell^2(\Z) &\to\ell^2(\Z) \\
        (\bseq{u}{i})_{i\in \Z} &\mapsto \left(\frac{v_i^2}{\abs{D_i}}\bseq{u}{i}\right)_{i\in \Z}.
    \end{aligned}
\end{equation}
For the case of identical spherical resonators of radius $R$, we have $\abs{D_i} = \frac{4}{3}\pi R^3$ for all $i\in \Z$. In line with \cref{eq:simprob}, the infinite periodic system is then characterised by the generalised capacitance operator 
\[
    \tilde{\mathcal{C}} \coloneqq \mc V^{\frac{1}{2}} \mc C \mc V^{\frac{1}{2}}.
\]
The central aim of this section is to demonstrate numerically \emph{global localisation for arbitrary disorder}. To be precise, this means that the infinite operator $\tilde{\mathcal{C}}$ has only (apart from a set of measure zero) pure point (i.e., $\ell^2$-summable) spectrum for any amount of material parameter perturbation $0<\varepsilon<1$. This expectation is based on (i) the one-dimensional geometry of our system together with (ii) the summability of the off-diagonal interactions established in the previous section. It is well-known from \cite{ishii1973Localization} that tight-binding one-dimensional systems exhibit global localisation for global disorder. Moreover, this behaviour is expected to extend to one-dimensional systems that exhibit long-range interactions, as long as they remain summable \cite{deng2018Duality}. The key difference between tight-binding and long-range interactions is the weaker decay of the localised modes (typically algebraic instead of exponential).

Clearly, the strong convergence $\mc{C}^{(2N+1)} \to \mc C$ induces the strong convergence 
\[
    (\mc{V}^{(2N+1)})^{\frac{1}{2}} \mc{C}^{(2N+1)}  (\mc{V}^{(2N+1)})^{\frac{1}{2}} \to \mc V^{\frac{1}{2}}\mc C\mc V^{\frac{1}{2}}.
\]
To understand the localisation behaviour of the infinite system, we may thus investigate the convergence of the finite counterparts.
A key diagnostic is the \emph{inverse participation ratio}
\[
    \on{IPR}(\bm u) \coloneqq \frac{\norm{\bm u}_4}{\norm{\bm u}_2}.
\]

\begin{figure}
    \centering
    \begin{subfigure}[t]{0.49\textwidth}
        \centering
        \includegraphics[width=\linewidth]{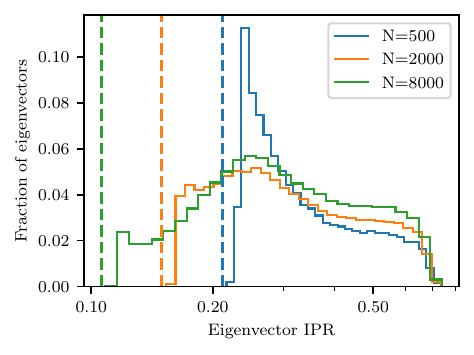}
        \caption{IPR distribution of systems with randomly perturbed ($\varepsilon=0.03$, averaged over $100$ realisations each) material parameters. The lowest possible IPR ($N^{-1/4}$) for a given system size $N$ is indicated by a vertical dashed line.  As the system size is increased, increasingly weakly localised modes may be resolved.}
        \label{fig:IPR_distribution}
    \end{subfigure}\hfill
    \begin{subfigure}[t]{0.49\textwidth}
        \centering
        \includegraphics[width=\linewidth]{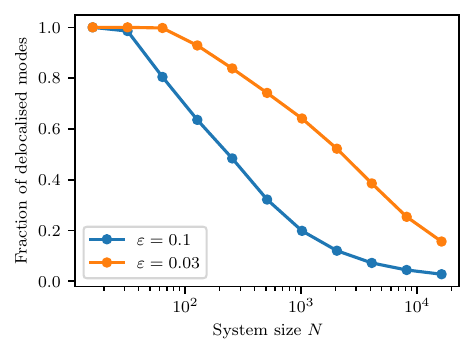}
        \caption{Fraction of delocalised modes (identified by $\mathrm{IPR}<2N^{-1/4}$) for increasing system size $N$. We investigate two different noise levels and average over $100$ realisations to find the mean and standard deviation. In both cases, the fraction of delocalised modes approaches zero, but it does so significantly faster at higher noise levels.}
        \label{fig:Fraction_delocalised}
    \end{subfigure}%
    \caption{IPR statistics of randomly perturbed chains as the number of resonators in increased. The fraction of eigenmodes with IPR close to the minimal $N^{-1/4}$ decreases with increasing system size, suggesting global localisation in the infinite limit.}
\end{figure}

For a finite vector $\bm u\in \R^N$ we have the dichotomy
\[
    \on{IPR}(\bm u) = \begin{cases}
        \mc O(1) & \bm u \text{ localised / }\ell^2\text{-summable},\\
        \mc O(N^{-1/4}) & \bm u\text{ delocalised}.
    \end{cases}
\]
stemming from the two extreme cases $\bseq{u}{i} = \delta_{0i}$ and $\bseq{u}{i} = N^{-1/2}$. This implies that the lowest degree of localisation that can be resolved by a system of size $N$ is $N^{-1/4}$. Indeed, in \cref{fig:IPR_distribution}, we see that the low end of the IPR distribution is limited by the size of the system. As the system size increases, larger localisation lengths may be resolved. 

Consequently, to distinguish delocalisation from localisation at a large scale, it is necessary to investigate the IPR distribution as the system size $N$ increases. In this case, we can determine that all eigenmodes of the infinite system are localised if and only if the fraction of delocalised modes (i.e. IPR of order $\mc O(N^{-1/4})$) goes to $0$ as $N\to \infty$. In practice, we determine a mode to be delocalised if its IPR lies below some fixed multiple of $\mc O(N^{-1/4})$. In \cref{fig:Fraction_delocalised}, we plot the fraction of delocalised modes at different noise strengths $\varepsilon$ as $N$ increases. In both cases, the fraction of delocalised modes approaches $0$ with increasing system size, indicating global localisation.

Finally, in this section, we aim to investigate the key difference introduced by long-range interactions. Namely, the slow decay of the interactions limits the maximal eigenmode decay, preventing the existence of exponentially localised eigenmodes.
As can be seen in \cref{fig:mean_decay}, the eigenmode decay follows a two-scale behaviour: close to the localisation site, the decay behaviour is exponential, while further away the maximal decay is bounded from below by an algebraic envelope.

\begin{figure}
    \centering
    \includegraphics[width=0.5\linewidth]{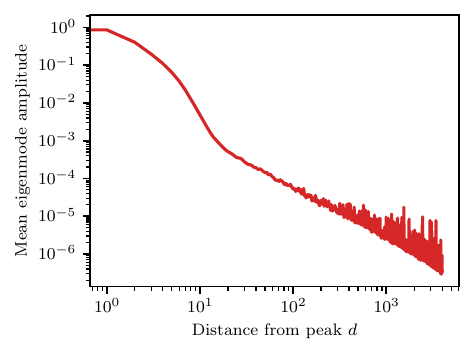}
    \caption{Mean eigenmode amplitude $d$ sites away from the eigenmode peak. Averaged over the eigenmodes corresponding to the $M=200$ largest eigenvalues of a perturbed system ($\varepsilon=0.1, N=8000$, averaged over 1000 realisations). Away from their localisation site, the eigenmodes decay algebraically.}
    \label{fig:mean_decay}
\end{figure}

\section{Concluding remarks}
In this paper, we have proved for the first time the algebraic decay of the off-diagonal entries of the capacitance operator of an infinite chain of high-contrast resonators. We have also proved the strong convergence of the capacitance matrix of the truncated chain to the corresponding capacitance operator of the infinite chain as the size of the truncation goes to infinity. Moreover, our result on the convergence of the spectrum has been refined to obtain an estimate of the convergence of defect modes in a finite chain to those of the corresponding infinite chain as the size of the chain goes to infinity. We have also shown that this decay rate of the off-diagonal entries is a consequence of long-range interactions between the resonators and is borderline for Anderson localisation to occur for arbitrary disorder. We have numerically illustrated the localisation of the eigenmodes for arbitrary fluctuations in the material parameter of the resonators and their algebraic decay from their localisation sites. 

It would be very interesting to generalise and contrast the results of this paper on the algebraic decay of the off-diagonal capacitance operator and the occurrence of Anderson localisation in screens of high-contrast resonators, i.e., two-dimensional periodic structures in three dimensions. In that setting, Anderson localisation for arbitrary fluctuations in the material parameters of the resonators is expected to hold with again an algebraic rate of decay (but faster than the one for chains) for the eigenmodes of the capacitance operator from their sites. This would be the subject of future work. Moreover, for crystals of high-contrast resonators, i.e., three-dimensional periodic structures in three dimensions, although the off-diagonal entries of the associated capacitance operator decay exponentially and, consequently, a tight-binding approximation applies \cite{ammari.qiu2025Analysis}, it is expected that a phase transition in localisation occurs due to percolation. At a critical disorder strength, eigenmodes switch from delocalised to localised. In future work, we plan to simulate such three-dimensional computationally heavy systems to illustrate the phase transition in localisation and to estimate the mobility edge that separates extended from localised eigenmodes in these classical three-dimensional wave systems in terms of their geometries.

\section*{Acknowledgments}
    This work was supported by the Swiss National Science Foundation grant number 200021-236472 (A.U. and J.Q.) and 235080 (S.B.).
    
\section*{Code availability}
The software used to produce the numerical results in this work is openly available at \\ \href{https://doi.org/10.5281/zenodo.21002014}{https://doi.org/10.5281/zenodo.21002014}.

\appendix

\section{Proofs of Propositions \ref{prop_cap_def_momentum_existence_proof_1} and \ref{prop_cap_def_momentum_existence_proof_2}}
\label{sec:app_A}

\begin{proof}[Proof of Proposition \ref{prop_cap_def_momentum_existence_proof_1}]
The quasi-periodicity \eqref{eq_cap_def_momentum_existence_proof_4} follows directly from the definition. To prove the remaining properties, we introduce the following decomposition:
\begin{equation} \label{eq_app_A_3}
\begin{aligned}
K^{\alpha}(\bm{x},\bm{y})&=\sum_{n\geq 1} \big[\frac{1}{|\bm{x}-\bm{y}-n\bm{e}_1|}-\frac{1}{|n|}\big]e^{i\alpha n}+\sum_{n\leq -1} \big[\frac{1}{|\bm{x}-\bm{y}-n\bm{e}_1|}-\frac{1}{|n|}\big]e^{i\alpha n}+\sum_{n\neq 0}\frac{e^{i\alpha n}}{|n|} \\
&=:K^{\alpha}_{+}(\bm{x},\bm{y})+K^{\alpha}_{-}(\bm{x},\bm{y})-2\log\big(2\sin \frac{|\alpha|}{2}\big) .
\end{aligned}
\end{equation}
Since $\log\big(2\sin \frac{|\alpha|}{2}\big)$ is smooth in $\alpha$ within $[-\pi,0)\cup(0,\pi]$, it suffices to prove the conclusion of Proposition \ref{prop_cap_def_momentum_existence_proof_1} for $K^{\alpha}_{\pm}(\bm{x},\bm{y})$. We only show the proof for $K^{\alpha}_{+}(\bm{x},\bm{y})$, and $K^{\alpha}_{-}(\bm{x},\bm{y})$ is treated similarly. Fix a compact set $V$ such that $D\Subset V\Subset Y$. Note that $\gamma:=\sup_{\bm{x},\bm{y}\in V}|x_1-y_1|<1$. Hence, using the Laplace transform, we obtain
\begin{equation*}
\begin{aligned}
K^{\alpha}_{+}(\bm{x},\bm{y})=\sum_{n\geq 1} \big[\frac{1}{\sqrt{|z_{\perp}|^2+(z_1-n)^2}} -\frac{1}{n}\big]e^{i\alpha n}
=\sum_{n\geq 1}\int_{0}^{\infty}\big(J_0(|z_{\perp}t|)e^{z_1 t}-1\big)e^{-nt+in\alpha}dt
\end{aligned}
\end{equation*}
with $\bm{z}:=\bm{x}-\bm{y}=(z_1,z_{\perp})$ and $J_0$ being the Bessel function. Note that, since $z_1\leq \gamma<1$ and $J_0(|z_{\perp}t|)e^{z_1 t}-1=\mathcal{O}(t)$ as $t\to 0$, the right side of the above integral is absolutely integrable:
\begin{equation} \label{eq_app_A_1}
\begin{aligned}
\sum_{n\geq 1}\int_{0}^{\infty}\Big|\big(J_0(|z_{\perp}t|)e^{z_1 t}-1\big)e^{-nt+in\alpha}\Big|dt
&=\int_{0}^{\infty}\frac{\big|J_0(|z_{\perp}t|)e^{z_1 t}-1 \big|}{1-e^{-t}}e^{-t}dt \lesssim \int_{0}^{\infty}e^{-(1-\gamma)t}dt <\infty.
\end{aligned}
\end{equation}
Here, $A \lesssim B$ means that there exists a constant $C$ such that $A \leq C B$.
Thus, we can interchange the summation and integral and obtain
\begin{equation} \label{eq_app_A_2}
K^{\alpha}_{+}(\bm{x},\bm{y})
=\int_{0}^{\infty}\big(J_0(|z_{\perp}t|)e^{z_1 t}-1\big)\sum_{n\geq 1}e^{-nt+in\alpha}dt = \int_{0}^{\infty}\big(J_0(|z_{\perp}t|)e^{z_1 t}-1\big)\frac{e^{-t+i\alpha}}{1-e^{-t+i\alpha}}dt .
\end{equation}
For any $\alpha\in [-\pi,\pi]$, the uniform boundedness of Bessel functions shows that
\begin{equation} \label{eq_app_A_9}
\int_{0}^{\infty}\Big|\partial_{z}^{\beta}\big(J_0(|z_{\perp}t|)e^{z_1 t}-1\big)\frac{e^{-t+i\alpha}}{1-e^{-t+i\alpha}}\Big|dt \lesssim \int_{0}^{\infty}\frac{t}{|1-e^{-t+i\alpha}|}e^{-(1-\gamma)t}<\infty,
\end{equation}
for any multi-index $\beta$. This concludes that $K^{\alpha}_{+}\in C^{1}(\partial D\times \partial D)$. Similarly, one can prove \eqref{eq_app_A_2} is also absolutely integrable by taking any order of $\alpha$-derivatives for $\alpha\neq 0$; the details are skipped. This indicates that $K^{\alpha}_{+}$ is smooth in $\alpha$ and concludes the proof.
\end{proof}

\begin{proof}[Proof of Proposition \ref{prop_cap_def_momentum_existence_proof_2}]
Corresponding to the decomposition \eqref{eq_cap_def_momentum_existence_proof_1} and \eqref{eq_app_A_3}, we write
\begin{equation} \label{eq_app_A_8}
\begin{aligned}
u(\bm{x})&=-\frac{1}{4\pi}\int_{\partial D}d\sigma_{\bm{y}}\varphi(\bm{y})\frac{1}{|\bm{x}-\bm{y}|}  -\frac{1}{2\pi}\log(2\sin\frac{|\alpha|}{2})\int_{\partial D}d\sigma_{\bm{y}}\varphi(\bm{y}) \\
&\quad -\frac{1}{4\pi}\int_{\partial D}d\sigma_{\bm{y}}\varphi(\bm{y})\Big[\sum_{n\geq 1} \big[\frac{1}{|\bm{x}-\bm{y}-n\bm{e}_1|}-\frac{1}{|n|}\big]e^{i\alpha n}\Big] \\
&\quad -\frac{1}{4\pi}\int_{\partial D}d\sigma_{\bm{y}}\varphi(\bm{y})\Big[\sum_{n\leq -1} \big[\frac{1}{|\bm{x}-\bm{y}-n\bm{e}_1|}-\frac{1}{|n|}\big]e^{i\alpha n}\Big] \\
&=-\frac{1}{4\pi}\int_{\partial D}d\sigma_{\bm{y}}\varphi(\bm{y})\frac{1}{|\bm{x}-\bm{y}|}  -\frac{1}{2\pi}\log(2\sin\frac{|\alpha|}{2})\int_{\partial D}d\sigma_{\bm{y}}\varphi(\bm{y}) \\
&\quad -\frac{1}{4\pi}\int_{\partial D}d\sigma_{\bm{y}}\varphi(\bm{y})\int_{0}^{\infty}\big(J_0(|x_{\perp}-y_{\perp}|t)e^{(x_1-y_1) t}-1\big)\frac{e^{-t+i\alpha}}{1-e^{-t+i\alpha}}dt \\
&\quad -\frac{1}{4\pi}\int_{\partial D}d\sigma_{\bm{y}}\varphi(\bm{y})\int_{0}^{\infty}\big(J_0(|x_{\perp}-y_{\perp}|t)e^{-(x_1-y_1) t}-1\big)\frac{e^{-t-i\alpha}}{1-e^{-t-i\alpha}}dt \\
&=-\frac{1}{4\pi}\int_{\partial D}d\sigma_{\bm{y}}\varphi(\bm{y})\frac{1}{|\bm{x}-\bm{y}|}  \\
&\quad -\frac{1}{4\pi}\int_{\partial D}d\sigma_{\bm{y}}\varphi(\bm{y})\int_{0}^{\infty}J_0(|x_{\perp}-y_{\perp}|t)e^{(x_1-y_1) t}\frac{e^{-t+i\alpha}}{1-e^{-t+i\alpha}}dt \\
&\quad -\frac{1}{4\pi}\int_{\partial D}d\sigma_{\bm{y}}\varphi(\bm{y})\int_{0}^{\infty}J_0(|x_{\perp}-y_{\perp}|t)e^{-(x_1-y_1) t}\frac{e^{-t-i\alpha}}{1-e^{-t-i\alpha}}dt \\
&=: u^{free}(\bm{x})+u^{rem,\alpha}_{+}(\bm{x})+u^{rem,\alpha}_{-}(\bm{x}) ,
\end{aligned}
\end{equation}
where the separation of the integral is legal for $\alpha\neq 0$. The function $u^{free}(\bm{x})$ is the single-layer potential associated with the free-space Green function, which is harmonic within $Y\backslash \overline{D}$ and $D$, satisfies the jump identities \eqref{eq_cap_def_momentum_existence_proof_2} and the asymptotics \eqref{eq_cap_def_momentum_existence_proof_3}. Let us now consider the remaining parts $u^{rem,\alpha}_{\pm}$. As seen from the proof of Proposition \ref{prop_cap_def_momentum_existence_proof_1}, $u^{rem,\alpha}_{\pm}(\bm{x})$ are smooth in a neighborhood of $\partial D$, which implies that $u(\bm{x})$ satisfies the jump identities \eqref{eq_cap_def_momentum_existence_proof_2}. On the other hand, as justified in Proposition \ref{prop_cap_def_momentum_existence_proof_1}, we can plug the spatial derivatives into the summation defining $u^{rem,\alpha}_{\pm}$
\begin{equation*}
\Delta u^{rem,\alpha}_{+}(\bm{x})=-\frac{1}{4\pi}\int_{\partial D}d\sigma_{\bm{y}}\varphi(\bm{y})\int_{0}^{\infty}\Delta_{\bm{x}}\big[J_0(|x_{\perp}-y_{\perp}|t)e^{(x_1-y_1) t}\big]\frac{e^{-t+i\alpha}}{1-e^{-t+i\alpha}}dt=0
\end{equation*}
for $\bm{x}\notin \partial D$. Hence, the harmonicity of $u(\bm{x})$ is also proved. Now we are left to prove that $u^{rem,\alpha}_{\pm}(\bm{x})$ satisfies the asymptotics \eqref{eq_cap_def_momentum_existence_proof_3}. We will only show the proof for $u^{rem,\alpha}_{+}(\bm{x})$, while $u^{rem,\alpha}_{-}(\bm{x})$ is treated similarly. Let $x_{\perp}$ be sufficiently large such that
\begin{equation} \label{eq_eq_app_A_3}
|x_{\perp}|>\max\big\{1,\sup_{\bm{y}\in\partial D}|y_{\perp}|\big\} .
\end{equation}
Then, we decompose
\begin{equation} \label{eq_eq_app_A_4}
\begin{aligned}
&\int_{0}^{\infty}J_0(|x_{\perp}-y_{\perp}|t)e^{(x_1-y_1) t}\frac{e^{-t+i\alpha}}{1-e^{-t+i\alpha}}dt \\
&=\int_{0}^{\frac{1}{|x_{\perp}|^p}}J_0(|x_{\perp}-y_{\perp}|t)e^{(x_1-y_1) t}\frac{e^{-t+i\alpha}}{1-e^{-t+i\alpha}}dt \\
&+ \int_{\frac{1}{|x_{\perp}|^p}}^{\infty}J_0(|x_{\perp}-y_{\perp}|t)e^{(x_1-y_1) t}\frac{e^{-t+i\alpha}}{1-e^{-t+i\alpha}}dt, 
\end{aligned}
\end{equation}
where $p\in(0,1)$ is a constant to be determined. For $\alpha\neq 0$, the integrand is uniformly bounded for $\bm{x}\in Y,\bm{y}\in\partial D$. Hence,
\begin{equation} \label{eq_eq_app_A_5}
\int_{0}^{\frac{1}{|x_{\perp}|^p}}J_0(|x_{\perp}-y_{\perp}|t)e^{(x_1-y_1) t}\frac{e^{-t+i\alpha}}{1-e^{-t+i\alpha}}dt=\mathcal{O}(|x_{\perp}|^{-p}) .
\end{equation}
For the other integral in \eqref{eq_eq_app_A_4}, we apply the asymptotics of Bessel functions and see that
\begin{equation} \label{eq_eq_app_A_6}
\begin{aligned}
&\int_{\frac{1}{|x_{\perp}|^p}}^{\infty}J_0(|x_{\perp}-y_{\perp}|t)e^{(x_1-y_1) t}\frac{e^{-t+i\alpha}}{1-e^{-t+i\alpha}}dt \\
&=\int_{\frac{1}{|x_{\perp}|^p}}^{\infty}\sqrt{\frac{2}{\pi|x_{\perp}-y_{\perp}|t}}\cos(|x_{\perp}-y_{\perp}|t-\frac{\pi}{4})e^{(x_1-y_1) t}\frac{e^{-t+i\alpha}}{1-e^{-t+i\alpha}}dt \\
&\quad +\int_{\frac{1}{|x_{\perp}|^p}}^{\infty} \mathcal{O}\big(\frac{1}{|x_{\perp}-y_{\perp}|^{3/2}t^{3/2}}\big)\cdot e^{(x_1-y_1) t}\frac{e^{-t+i\alpha}}{1-e^{-t+i\alpha}}dt \\
&=\int_{\frac{1}{|x_{\perp}|^p}}^{\infty}\sqrt{\frac{2}{\pi|x_{\perp}-y_{\perp}|t}}\cos(|x_{\perp}-y_{\perp}|t-\frac{\pi}{4})e^{(x_1-y_1) t}\frac{e^{-t+i\alpha}}{1-e^{-t+i\alpha}}dt + \mathcal{O}\big(|x_{\perp}|^{-\frac{3}{2}(1-p)}\big).
\end{aligned}
\end{equation}
An integration by parts gives
\begin{equation} \label{eq_eq_app_A_7}
\begin{aligned}
&\int_{\frac{1}{|x_{\perp}|^p}}^{\infty}\sqrt{\frac{2}{\pi|x_{\perp}-y_{\perp}|t}}\cos(|x_{\perp}-y_{\perp}|t-\frac{\pi}{4})e^{(x_1-y_1) t}\frac{e^{-t+i\alpha}}{1-e^{-t+i\alpha}}dt \\
&=\sqrt{\frac{2}{\pi|x_{\perp}-y_{\perp}|^{3}|x_{\perp}|^{-p}}}\sin(|x_{\perp}-y_{\perp}|t-\frac{\pi}{4})e^{(x_1-y_1) t}\frac{e^{-t+i\alpha}}{1-e^{-t+i\alpha}}\Big|_{t=1/|x_{\perp}|^p} \\
&\quad +\frac{1}{2}\int_{\frac{1}{|x_{\perp}|^p}}^{\infty}\sqrt{\frac{2}{\pi|x_{\perp}-y_{\perp}|^3 t^3}}\cos(|x_{\perp}-y_{\perp}|t-\frac{\pi}{4})e^{(x_1-y_1) t}\frac{e^{-t+i\alpha}}{1-e^{-t+i\alpha}}dt \\
&\quad -\int_{\frac{1}{|x_{\perp}|^p}}^{\infty}\sqrt{\frac{2}{\pi|x_{\perp}-y_{\perp}|^3 t}}\cos(|x_{\perp}-y_{\perp}|t-\frac{\pi}{4})\partial_{t}\big[e^{(x_1-y_1) t}\frac{e^{-t+i\alpha}}{1-e^{-t+i\alpha}}\big]dt \\
&=\mathcal{O}\big(|x_{\perp}|^{-\frac{3}{2}(1-p)}+|x_{\perp}|^{-\frac{3}{2}(1-p)}+|x_{\perp}|^{-\frac{3}{2}(1-\frac{1}{3}p)}\big) =\mathcal{O}(|x_{\perp}|^{-\frac{3}{2}(1-p)}) .
\end{aligned}
\end{equation}
By \eqref{eq_eq_app_A_4}-\eqref{eq_eq_app_A_7}, we select $p\in (\frac{1}{2},\frac{2}{3})$ and see that
\begin{equation*}
u^{rem,\alpha}_{+}(\bm{x})=o\big(|x_{\perp}|^{-\frac{1}{2}}\big) .
\end{equation*}
The estimate of $\nabla_{\perp}u^{rem,\alpha}_{+}(\bm{x})$ is similar, for which one replaces $J_0(|x_{\perp}-y_{\perp}|t)$ in \eqref{eq_eq_app_A_4}-\eqref{eq_eq_app_A_7} by $\nabla_{\perp}J_0(|x_{\perp}-y_{\perp}|t)\sim tJ_1(|x_{\perp}-y_{\perp}|t)$ and then follows the same lines; the details are skipped here.
\end{proof}

\section{Proofs of Propositions \ref{prop_remainder_R_alpha} and \ref{prop_A_alpha_invertibility} }
\label{sec:app_B}

\begin{proof}[Proof of Proposition \ref{prop_remainder_R_alpha}]
We only prove for the case of $\alpha>0$ and the sum over positive indices, i.e.,
\begin{equation*}
\mathcal{R}_{+}^{\alpha}[\varphi](\bm{x}):=\int_{\partial D}d\sigma_{\bm{y}}\varphi(\bm{y})\Big[\sum_{n\geq 1} \big(\frac{1}{|\bm{x}-\bm{y}-n\bm{e}_1|}-\frac{1}{|n\bm{e}_1|}\big)(e^{i\alpha n}-1)\Big],
\end{equation*}
while the negative sum is treated similarly. Following the same argument as those in the proof of Proposition \ref{prop_cap_def_momentum_existence_proof_1}, we rewrite the infinite sum using the Laplace transform
\begin{equation} \label{eq_app_B_2}
\mathcal{R}_{+}^{\alpha}[\varphi](\bm{x})=\int_{\partial D}d\sigma_{\bm{y}}\varphi(\bm{y})\int_{0}^{\infty}H(\bm{z};t)
\big(\frac{e^{-t+i\alpha}}{1-e^{-t+i\alpha}}-\frac{e^{-t}}{1-e^{-t}}\big)dt
\end{equation}
with $\bm{z}=\bm{x}-\bm{y}$, $H(\bm{z};t):=J_0(|z_{\perp}t|)e^{z_1 t}-1$ and $J_0$ being the Bessel function. Since 
\begin{equation} \label{eq_app_B_3}
H(\bm{z};t)=\mathcal{O}(e^{\gamma t})\text{ as }t\to\infty,\quad H(\bm{z};t)=\mathcal{O}(t)\text{ as }t\to 0,
\end{equation}
where $\gamma\in(0,1)$ is introduced in the proof of Proposition \ref{prop_cap_def_momentum_existence_proof_1}, one sees that the $\alpha$-derivatives (of any order) of the right side of \eqref{eq_app_B_2} is absolutely integrable for $\alpha\neq 0$, which proves the smoothness of $\mathcal{R}_{+}^{\alpha}$. 

Now, we estimate $\|\partial_{\alpha}^{n}\mathcal{R}_{+}^{\alpha}\|_{\mathcal{B}(L^2(\partial D),H^1(\partial D))}$. Since $\nabla_{\bm{z}}H(\bm{z};t)$ satisfies the same asymptotics \eqref{eq_app_B_3}, we only need to estimate an elementary integral:
\begin{equation} \label{eq_app_B_4}
\begin{aligned}
\|\partial_{\alpha}^{n}\mathcal{R}_{+}^{\alpha}\|_{\mathcal{B}(L^2(\partial D),H^1(\partial D))} &\lesssim \int_{0}^{\infty}\big|\partial_{\alpha}^{n}\big(\frac{(1-e^{-t})e^{i\alpha}}{1-e^{-t+i\alpha}}-1\big)\big|e^{-(1-\gamma)t}dt \\
&=\int_{0}^{\infty}\big|\partial_{\alpha}^{n}\frac{e^{i\alpha}-1}{1-e^{-t+i\alpha}}\big|e^{-(1-\gamma)t}dt .
\end{aligned}
\end{equation}
The estimate of $n=0$ proceeds as follows:
\begin{equation*}
\begin{aligned}
\int_{0}^{\infty}\big|\frac{e^{i\alpha}-1}{1-e^{-t+i\alpha}}\big|e^{-(1-\gamma)t}dt
&=\mathcal{O}(\alpha)\int_{0}^{\infty}\big|\frac{1}{1-e^{-t+i\alpha}}\big|e^{-(1-\gamma)t}dt \\
&=\mathcal{O}(\alpha)\int_{0}^{1}\big|\frac{1}{1-e^{-t+i\alpha}}\big|e^{-(1-\gamma)t}dt +\mathcal{O}(\alpha) \\
&\leq \mathcal{O}(\alpha)\int_{0}^{1}\frac{1}{1-\cos{\alpha}+\frac{\cos{\alpha}}{2}t}dt +\mathcal{O}(\alpha) \\
&=\mathcal{O}(\alpha\log\alpha).
\end{aligned}
\end{equation*}
The higher-order derivatives are estimated similarly.
\end{proof}

\begin{proof}[Proof of Proposition \ref{prop_A_alpha_invertibility}]
The invertibility is proved following similar lines as in Theorem \ref{thm_cap_def_momentum_existence}. As in \eqref{eq_app_A_8}, for $\varphi\in P_{\perp}L^2(\partial D)$, we define
\begin{equation} \label{eq_app_B_1}
\begin{aligned}
u(\bm{x})&=-\frac{1}{4\pi}\int_{\partial D}d\sigma_{\bm{y}}\varphi(\bm{y})\frac{1}{|\bm{x}-\bm{y}|}  \\
&\quad -\frac{1}{4\pi}\int_{\partial D}d\sigma_{\bm{y}}\varphi(\bm{y})\sum_{n\geq 1} \big[\frac{1}{|\bm{x}-\bm{y}-n\bm{e}_1|}-\frac{1}{|n|}\big] \\
&\quad -\frac{1}{4\pi}\int_{\partial D}d\sigma_{\bm{y}}\varphi(\bm{y})\sum_{n\leq -1} \big[\frac{1}{|\bm{x}-\bm{y}-n\bm{e}_1|}-\frac{1}{|n|}\big] \\
&=-\frac{1}{4\pi}\int_{\partial D}d\sigma_{\bm{y}}\varphi(\bm{y})\frac{1}{|\bm{x}-\bm{y}|}  \\
&\quad -\frac{1}{4\pi}\int_{\partial D}d\sigma_{\bm{y}}\varphi(\bm{y})\int_{0}^{\infty}\big[J_0(|x_{\perp}-y_{\perp}|t)e^{(x_1-y_1) t}-1\big]\frac{e^{-t}}{1-e^{-t}}dt \\
&\quad -\frac{1}{4\pi}\int_{\partial D}d\sigma_{\bm{y}}\varphi(\bm{y})\int_{0}^{\infty}\big[J_0(|x_{\perp}-y_{\perp}|t)e^{-(x_1-y_1) t}-1\big]\frac{e^{-t}}{1-e^{-t}}dt \\
&=: u^{free}(\bm{x})+u^{rem,0}_{+}(\bm{x})+u^{rem,0}_{-}(\bm{x}) .
\end{aligned}
\end{equation}
We will prove the following claims:
\begin{itemize}
    \item[(a)] $u^{rem,0}_{\pm}(\bm{x})$ are smooth near $\partial D$;
    \item[(b)] $u(\bm{x})$ are harmonic in $\bm{x}\in Y\backslash\partial D$;
    \item[(c)] $u(\bm{x})$ is periodic, i.e., $u(1,x_{\perp})=u(0,x_{\perp})$;
    \item[(d)] Across the surface, $u$ is continuous and the normal derivative $\partial_{\nu}u$ admits a nonzero jump:
\begin{equation*}
u\big|_{\partial D^{-}}=u\big|_{\partial D^{+}}=\mathcal{S}_{D}^{\alpha}[\varphi],\quad \partial_{\nu}u\big|_{\partial D^{-}}-\partial_{\nu}u\big|_{\partial D^{+}}=\varphi .
\end{equation*}
    \item[(e)] As $|x_{\perp}|\to \infty$, $u(\bm{x})$ admits the following asymptotics:
    \begin{equation*}
    u(\bm{x})=o\big(|x_{\perp}|^{-\frac{1}{2}}\big),\quad (\nabla_{\perp}u)(\bm{x})=o\big(|x_{\perp}|^{-\frac{1}{2}}\big).
    \end{equation*}
\end{itemize}
By Claim (a), one sees that $\mathcal{A}$ is Fredholm (arguing similarly as in the proof of Theorem \ref{thm_cap_def_momentum_existence}), and therefore, $P_{\perp}\mathcal{A}P_{\perp}$ is also Fredholm. Hence, we only need to prove that $P_{\perp}\mathcal{A}P_{\perp}$ is injective. Suppose that $P_{\perp}\mathcal{A}\varphi=0$ with $\varphi\in P_{\perp}L^2(\partial D)$. Then, there exists $c\in\mathbb{R}$ such that
\begin{equation*}
\mathcal{A}\varphi=c\mathbbm{1}_{\partial D} .
\end{equation*}
Using Claims (b)-(e) and applying the Gauss-Green formula as in Theorem \ref{thm_cap_def_momentum_existence}, we obtain
\begin{equation*}
\int_{Y\backslash \overline{D}}|\nabla u|^2+\int_{D}|\nabla u|^2
=c\int_{\partial D} \partial_{\nu} u\big|_{\partial D^{+}}-\partial_{\nu} u\big|_{\partial D^{-}} \overset{(i)}{=}c\int_{\partial D} \varphi \overset{(ii)}{=}0,
\end{equation*}
where the identity (i) follows from the jump formula (d), and (ii) is a consequence of the assumption $\varphi\in P_{\perp}L^2(\partial D)$. Hence, we conclude that $u(\bm{x})$ is constant for $\bm{x}\in Y$. Then, by applying the jump formula (d) again, we conclude that $\varphi =0$.

Now, we prove our Claims (a)-(e), among which (c) follows directly from the definition. On the other hand, (d) is a consequence of (a) and the jump formula for the single-layer potential. We also note that estimate \eqref{eq_app_A_9} holds for $u^{rem,0}_{\pm}(\bm{x})$ and hence, it implies (a) and (b). For (e), we need to be careful: the proof in Proposition \ref{prop_cap_def_momentum_existence_proof_2} does not apply directly because the integrand $J_0(|x_{\perp}-y_{\perp}|t)e^{(x_1-y_1) t}-1$ in \eqref{eq_app_A_9} does not vanish as $|x_{\perp}|\to \infty$. Nevertheless, the assumption $\varphi\in P_{\perp}L^2(\partial D)$ provides convenience. For example,
\begin{equation*}
\begin{aligned}
\int_{\partial D}d\sigma_{\bm{y}}\varphi(\bm{y})\sum_{n\geq 1} \big[\frac{1}{|\bm{x}-\bm{y}-n\bm{e}_1|}-\frac{1}{|n|}\big]
&=\int_{\partial D}d\sigma_{\bm{y}}\varphi(\bm{y})\sum_{n\geq 1} \big[\frac{1}{|\bm{x}-\bm{y}-n\bm{e}_1|}-\frac{1}{|\bm{x}-\bm{y}_0-n\bm{e}_1|}\big]
\end{aligned}
\end{equation*}
for any fixed $\bm{y}_0\in \partial D$. By the mean-value theorem, one sees that
\begin{equation*}
\begin{aligned}
\Big|\frac{1}{|\bm{x}-\bm{y}-n\bm{e}_1|}-\frac{1}{|\bm{x}-\bm{y}_0-n\bm{e}_1|}\Big|
&\lesssim |\bm{y}-\bm{y}_0|\sup_{t\in [0,1]}\frac{1}{|\bm{x}-t\bm{y}-(1-t)\bm{y}_0-n\bm{e}_1|^2} \\
&\lesssim \frac{1}{|x_{\perp}|^2+n^2},
\end{aligned}
\end{equation*}
where the last inequality follows from the fact that $\bm{y},\bm{y}_0\in\partial D$. Hence, we obtain
\begin{equation*}
|u^{rem,0}_{\pm}(\bm{x})|\lesssim \sum_{n\geq 1}\frac{1}{|x_{\perp}|^2+n^2} =\mathcal{O}(|x_{\perp}|^{-1}),
\end{equation*}
as $|x_{\perp}|\to\infty$. The estimate of spatial derivatives is similar, for which we skip the details and directly show the result:
\begin{equation*}
\nabla_{x_{\perp}} u^{rem,0}_{\pm} =\mathcal{O}(|x_{\perp}|^{-2}) .
\end{equation*}
These estimates, together with the estimate of the free part $u^{free}$ using the asymptotics of the free-space Green function, conclude the proof of claim (e).
\end{proof}

\section{Proofs of Propositions \ref{lem_real_even_cap_momentum} and \ref{prop_off_diag_decay_proof_2}}\label{sec:app_C}
\begin{proof}[Proof of Proposition \ref{lem_real_even_cap_momentum}]
We first prove $\widehat{\mathcal{C}}^{\alpha}\in \mathbb{R}$. In fact, by definition of the quasi-periodic single-layer potential operator $\mathcal{S}_{D}^{\alpha}$, it is direct to check its Hermiticity, i.e., $(\mathcal{S}_{D}^{\alpha}[u],u)_{L^2(\partial D)}\in\mathbb{R}$ for any $u\in L^2(\partial D)$. This, together with the invertibility proved in Theorem \ref{thm_cap_def_momentum_existence}, implies that $(\mathcal{S}_{D}^{\alpha})^{-1}$ is also Hermitian. Hence $\widehat{\mathcal{C}}^{\alpha}=\big((\mathcal{S}_{D}^{\alpha})^{-1}[\mathbbm{1}_{\partial D}],\mathbbm{1}_{\partial D}\big)_{L^2(\partial D)}\in \mathbb{R}$.

On the other hand, the definition of $\mathcal{S}_{D}^{\alpha}$ also implies that $\mathcal{S}_{D}^{-\alpha}=\mathcal{T}\mathcal{S}_{D}^{\alpha}$, with $\mathcal{T}$ being the complex conjugation operator, using which one directly checks $(\mathcal{S}_{D}^{-\alpha})^{-1}[\mathbbm{1}_{\partial D}]=\mathcal{T}(\mathcal{S}_{D}^{\alpha})^{-1}[\mathbbm{1}_{\partial D}]$. This indicates that $\widehat{\mathcal{C}}^{\alpha}=\overline{\widehat{\mathcal{C}}^{-\alpha}}$, and completes the proof together with the fact that $\widehat{\mathcal{C}}^{\alpha}\in \mathbb{R}$.
\end{proof}

\begin{proof}[Proof of Proposition \ref{prop_off_diag_decay_proof_2}]
We first estimate $I_{1,near}^{(1)}(n)$. Using the results of Theorem \ref{thm_cap_def_momentum_estimate}, it follows that
\begin{equation} \label{eq_app_C_1}
|I_{1,near}^{(1)}(n)|\lesssim\int_{0}^{n^{-1}}\frac{1}{\alpha\log^3(1/\alpha)}|\sin (n\alpha)|d\alpha
\leq \frac{1}{\log^3 n}\int_{0}^{1}\frac{|\sin (t)|}{t}dt
=\mathcal{O}(\frac{1}{\log^3 n}).
\end{equation}
The estimate of $I_{1,far}^{(1)}(n)$ is more delicate. For $p\in(0,1)$, we decompose the integral as
\begin{equation} \label{eq_app_C_2}
I_{1,far}^{(1)}(n)=\int_{n^{-1}}^{n^{-p}}(\partial_{\alpha}r(\alpha))\sin (n\alpha)d\alpha + \int_{n^{-p}}^{\alpha_0/2}(\partial_{\alpha}r(\alpha))\sin (n\alpha)d\alpha.
\end{equation}
For the first part, we proceed similarly as in \eqref{eq_app_C_1}:
\begin{equation}
\big|\int_{n^{-1}}^{n^{-p}}(\partial_{\alpha}r(\alpha))\sin (n\alpha)d\alpha\big|
\lesssim \int_{n^{-1}}^{n^{-p}}\frac{1}{\alpha\log^3(1/\alpha)}d\alpha
=\frac{1-p^2}{2p^2}\frac{1}{\log^2 n} .
\end{equation}
For the other part, an integration by parts shows that
\begin{equation}  \label{eq_app_C_3}
\begin{aligned}
\int_{n^{-p}}^{\alpha_0/2}(\partial_{\alpha}r(\alpha))\sin (n\alpha)d\alpha
&=-\frac{1}{n}(\partial_{\alpha}r(\alpha))\cos (n\alpha)\big|_{\alpha=\frac{1}{n^p}}^{\alpha_0/2}+\frac{1}{n}\int_{n^{-p}}^{\alpha_0/2}(\partial_{\alpha}^2 r(\alpha))\cos (n\alpha)d\alpha \\
&\overset{(i)}{\lesssim}\frac{1}{n}\frac{n^p}{\log^3 n}+\frac{1}{n}\int_{n^{-p}}^{\alpha_0/2}\frac{1}{\alpha^2 \log^3(1/\alpha)}d\alpha \\
&\overset{(ii)}{\lesssim} \frac{1}{n^{1-p}\log^3 n} +\frac{1}{n^{1-p}}\int_{n^{-p}}^{\alpha_0/2}\frac{1}{\alpha \log^3(1/\alpha)}d\alpha \\
&\lesssim \frac{1}{n^{1-p}},
\end{aligned}
\end{equation}
where the inequalities (i) and (ii) follow from Theorem \ref{thm_cap_def_momentum_estimate}. The proof of Proposition \ref{prop_off_diag_decay_proof_2} is complete by \eqref{eq_app_C_1}-\eqref{eq_app_C_3}.
\end{proof}

\section{Proof of Lemma \ref{lem_inverse_closed}}\label{sec:app_D}

We only need to prove \eqref{eq_inverse_closed_1} as the estimate \eqref{eq_inverse_closed_2} follows directly from the Riesz projection formula. The proof of \eqref{eq_inverse_closed_1} is based on the generalised Baskakov inverse-closedness theorem (see \cite[Theorem 4.2]{grochenig2010convergence}) stated as follows: let $g:\mathbb{Z}\to \mathbb{R}^{+}$ be a non-negative function satisfying 
\begin{itemize}
    \item[(i)] (normalised and even) $g(0)=1$ and $g(-n)=n$;
    \item[(ii)] (sub-multiplicative) $g(n+m)\leq g(n)g(m)$ for any $n,m\in\mathbb{Z}$;
    \item[(iii)] (sub-convolutive) $g^{-1}\ast g^{-1}\leq Cg^{-1}\in \ell^1(\mathbb{Z})$ for some $C>0$, 
\end{itemize}
and let $\mathcal{E}_g\subset \mathcal{B}(\ell^2(\mathbb{Z}))$ be the collection of operators
\begin{equation*}
\mathcal{E}_g:=\big\{ A\in \mathcal{B}(\ell^2(\mathbb{Z})):\, \sup_{n,m\in \mathbb{Z}}|A(n,m)|g(n-m)<\infty \big\}.
\end{equation*}
Then the set $\mathcal{E}_g$ is a Banach algebra and is inverse-closed in the sense that
\begin{equation*}
A\in \mathcal{E}_g,\, A^{-1}\in \mathcal{B}(\ell^2(\mathbb{Z}))\Longrightarrow A^{-1}\in \mathcal{E}_g.
\end{equation*}
For our application, we take
\begin{equation*}
g(n):=(1+|n|)\log^2(e+|n|),
\end{equation*}
which obviously satisfies condition (i). Hence, after verifying (ii)-(iii), and noticing that $\tilde{\mathcal{C}}\in \mathcal{E}_{g}$ thanks to Theorem \ref{thm_off_diag_decay}, we conclude the proof of \eqref{eq_inverse_closed_1}. In the sequel, we check conditions (ii) and (iii) by elementary calculations, respectively.

{\color{blue}Proof of (ii):} First, it is direct to see that
\begin{equation} \label{eq_inverse_closed_proof_1}
1+|n+m|\leq (1+|n|)(1+|m|).
\end{equation}
On the other hand, we claim
\begin{equation} \label{eq_inverse_closed_proof_2}
\log(e+|n+m|)\leq \log(e+|n|)\log(e+|m|) .
\end{equation}
Then (ii) follows directly from \eqref{eq_inverse_closed_proof_1} and \eqref{eq_inverse_closed_proof_2}. To show \eqref{eq_inverse_closed_proof_2}, we note that
\begin{equation*}
e+|n+m|\leq e+|n|+|m|\leq \frac{(e+|n|)(e+|m|)}{e}.
\end{equation*}
Hence,
\begin{equation*}
\log(e+|n+m|)\leq \log(e+|n|)+\log(e+|m|)-1 .
\end{equation*}
Then \eqref{eq_inverse_closed_proof_2} is proved by noticing that
\begin{equation*}
\log(e+|n|)+\log(e+|m|)\leq 1+\log(e+|n|)\log(e+|m|)
\end{equation*}
since $\log(e+|n|),\log(e+|m|)\geq 1$.

{\color{blue}Proof of (iii):} The estimate $g^{-1}\in \ell^1(\mathbb{Z})$ is clear. To estimate the convolution $g^{-1}\ast g^{-1}$, we note that
\begin{equation} \label{eq_inverse_closed_proof_3}
\begin{aligned}
(g^{-1}\ast g^{-1})(n)&=\sum_{m\in\mathbb{Z} }g^{-1}(n-m)g^{-1}(m) \\
&=\sum_{m:\, |m-n|\leq \frac{|n|}{2}}g^{-1}(n-m)g^{-1}(m)+ \sum_{m:\, |m-n|> \frac{|n|}{2}}g^{-1}(n-m)g^{-1}(m) .
\end{aligned}
\end{equation}
For the first sum, observe that
\begin{equation*} 
g^{-1}(m)\leq g^{-1}(\frac{n}{2}) \lesssim g^{-1}(n)
\end{equation*}
since $|m|\geq |n|-\frac{|n|}{2}=\frac{|n|}{2}$, $g^{-1}$ is decreasing, and $\sup_{n\in\mathbb{Z}}g(\frac{n}{2})/g(n)<\infty$. Thus,
\begin{equation}  \label{eq_inverse_closed_proof_4}
\sum_{m:\, |m-n|\leq \frac{|n|}{2}}g^{-1}(n-m)g^{-1}(m)
\lesssim g^{-1}(n)\sum_{m:\, |m-n|\leq \frac{|n|}{2}}g^{-1}(n-m)
\leq \|g^{-1}\|_{\ell^1(\mathbb{Z})}g^{-1}(n) .
\end{equation}
Similarly, we have
\begin{equation}  \label{eq_inverse_closed_proof_5}
\sum_{m:\, |m-n|> \frac{|n|}{2}}g^{-1}(n-m)g^{-1}(m)
\leq g^{-1}(\frac{n}{2})\sum_{m:\, |m-n|> \frac{|n|}{2}}g^{-1}(m)
\lesssim \|g^{-1}\|_{\ell^1(\mathbb{Z})}g^{-1}(n) .
\end{equation}
Hence, the proof of (iii) is complete by \eqref{eq_inverse_closed_proof_3}-\eqref{eq_inverse_closed_proof_5}.

\printbibliography

\end{document}